\documentclass[10pt]{article}
\usepackage{dcolumn}
\usepackage{bm}
\usepackage{hyperref}
\usepackage{verbatim}       

\usepackage{color}

\usepackage{amsthm}
\usepackage{amssymb}

\usepackage{amstext}
\usepackage{psfrag}
\usepackage{amsmath}
\usepackage{amsfonts}
\usepackage{units}
\usepackage{mathrsfs}

\newcommand{\p}{\partial}

\newcommand{\dd}{{\rm d}}

\newtheorem{theorem}{Theorem}[section]
\newtheorem{corollary}[theorem]{Corollary}
\newtheorem{lemma}[theorem]{Lemma}
\newtheorem{proposition}[theorem]{Proposition}
\theoremstyle{definition}
\newtheorem{definition}[theorem]{Definition}
\theoremstyle{remark}
\newtheorem{remark}[theorem]{Remark}
\newtheorem{example}[theorem]{Example}

\begin{document}


\title{Totally geodesic null hypersurfaces and  constancy of surface gravity in  Finsler spacetimes}

\author{E. Minguzzi\footnote{Dipartimento di Matematica, Universit\`a degli Studi di Pisa,  Via
B. Pontecorvo 5,  I-56127 Pisa, Italy. E-mail:
ettore.minguzzi@unipi.it, ORCID:0000-0002-8293-3802}}

\date{}

\maketitle

\begin{abstract}
\noindent
We define and study totally geodesic null hypersurfaces in Finsler spacetimes. We prove that the null convergence condition and a certain mild gravitational equation $\chi_\alpha=0$, imply the vanishing of the restriction of the Ricci 1-form on the hypersurface. This makes it possible to extend to the Lorentz–Finsler setting essentially all notable results for compact totally geodesic null hypersurfaces that hold in the Lorentzian case. In fact, we introduce a trick that reduces the Lorentz–Finsler analysis to a purely Lorentzian study. As a result, it follows that, under the stated conditions, connected compact totally geodesic null hypersurfaces admit constant surface gravity. Further topological classification results are also obtained. The possibility of deriving these results from the dominant energy condition without using $\chi_\alpha=0$ is also explored, this strategy selecting some specific possibilities.
Since surface gravity can be interpreted as temperature in some contexts, and its constancy expresses the zeroth law of thermodynamics, the present work provides a compelling physical argument in favour of some special Finslerian gravitational equations.
\end{abstract}


\setcounter{secnumdepth}{2}
\setcounter{tocdepth}{2}
\tableofcontents

\section{Introduction}

This work is devoted to the study of totally geodesic null hypersurfaces in Finsler spacetimes, with a particular focus on compact ones. Our main aim is to establish the validity, within this framework, of several results already proven for null hypersurfaces in Lorentzian geometry. These results, especially those concerning the constancy of surface gravity (Corollary \ref{ocvt} below), carry deep physical significance; surface gravity has, in many contexts, particularly in black hole physics, been shown to admit a physical interpretation as temperature.

Some elements of causality theory will be required. Much of causality theory can be transferred to the Finsler spacetime setting \cite{minguzzi13d, javaloyes13, aazami14, minguzzi15}, the key step being the proof of the existence of convex neighborhoods and the local achronality property of null geodesics \cite{minguzzi13d}. Causality theory for Finsler spacetimes can also be derived from that for proper cone structures \cite{minguzzi17}. In this work, we will not need many elements of this theory, though some previous exposure of the reader on this topic may be beneficial. Our focus will be restricted to null hypersurfaces. There is limited material in this direction, aside from the relevant section in \cite{minguzzi15}, much of which will be reviewed below, and the study carried out in the context of cone structures in \cite{javaloyes22b}.

This work is motivated by physics, where there has been sustained and, in recent years, growing interest in possible Finslerian modifications of Einstein's theory of gravity \cite{pfeifer12,li14,minguzzi14c,fuster16,fuster18,hohmann19,caponio20}. Although several mathematical results can be generalized from Lorentz to Lorentz–Finsler geometry, there remains significant uncertainty regarding the identification of the gravitational field equations in the Finslerian setting. Of course, one should not merely attempt to guess the correct equations through algebraic analogies. Instead,  physical motivations are needed to support one set of equations over another.

In the absence of empirical evidence or predictions that could validate specific models, particular attention should be paid to mathematical constructions that admit clear physical interpretations. Our analysis will be of this type, establishing the constancy of surface gravity based solely on a few gravitational equations.

We recall that, physically, the constancy of surface gravity is related to the zeroth law of thermodynamics. This is well established in the context of Black Hole physics (in general relativity, hence standard Lorentzian geometry).
In fact, technically, the constancy of surface gravity is a kind of mild property of stationarity, implied by the Killing nature of the horizon \cite{chrusciel20} (whenever there is such symmetry). Naturally, if an asymptotically flat spacetime admits a  timelike Killing field  on the outer region, the evolution with emission of gravitational radiation must have stopped and so the horizon is expected to have reached constant temperature. This temperature  is then naturally identified with surface gravity (times $2\pi$), as it is also confirmed by the study of the Hawking radiation.
The interpretation extends to compact null hypersurfaces.
In fact, the study of compact null hypersurfaces has become one main tool to study stationary black holes (even in non-empty spacetimes) because, via a known  compactification trick, the stationary black hole horizon is reduced to a compact totally geodesic null hypersuface \cite{friedrich99,minguzzi24}.
The constant surface gravity of compact totally geodesic null hypersurfaces is thus indeed related to the zeroth law.

Returning to the Finslerian generalization, we shall follow two approaches, depending on the physical origin of the constancy of surface gravity. If it is derived from the null convergence condition ($Ric(v)\ge 0$ for $v$ future-directed lightlike) then the equation \(\chi_\alpha = 0\) is selected (even in presence of matter). If it is derived from the dominant energy condition then a family of possibility is selected (Thm.\ \ref{pcfq}). Two notable options single out. An equation with interesting conservation properties (\ref{cmxx}), already introduced in \cite{minguzzi14c}, and a novel  equation (\ref{cmbt})  which in vacuum is equivalent to $\chi_\alpha=0$ and $Ric=0$. The equation $Ric=0$ already enjoys broad consensus as it relates to the attractiveness of gravity via the Raychaudhuri equation \cite{rutz93}.


It is worth noting that other equations, such as \(H_{\alpha \beta} := \frac{1}{2} \left( \frac{\partial}{\partial v^\beta} \chi_\alpha + \frac{\partial}{\partial v^\alpha} \chi_\beta \right) = 0\), have useful algebraic consequences \cite{mo09,li15}, but only \(\chi_\alpha = 0\) is physically motivated by our argument. Interestingly, the equations $\chi_\alpha=0$ and $Ric=0$ have been shown to admit solutions in certain Finslerian generalizations of the Schwarzschild metric by Marcal and Shen \cite{marcal23}. These authors were led to consider $\chi_\alpha=0$ jointly with other possibilities due to its convenient algebraic implications.  As recalled below (Remark \ref{llrt}), \(\chi_\alpha = 0\) by implying $H_{\alpha \beta}=0$ reduces the indeterminacy of potential gravitational equations by inducing the coincidence of certain Ricci tensors \cite{mo09,sevim23}. Also it establishes the symmetry of some Ricci tensors.
Our argument provides  support from physical considerations and shows thus that it is indeed more than worthwhile to look at the consequences of this equation.

This work is structured as follows. In  Section \ref{pres} we recall some basic results of Lorentz-Finsler geometry fixing terminology and notation. We also introduce the Ricci 1-form, which does not seem to be much used in the literature but will prove of importance in this work. In Section \ref{ral} we define surface gravity for regular null hypersurfaces and introduce the shape operator $b$, instrumental for the subsequent section, and its Riccati equation.  In Section \ref{mvut} we provide several equivalent characterizations for totally geodesic null hypersurfaces and establish the equivalence between the vanishing of the restriction of the Ricci 1-form on a totally geodesic null hypersurface, $Ric_n\vert_{TH}=0$, and the equation $i_n\dd \omega=0$, see Eq.\ (\ref{noq}).
Then we prove that the null energy condition and $\chi_\alpha=0$ implies
$Ric_n\vert_{TH}=0$ over totally geodesic hypersurfaces, Thm.\ \ref{cmqq}.
The Section ends with the elaboration of a trick which allows us to embed the hypersurface into a Lorentzian manifold while preserving essentially all notions of geometrical and physical relevance including surface gravity and geodesic (in)completeness. This allows us to import result  to Lorentz-Finsler geometry from Lorentzian geometry. Section \ref{trasl} is devoted to the translation of the most relevant results, at least for this work perspective, though many others, in principle, could generalize. Finally, Section \ref{kxrt} reconsiders the possibility of deriving the key equation $Ric_n\vert_{TH}=0$ not from the null energy condition but from the dominant energy condition. This leads to the suggestion of a family of possible gravitational equations, one of them being compatible with a form of pointwise energy-momentum conservation.

\section{Preliminaries} \label{pres}

The purpose of this section is primarily to establish notation and terminology, but it may also serve as a brief introduction to Finsler geometry. In this sense we adapt  and streamline the presentation we gave in  \cite{minguzzi15}. As various schools have developed different conventions, we include key coordinate expressions to allow the reader to readily translate between notations. The objects introduced below can, of course, be formulated in a coordinate‑independent manner; for such treatments and for other introductions to Finsler geometry we refer to \cite{antonelli93,bao00,shen01,mo06,dahl06,szilasi14,minguzzi14c,ohta21}.

Let \( M \) be a paracompact, Hausdorff, connected, \( n+1 \)-dimensional manifold.\footnote{Subsequently the symbol “\( n \)” will also be used to denote a lightlike vector field; we trust this will not cause confusion.} Let \( \{x^\mu\} \) denote a local chart on \( M \) and \( \{ x^\mu,v^\nu\} \) the induced chart on \( TM \).
A Finsler Lagrangian is a function on the slit tangent bundle \( \mathscr{L}\colon TM\backslash 0 \to \mathbb{R} \) that is positively homogeneous of degree two in the velocities: \( \mathscr{L}(x,sv)=s^2 \mathscr{L}(x,v) \) for every \( s>0 \). The metric is defined as the Hessian of \( \mathscr{L} \) with respect to the velocities:
\begin{equation}
g_{\mu \nu}(x,v)= \frac{\partial^2 \mathscr{L}}{\partial v^\mu \partial v^\nu},
\end{equation}
and in index-free notation we shall also write \( g_v \) to emphasize the dependence on the velocity. This Finsler metric gives a map \( g\colon TM\backslash 0 \to  T^*M \otimes T^*M \).

Lorentz-Finsler geometry arises when \( g_v \) is Lorentzian, i.e.\ of signature $(-,+,$ $\dots,+)$ (in the positive-definite Finsler case one often works with a function \( F \) satisfying \( \mathscr{L}=F^2/2 \)).
The above definition of a Lorentz-Finsler manifold is due to John Beem \cite{beem70}. It is particularly convenient to work with a  Lagrangian defined on the whole slit bundle \( TM\backslash 0 \) because the theory of Finsler connections has traditionally been developed on this space.

The Lorentz-Finsler manifold is a Finsler spacetime if one of the connected components of the locus \( \mathscr{L}< 0 \) has been selected and termed {\em future timelike cone}. Its closure is the {\em future causal cone} while its boundary is the {\em future light cone}.\footnote{Actually we could drop completely the adjective {\em future} as we never really defined a past causal cone (and in this logic we could drop {\em future-directed} in all the instances that follow).  This would only be  meaningful  for reversible Finsler Lagrangians. Still we shall keep the adjective for analogy with general relativity.}

Although causality theory depends only on the restriction of the Lagrangian to the causal cone \( \mathscr{L}\le 0 \), such a restriction is not necessary; it does not add generality and it spoils the existence of convex neighborhoods. If one is given a Lagrangian defined only on the future causal cone, it is convenient first to extend it to the entire slit tangent bundle. This is possible whenever the Lagrangian and the light cone are sufficiently regular (see \cite{minguzzi14h} for a full discussion).



 For simplicity, we shall work with a smooth Lagrangian on \( TM\backslash 0 \). We could work with lower regularity. The consequence would simply be that the vector field $n$ accomplishing constant surface gravity would not be smooth anymore. We will not consider the problem of establishing a precise connection in the non-smooth case.

 In many examples of Lorentz-Finsler spaces the light cones are non-smooth. The results of this work  generalize to this case provided the lightlike semi-tangents to the null hypersurfaces do not belong to the non-smooth  directions of the light cone.

Let us recall some elements of pseudo-Finsler connections (the reader is referred to \cite{minguzzi14c}). The Finsler Lagrangian determines the geodesics as stationary points of the functional \( \int \mathscr{L}(x,\dot x)\,dt \). The Lagrange equations are of second order, and a good starting point for introducing Finsler connections is the notion of a {\em spray}.

A spray on \( M \) can be locally characterized by a second-order differential equation
\[
\ddot x^\alpha+2G^\alpha(x,\dot x)=0,
\]
where \( G^\alpha \) is positively homogeneous of degree two: \( G^\alpha(x,s v)=s^2 G^\alpha(x,v) \) for every \( s>0 \). Let \( E=TM\backslash 0 \) and \( \pi_M\colon E\to M \) be the usual projection. This projection determines a vertical space \( V_e E \) at each point \( e\in E \). A {\em non-linear connection} is a splitting of the tangent space \( TE=VE\oplus HE \) into vertical and horizontal bundles (for an introduction to non-linear connections see \cite{modugno91,michor08}). A basis for the horizontal space is
\[
\Big\{\frac{\delta \ }{\delta x^\mu}\Big\}, \qquad \frac{\delta}{\delta x^\mu}=\frac{\partial}{\partial x^\mu}-N^\nu_\mu(x,v) \frac{\partial}{\partial v^\nu},
\]
where the coefficients \( N^\nu_\mu(x,v) \) define the non-linear connection and transform appropriately under coordinate changes. The map $E\times TM \to TE$, $X^\mu\p/\p x^\mu \mapsto X^\mu\delta/\delta x^\mu$ is called {\em horizontal lift} (the first $E$ factor tells us where to evaluate the non-linear connection coefficient), similarly the map  $E\times TM \to TE$, $X^\mu\p/\p x^\mu \mapsto X^\mu\p/\p v^\mu$ is called {\em vertical lift}.

The curvature of the non-linear connection measures the non‑holonomicity of the horizontal distribution:
\begin{equation} \label{nsp}
\left[ \frac{\delta}{\delta x^\alpha}, \frac{\delta}{\delta x^\beta}\right]= -R^\mu{}_{\alpha \beta} \frac{\partial}{\partial v^\mu}, \qquad R^\mu{}_{\alpha \beta}=\frac{\delta N^\mu_\beta}{\delta x^\alpha}-\frac{\delta N^\mu_\alpha}{\delta x^\beta}.
\end{equation}
The non-linear curvature is thus a map $R^N: E \to TM\otimes  \Lambda^2 T^*M$, where $N$ here stands for {\em non-linear}. We shall also write it $R^N_v$ to stress its dependence on the support vector $v$. We shall have to deal with similar related objects soon. Note that when considering the components we drop $N$ as we can recognize which type of curvature we are considering from the number of indices.

We can define a {\em covariant derivative} for the non-linear connection as follows. Given a section \( s\colon U\to E \), \( U\subset M \),
\[
 D_{\xi} s^\alpha=\Big(\frac{\partial s^\alpha}{\partial x^\mu} +N^\alpha_\mu\big(x, s(x)\big) \Big)\xi^\mu .
\]
The {\em flipped or dynamical derivative}\footnote{Sometimes  called itself covariant derivative \cite{shen01}.} is instead
\[
{\tilde{D}_{\xi}} s^\alpha=\frac{\partial s^\alpha}{\partial x^\mu} \, \xi^\mu+N^\alpha_\mu(x, \xi) s^\mu .
\]
Although well defined, it is {\em not} a covariant derivative in the standard sense because it is non-linear in the derivative vector \( \xi \).
Observe that if \( X,Y\colon M\to TM \) are vector fields then
\begin{equation} \label{kkq}
{\tilde D_{X}} Y- D_Y X=[X,Y].
\end{equation}
A geodesic is a curve \( x(t) \) satisfying \( D_{\dot x} \dot{x}=0 \), or equivalently \( {\tilde D_{\dot x}} \dot x=0 \).
Unless otherwise stated, we only consider the non-linear connection determined by a spray via \( N^\mu_\alpha=G^\mu_\alpha:=\partial G^\mu/\partial v^\alpha \). The geodesics of this non-linear connection coincide with the integral curves of the spray. Moreover, the geodesics of the spray are the stationary points of the action functional \( \int\! \mathscr{L}\,dt \); consequently,
\begin{align}
2 {G}^\alpha(x,v)&= g^{\alpha\delta}\Big( \frac{\partial^2\mathscr{L} }{\partial x^\gamma \partial v^\delta} \,v^\gamma -\frac{\partial\mathscr{L} }{\partial x^\delta } \Big) \label{axo}\\
&=\frac{1}{2}\, g^{\alpha \delta} \Big( \frac{\partial}{\partial x^\beta} \,g_{\delta \gamma}+\frac{\partial}{\partial x^\gamma} \, g_{\delta \beta}-\frac{\partial}{\partial x^\delta}\, g_{\beta \gamma}\Big) v^\beta v^\gamma . \label{axu}
\end{align}


In Finsler geometry one can also define a {\em linear Finsler connection} \( \nabla \), i.e. a splitting of the vertical bundle \( \pi_E\colon VE\to E \), \( E=TM\backslash 0 \) (an equivalent pullback bundle approach could be used where $VE$ is replaced by $\pi^*TM$, where $\pi: E\to M$). The Berwald, Cartan, Chern–Rund and Hashiguchi connections are of this type; they are often called the {\em notable} Finsler connections. In this work we are going to completely avoid these connections in statements  because they will actually be needed only in a few proofs, that is, we follow the same philosophy of \cite{minguzzi15}. Indeed, the mathematics suggests that,  physically, the non-linear connection might be more important than the Finsler linear connection.



The notable linear Finsler connections are all compatible with the same non-linear connection. Indeed, the covariant derivative \( X\to \nabla_X L \) of the Liouville vector field \( L\colon E\to VE \), \( L=v^\alpha\partial/\partial v^\alpha \), vanishes precisely on an \( n+1 \)-dimensional distribution that determines a non-linear connection. This distribution is the same for all these connections and is obtained from the spray as described above.

Each Finsler connection \( \nabla \) determines two covariant derivatives, \( \nabla^H \) and \( \nabla^V \), obtained from \( \nabla_{\check X} \) when \( \check X \) is, respectively, the horizontal or vertical lift of a vector \( X\in TM \). In particular, \( \nabla^H \) is determined by local connection coefficients \( H^\alpha_{\mu \nu}(x,v):=e^\alpha(\nabla^H_{e_\nu} e_\mu) \), $e_\mu=\p/\p x^\mu$, which are related to those of the non-linear connection by (regularity) \( N^\alpha_\nu(x,v)=H^\alpha_{\mu \nu}(x,v)v^\mu \). Examples are the Berwald connection
\[
H^\alpha_{\mu \nu}:=G^\alpha_{\mu \nu}:=\frac{\partial}{\partial v^\nu}\,G^\alpha_\mu,
\]
or the Chern–Rund (or Cartan) connection, for both of which \( \nabla^H g=0 \) and hence
\begin{equation} \label{sog}
H^\alpha_{\beta \gamma}:=\Gamma_{\beta \gamma}^{\alpha}:=\frac{1}{2} g^{\alpha \sigma} \Big( \frac{\delta}{\delta x^\beta} \,g_{\sigma \gamma}+\frac{\delta}{\delta x^\gamma} \, g_{\sigma \beta}-\frac{\delta}{\delta x^\sigma}\, g_{\beta \gamma}\Big).
\end{equation}
The difference
\begin{equation} \label{lan}
L_{\beta \gamma}^{\alpha}=G_{\beta \gamma}^{\alpha}-\Gamma_{\beta \gamma}^{\alpha}
\end{equation}
is the {\em Landsberg tensor}. The tensor \( L_{\alpha \beta \gamma}(x,v)=g_{\alpha \mu}(x,v) L^{\mu}_{\beta \gamma}(x,v) \) is symmetric and satisfies \( L_{\alpha \beta \gamma}(x,v) v^\gamma=0 \).  The vertical covariant derivative $\nabla^V$ is determined by coefficients $V^\alpha_{\mu \nu}$. They vanish for the Berwald and Chern-Rund connections which are the covariant derivatives that will interest us.

When using \( \nabla^H \) one must specify at which point of \( TM \) the expression is evaluated; this point is often called the {\em support vector}. Observe that if \( X\colon E\to VE= E\times_M TM \) is a Finsler field whose components do not depend on \( v \), then at the support vector \( u \) we have \( \nabla^H_u X=\tilde D_u X \); while at the support vector \( X \), \( \nabla^H_u X=D_u X \). For this reason, whenever possible we employ \( D \) or \( \tilde D \) directly instead of \( \nabla^H \).

For instance, in \cite{minguzzi15} we noted that
a property of the flipped derivative, which follows from the horizontal compatibility of the Chern–Rund or Cartan connections with the metric, is
\begin{equation}
\tilde{D}_u g_u(X,Y)= g_u(\tilde{D}_u X,Y)+g_u(X,\tilde{D}_u Y),
\end{equation}
for every pregeodesic vector field \( u\in T_p M\backslash 0 \) and fields \( X,Y\colon M\to  TM \). The linearity of the map \( X\mapsto \tilde{D}_u X \) implies that \( \tilde{D}_u \) extends to one-forms and tensors in the usual way, so the previous equation is simply:
\begin{equation} \label{aao}
\tilde{D}_u g_u=0.
\end{equation}
Together with Theorem \ref{tiu}, the next proposition proved in \cite{minguzzi15} allows one to work directly with \( D \) or \( \tilde D \), reducing the need to refer to Finsler connections.
\begin{proposition} \label{cmbx}
For every \( u\colon U \to TM\backslash 0 \), \( U\subset M \), and \( X,Y\colon U\to  TM \),
\begin{align}
\frac{1}{2} \,\partial_X g_u(u,u)=g_u(u,D_Xu)&=\partial_u g_u(u,X)-g_u(\tilde D_u u,X)+g_u(u,[X,u]), \label{doi}\\
\partial_X g_u(u,Y)-\partial_Yg_u(u,X)&=g_u(Y,D_X u)-g_u(X,D_Yu)+g_u(u,[X,Y]). \label{doj}
\end{align}
\end{proposition}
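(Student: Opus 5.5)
The plan is a direct coordinate computation, using only $\mathscr{L}$, its homogeneity and the definitions of $D$, $\tilde D$ and $N^\alpha_\mu=G^\alpha_\mu$.

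\textbf{Preliminary identities.} By Euler's theorem, homogeneity of degree two gives $g_u(u,u)=2\mathscr{L}(x,u)$ and $g_{\mu\nu}(x,u)u^\nu=\partial\mathscr{L}/\partial v^\mu(x,u)$. Moreover $\frac{\partial g_{\mu\nu}}{\partial v^\sigma}u^\nu=0$, because the third derivatives of $\mathscr{L}$ are totally symmetric and homogeneous of degree $-1$. A further key identity, obtained by differentiating (\ref{axo}) with respect to $v$ and using homogeneity, is
\[
g_{\alpha\delta}\,G^\delta_\mu(x,v)\,v^\alpha=\frac{\partial \mathscr{L}}{\partial x^\mu}(x,v).
\]
This is equivalent to $\frac{\delta}{\delta x^\mu}\mathscr{L}=0$, i.e.\ the non-linear connection preserves $\mathscr{L}$. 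I would derive it from $2G^\alpha=G^\alpha_\mu v^\mu$ together with the Euler--Lagrange form of (\ref{axo}), after applying $\partial/\partial v^\mu$.

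\textbf{Proof of (\ref{doi}).} Differentiating $\mathscr{L}(x,u(x))$ along $X$ gives
\[
\tfrac12\partial_X g_u(u,u)=X^\mu\Big(\frac{\partial \mathscr{L}}{\partial x^\mu}+\frac{\partial\mathscr{L}}{\partial v^\nu}\,\partial_\mu u^\nu\Big)=X^\mu\, g_{\alpha\nu}u^\alpha\big(G^\nu_\mu(x,u)+\partial_\mu u^\nu\big)=g_u(u,D_Xu),
\]
where the key identity was used. For the second equality, I would rewrite $D_Xu$ via (\ref{kkq}): $D_Xu=\tilde D_uX-[u,X]=\tilde D_uX+[X,u]$. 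It then remains to show $g_u(u,\tilde D_uX)=\partial_u g_u(u,X)-g_u(\tilde D_uu,X)$, which is a metric-compatibility statement. Either I invoke (\ref{aao}) in the form $\tilde D_u g_u=0$, which does not require $u$ to be pregeodesic when only the contraction with $u$ is involved, or I verify it directly in coordinates:
\[
\partial_u\big(g_{\mu\nu}(x,u)u^\mu X^\nu\big)=\partial_u\Big(\frac{\partial\mathscr{L}}{\partial v^\nu}(x,u)\Big)X^\nu+\frac{\partial\mathscr{L}}{\partial v^\nu}\,\partial_uX^\nu .
\]
I then expand $\partial_u(\partial\mathscr{L}/\partial v^\nu)=\frac{\partial^2\mathscr{L}}{\partial x^\gamma\partial v^\nu}u^\gamma+g_{\nu\sigma}\partial_uu^\sigma$ and use (\ref{axo}) with $v=u$, differentiated once in $v$ and contracted appropriately, to produce exactly $g(\tilde D_uu,X)+g(u,\tilde D_uX)$.

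\textbf{Proof of (\ref{doj}).} I write $\partial_X g_u(u,Y)=\partial_X\big(\frac{\partial\mathscr{L}}{\partial v^\nu}(x,u)Y^\nu\big)$ and antisymmetrize in $X,Y$. The terms $\frac{\partial\mathscr{L}}{\partial v^\nu}(X(Y^\nu)-Y(X^\nu))$ give $g_u(u,[X,Y])$. The terms involving derivatives of $u$ give $g_{\nu\sigma}Y^\nu\partial_Xu^\sigma-(X\leftrightarrow Y)$. What remains is $\frac{\partial^2\mathscr{L}}{\partial x^\mu\partial v^\nu}(X^\mu Y^\nu-Y^\mu X^\nu)$, and this must equal $g_{\nu\sigma}\big(Y^\nu G^\sigma_\mu X^\mu-X^\nu G^\sigma_\mu Y^\mu\big)$. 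So the core claim is that $g_{\nu\sigma}G^\sigma_\mu-\frac{\partial^2\mathscr{L}}{\partial x^\mu\partial v^\nu}$ is symmetric in $\mu,\nu$. This follows by differentiating the key identity $g_{\alpha\delta}G^\delta_\mu v^\alpha=\partial_{x^\mu}\mathscr{L}$ with respect to $v^\nu$, using $\partial_{v^\nu}g_{\alpha\delta}\,v^\alpha=0$ and $G^\delta_{\mu\nu}$ symmetric:
\[
g_{\nu\delta}G^\delta_\mu+g_{\alpha\delta}v^\alpha G^\delta_{\mu\nu}=\frac{\partial^2\mathscr{L}}{\partial x^\mu\partial v^\nu},
\]
and the second term on the left is symmetric in $\mu,\nu$.

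\textbf{Main obstacle.} The main step is establishing the key identity $g_{\alpha\delta}v^\alpha G^\delta_\mu=\partial_{x^\mu}\mathscr{L}$ cleanly from (\ref{axo}). My first attempt is to differentiate $2g_{\alpha\delta}G^\delta=\frac{\partial^2\mathscr{L}}{\partial x^\gamma\partial v^\alpha}v^\gamma-\frac{\partial\mathscr{L}}{\partial x^\alpha}$ in $v^\mu$, contract with $v^\alpha$ (the Cartan tensor term drops out), and use Euler's theorem on the degree-two functions $\mathscr{L}$ and $\partial_x\mathscr{L}$.
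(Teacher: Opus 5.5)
Your argument is correct. It takes a genuinely different route from the one the paper relies on.

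The paper does not reprove the proposition; it imports it from \cite{minguzzi15}. The argument consistent with the machinery the paper sets up goes through the pulled-back Chern--Rund connection $\overset{u}{\nabla}$, with coefficients $\Gamma^\gamma_{\alpha\beta}(x,u(x))$. This also fits the paper's remark that the proposition lets one dispense with Finsler connections afterwards. Since $\nabla^H g=0$, differentiating $g_{\alpha\beta}(x,u(x))$ gives
$\p_X g_u(Y,Z)=g_u(\overset{u}{\nabla}_XY,Z)+g_u(Y,\overset{u}{\nabla}_XZ)+2C_u(D_Xu,Y,Z)$.
Setting $Y=u$ kills the Cartan term. Then $\overset{u}{\nabla}_Xu=D_Xu$ and $\overset{u}{\nabla}_uX=\tilde D_uX$ give (\ref{doi}), and torsion-freeness (symmetry of $\Gamma$) gives (\ref{doj}).

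Your two computational facts are exactly the coordinate versions of these properties:
\begin{itemize}
\item $g_{\alpha\delta}v^\alpha G^\delta_\mu=\p\mathscr{L}/\p x^\mu$ is $\delta\mathscr{L}/\delta x^\mu=0$, the horizontal constancy of $\mathscr{L}$.
\item Your symmetric remainder $v_\delta G^\delta_{\mu\nu}$ equals $v_\delta\Gamma^\delta_{\mu\nu}$, because the Landsberg tensor is annihilated by $v$. This is the torsion-freeness.
\end{itemize}

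Your route buys self-containment: it uses only Euler's theorem and (\ref{axo}). The step you flagged as the main obstacle is in fact two lines. Differentiate in $v^\mu$ and contract with $v^\alpha$: the Cartan term drops, and $v^\gamma\p_{x^\gamma}\p_{v^\mu}\mathscr{L}$ cancels against $-v^\alpha\p_{x^\alpha}\p_{v^\mu}\mathscr{L}$, leaving $2\p_{x^\mu}\mathscr{L}$. Your route also makes it transparent why no pregeodesic hypothesis is needed: the only Cartan term that ever appears is contracted with $u$.

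The connection route is shorter once the machinery is available, and it yields more. It gives the full formula for $\p_Xg_u(Y,Z)$, hence $\tilde D_u g_u=2C_u(D_uu,\cdot,\cdot)$ for arbitrary $u$. That formula contains (\ref{aao}) and feeds directly into the later pullback-metric characterizations.

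One caution follows from that formula. The paper states (\ref{aao}) only for pregeodesic $u$, so you should not simply quote it for the second equality in (\ref{doi}). Either invoke the contracted version with its justification, $C_u(u,\cdot,\cdot)=0$, or rely on your direct coordinate check. The coordinate check does close, using (\ref{axo}) at $v=u$, $2G^\alpha=G^\alpha_\mu u^\mu$, and your key identity.
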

Equation  (\ref{doi}) implies the well known fact that the non-linear parallel transport preserves the Finsler length of vectors.



The horizontal-horizontal curvature $R^{HH}$ of any Finsler connection can be defined  via the identity \cite{minguzzi14c}
\begin{align} \label{comc}
R^{HH}(X,Y) Z&=[\nabla^H_X \nabla^H_YZ-\nabla^H_Y \nabla^H_XZ-\nabla^H_{[X,Y]} Z]+\nabla^V_{R(X,Y)} Z,
\end{align}
 where $X,Y: M \to TM$, $Z: E \to E$. If $Z$ is projectable (its components do not depend on $v$) and the vertical connection coefficients vanish (e.g.\ for the Berwald or Chern-Rund connections) then the last term vanishes.

This $HH$-curvature is related to the curvature of the non-linear connection as follows (see e.g.\ \cite[Eq.\ (67)]{minguzzi14c})
\begin{equation} \label{cjmc}
R^{HH} {}^\alpha_{\ \beta \mu \nu}(x,v) v^\beta=R^\alpha{}_{\mu \nu}(x,v) ,
\end{equation}
equivalently, at support vector $v$,  $R^N_v(X,Y)=R^{HH}(X,Y) v$. The non-linear curvature satisfies for every $X,Y\in TM$
\begin{equation} \label{llop}
g_v(v,R^N_v(X,Y))=0,
\end{equation}
which follows from Eq.\ (\ref{cjmc}) if for Finsler connection one chooses the Cartan's one and notes the symmetry of $R^{HH}_{\textrm{Car}}{}_{\alpha \beta \mu\nu}$ in the first two indices  \cite[Eq.\ (82)]{minguzzi14c}.

Bao, Chern and Shen \cite{bao00,shen01}  work with just the Chern-Rund connection, and use a contracted tensor
\begin{equation} \label{hhn}
R^\alpha_{\ \beta}(x,v):= R^\alpha{}_{\beta \mu}(x,v) v^\mu=R_{\textrm{ChR}}^{HH}{}^\alpha{}_{\mu \beta \nu}(x,v) v^\mu v^\nu,
\end{equation}
where the Chern-Rund HH-curvature is
\begin{equation} \label{vll}
R^{HH}_{\textrm{ChR}}{}^\alpha{}_{\beta \gamma \delta}=\frac{\delta}{\delta x^\gamma} \,\Gamma^\alpha_{\beta \delta}-\frac{\delta}{\delta x^\delta} \, \Gamma^\alpha_{\beta \gamma}+\Gamma^\alpha_{\mu \gamma} \Gamma^\mu_{\beta \delta}-\Gamma^\alpha_{\mu \delta} \Gamma^\mu_{\beta \gamma}.
\end{equation}
The first identity in (\ref{hhn}) clarifies that the  endomorphism of $TM$ of components  $R^\alpha_{\ \beta}$ depends only on the curvature of the non-linear connection and not on the full Finsler connection.

Following usage by Bao, Chern and Shen \cite{bao00,shen01}  we shall denote it simply by \( R_v \), where the subscript \( v \) emphasizes the dependence on the point of \( E \). Note that the above equation can be rewritten, for any $X\in TM$,
\begin{equation}
R_v(X)=R^N_v(X, v).
\end{equation}
We define the {\em Ricci 1-form} at support vector $v$ as\footnote{It is a bit unfortunate, but we denote the Ricci 1-form and Ricci scalar similarly, $Ric_v$ and $Ric(v)$, respectively. The latter is also simply written $Ric$.}
\begin{equation}
Ric_v(Y)=\textrm{Tr}(X\to R^N_v(X, Y)), \quad (\textrm{in components} \ (Ric_v)_\alpha= R^\mu{}_{\mu \alpha}).
\end{equation}
The {\em Ricci scalar} is
\begin{equation}
 Ric(v):=\operatorname{tr} R_v =Ric_v(v).
 \end{equation}
 Observe that the Ricci 1-form and the Ricci scalar depend only on the non-linear connection and not on the Finsler linear connection (compare with expressions  such as $R^{HH}{}^\mu{}_{\alpha \mu \beta}$ which instead depend on the chosen linear connection).

The tensor \( R_v \) is related to the non-commutativity of the covariant and flipped derivatives as follows \cite{minguzzi15}.
\begin{theorem} \label{tiu}
Let \( u \) and \( X \) be fields on \( M \), and suppose \( u \) is  pregeodesic, i.e. \( D_uu=f u \) for some function \( f \). Then
\begin{equation} \label{com}
D_X \tilde{D}_u u-\tilde{D}_u D_X u-D_{[X,u]} u=R_u(X).
\end{equation}
\end{theorem}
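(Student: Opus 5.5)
Throughout I work in a local chart. The plan is to reduce the identity (\ref{com}) to a coordinate computation for the non-linear connection $N^\alpha_\mu=G^\alpha_\mu$, and then recognize the non-linear curvature (\ref{nsp}) contracted with $u$. By definition $(\tilde D_u u)^\alpha = u^\mu\partial_\mu u^\alpha + G^\alpha_\mu(x,u)u^\mu$. Euler's theorem for the degree-two function $G^\alpha$ gives $G^\alpha_\mu(x,u)u^\mu=2G^\alpha(x,u)$. Hence $\tilde D_u u=D_u u = u^\mu\partial_\mu u^\alpha+2G^\alpha(x,u)$.

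First I would expand $D_X(\tilde D_u u)$. This is the covariant derivative of the section $s=\tilde D_u u$, so it carries connection coefficients evaluated at $s$, that is $N^\alpha_\nu(x,s)$. Since $u$ is pregeodesic, $s=fu$. Homogeneity of degree one of $G^\alpha_\nu$ in $v$ then gives $N^\alpha_\nu(x,fu)=f\,G^\alpha_\nu(x,u)$ when $f>0$. The case $f\le 0$ needs care: positive homogeneity does not cover it. I would handle this by writing $D_X(fu)=X(f)u+fD_Xu$ directly. For $f<0$ there is a subtlety, so I expect the intended reading to be that $D_X$ applied to $\tilde D_u u$ is taken at support $u$, i.e.\ as $\nabla^H$ at support $u$. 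Equivalently, I would use that $D_X(fu)=X(f)u+fD_Xu$ holds whenever the coefficient is evaluated at $u$. I would state this interpretation explicitly; it is exactly where the pregeodesic hypothesis enters.

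Next I compute the three terms, with all coefficients at support $u$:
\[
D_X\tilde D_u u = X^\nu\partial_\nu(u^\mu\partial_\mu u^\alpha+2G^\alpha(x,u)) + G^\alpha_\nu(x,u)X^\nu(\tilde D_u u)^\mu\text{-type terms},
\]
\[
\tilde D_u D_X u = u^\mu\partial_\mu\big(X^\nu\partial_\nu u^\alpha+G^\alpha_\nu(x,u)X^\nu\big)+G^\alpha_\mu(x,u)\big(X^\nu\partial_\nu u^\mu+G^\mu_\nu X^\nu\big),
\]
\[
D_{[X,u]}u = [X,u]^\nu\big(\partial_\nu u^\alpha+G^\alpha_\nu(x,u)\big).
\]
In the difference, the second-derivative terms $X^\nu u^\mu\partial_\nu\partial_\mu u^\alpha$ cancel. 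The terms with first derivatives of $X$ and $u$ multiplying $\partial u^\alpha$ cancel against $[X,u]^\nu\partial_\nu u^\alpha$. What remains collects, after the chain rule $\partial_\nu[G(x,u(x))]=\partial_{x^\nu}G+G_{v^\beta}\partial_\nu u^\beta$, into
\[
X^\nu u^\mu\Big(\frac{\delta G^\alpha_\mu}{\delta x^\nu}-\frac{\delta G^\alpha_\nu}{\delta x^\mu}\Big)(x,u).
\]
Two identities are used here: $\partial_{v^\mu}(2G^\alpha)=2G^\alpha_\mu$ together with $G^\alpha_{\mu\beta}u^\mu=G^\alpha_\beta$, which is Euler's theorem for the degree-one $G^\alpha_\beta$. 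The resulting expression is $R^\alpha{}_{\nu\mu}(x,u)X^\nu u^\mu=R^N_u(X,u)=R_u(X)$ by (\ref{nsp}) and (\ref{hhn}).

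The main obstacle is the bookkeeping of the terms involving $\partial_\nu u^\beta$ multiplied by $v$-derivatives of $G$. These cancel only after using $G^\alpha_{\beta\mu}u^\mu=G^\alpha_\beta$ and the identity $2G^\alpha_\beta = \partial_{v^\beta}(G^\alpha_\mu v^\mu)$. I would organize that step by grouping all terms proportional to $\partial_\nu u^\beta$ and checking that they vanish identically. A secondary check is the $f$-term: $X(f)u$ and $fD_Xu$ must match the contributions from $D_{[X,u]}u$ and $\tilde D_u D_Xu$. I would verify this by running the computation first with $f=0$ (affinely parametrized) and then adding the correction terms, which cancel linearly.
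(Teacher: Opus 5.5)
Your proposal is correct, but it follows a different route from the one the paper relies on.

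\textbf{The paper's route.} The paper does not reprove (\ref{com}); it quotes it from \cite{minguzzi15}. The tool it uses for identities of this kind (see the proof of Theorem \ref{vie}) is a linear Finsler connection pulled back along the section. For the pullback $\overset{u}{\nabla}$ of the Chern--Rund connection one has $\overset{u}{\nabla}_Xu=D_Xu$ and $\overset{u}{\nabla}_uY=\tilde D_uY$. Hence the left-hand side of (\ref{com}) is $\overset{u}{R}(X,u)u$. The identity $R^N_u(X,Y)=\overset{u}{R}(X,Y)u-L(D_Xu,Y)+L(D_Yu,X)$ with $Y=u$ then returns $R_u(X)$, because $L(\cdot,u)=0$ and $L(D_uu,X)=fL(u,X)=0$.

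\textbf{Your route.} You verify the identity by brute force in a chart, using only the spray, and the cancellations you announce do occur. The second derivatives and the $\p u\,\p u$, $\p X\,\p u$ and $G\,\p X$ terms drop out. The $G^\alpha_\gamma X^\nu\p_\nu u^\gamma$ terms cancel with multiplicities $2-1-1$. The $G^\alpha_{\beta\nu}$-terms cancel by symmetry. The remainder is $R^\alpha{}_{\nu\mu}(x,u)X^\nu u^\mu$.

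\textbf{What each buys.} Your route is self-contained and never leaves the non-linear connection, which is in the spirit of the paper. It also exposes something the connection proof hides. If $D_X\tilde D_uu$ is read as $X^\nu\p_\nu A^\alpha+G^\alpha_{\beta\nu}(x,u)A^\beta X^\nu$ with $A:=\tilde D_uu$, none of the cancellations uses $D_uu=fu$. So the identity holds for every non-vanishing $u$, and pregeodesy serves only to make that expression equal to $X(f)u+fD_Xu$. This is exactly the reading the paper itself uses in proving the pseudo-Riemannian characterization of $R_s$, so your handling of $f\le 0$ is right. For $f<0$, the literal reading with coefficients evaluated at $fu$ would in general differ when $\mathscr{L}$ is not reversible. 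The connection route is shorter and more conceptual. However, it depends on external curvature identities, and in its Chern--Rund form it uses pregeodesy a second time, to kill $L(D_uu,X)$.

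\textbf{Two points to tidy.} First, your first display is index-garbled. The connection term must be $G^\alpha_{\beta\nu}(x,u)(\tilde D_uu)^\beta X^\nu$, which equals $fG^\alpha_\nu(x,u)X^\nu$. It is precisely this term that supplies two things:
\begin{itemize}
\item[(i)] $G^\alpha_{\beta\nu}G^\beta_\mu u^\mu X^\nu$, which is needed to complete $\delta G^\alpha_\mu/\delta x^\nu$;
\item[(ii)] $G^\alpha_{\beta\nu}X^\nu u^\mu\p_\mu u^\beta$, which cancels the chain-rule term of $\tilde D_uD_Xu$.
\end{itemize}
Second, your ``secondary check'' on $f$ is superfluous. $X(f)u+fD_Xu$ coincides identically with the expression above, so there are no separate $f$-contributions to match. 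Specializing first to $f=0$ gains nothing, since the full expansion of $\p_XA$ is still needed.
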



As a corollary we obtain \cite{minguzzi15} \cite[Lemma 6.1.1]{shen01}
\begin{proposition}
Let \( x(t,s) \) be a geodesic variation, i.e. \( x_s:=x(\cdot, s) \) is a geodesic for each \( s\in (-\epsilon,\epsilon) \). Defining \( J=\partial  /\partial s\vert_{s=0} \), we have
\begin{equation} \label{cmo}
{\tilde D_{\dot x}}\,\,{\tilde D_{\dot x}} J+ R_{\dot{x}} (J)=0.
\end{equation}
\end{proposition}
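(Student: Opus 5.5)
The plan is to apply Theorem \ref{tiu} with $u=\dot x$ extended suitably and $X=J$. First I would reduce to the case where the variation is non-degenerate, so that in a neighbourhood we can regard $x(t,s)$ as a local parametrization of a 2-surface. Then $\dot x = \partial_t x$ and $J=\partial_s x$ extend to coordinate vector fields on that surface, with $[J,\dot x]=0$. If needed, these fields can be extended to an open set of $M$ so that the theorem, stated for fields on $M$, applies. The degenerate case (e.g.\ $J$ parallel to $\dot x$ at some points) would then follow by continuity, or by perturbing the variation.

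Since each $x_s$ is a geodesic, $u=\dot x$ satisfies $\tilde D_u u = D_u u=0$, so $u$ is pregeodesic with $f=0$. Theorem \ref{tiu} then gives
\[
D_J \tilde D_u u-\tilde D_u D_J u - D_{[J,u]}u = R_u(J).
\]
The first term vanishes because $\tilde D_u u=0$ along the whole variation. The third vanishes because $[J,u]=0$. Hence $-\tilde D_u D_J u = R_u(J)$.

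Next I would use Eq.\ (\ref{kkq}) with $X=u$, $Y=J$: $\tilde D_u J - D_J u = [u,J]=0$, so $D_J u=\tilde D_u J$. Substituting gives $\tilde D_{\dot x}\tilde D_{\dot x} J + R_{\dot x}(J)=0$, which is Eq.\ (\ref{cmo}).

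The main obstacle is justifying that the pointwise identities, which are stated for vector fields on open subsets of $M$, may be applied to fields defined only along the image of the variation. Both $D$ and $\tilde D$ are first-order operators that involve only derivatives along the direction of differentiation. For $D_J u$, we differentiate $u$ along $J$, which is tangent to the surface. For $\tilde D_u J$, we differentiate $J$ along $u$, again tangent to the surface, and $N(x,u)$ is evaluated pointwise. Consequently the results are independent of the chosen extension. The remaining subtlety is where the variation is not immersive; there I would argue by density of immersive variations or by writing the coordinate expression of Eq.\ (\ref{cmo}) and noting that it depends continuously on $x(t,s)$ and its derivatives.
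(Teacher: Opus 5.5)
Your proposal is correct and is essentially the paper's proof, which is the one-line observation that the claim follows from Eq.~(\ref{com}) with $X=J$, $u=\dot x$, combined with Eq.~(\ref{kkq}), $[J,\dot x]=0$ and $D_{\dot x}\dot x=0$. Your extra care about extending the fields and about non-immersive variations goes beyond what the paper writes, but your own final remark already settles it: $D$ and $\tilde D$ only differentiate along directions tangent to the image, so the identities apply directly to fields along the map $x(t,s)$, and the immersion, perturbation and continuity detour is unnecessary.
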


\begin{proof}
Immediate from Eqs.\ (\ref{kkq}) and (\ref{com}) using \( [J,\dot{x}]=0 \) and \( D_{\dot{x}}\dot{x}=0 \).
\end{proof}

Note that the notion of a Jacobi field depends solely on the spray and its induced non-linear connection, not on the Finsler metric.


We also showed \cite{minguzzi15}

\begin{proposition} \label{xll}
The curvature endomorphism \( R_v \) of a Finsler Lagrangian \( \mathscr{L} \) satisfies for every \( X,Y\in T_pM \)
\begin{align}
R_v(v)&=0, \label{mki} \\
g_v(v, R_v(X))&=0, \label{mko}\\
g_v(X, R_v(Y))&=g_v(Y, R_v(X)). \label{mkp}
\end{align}
\end{proposition}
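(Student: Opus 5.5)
We prove the three identities directly from the expressions already available: $R_v(X)=R^N_v(X,v)$, the relation (\ref{cjmc}) $R^N_v(X,Y)=R^{HH}(X,Y)v$, and the orthogonality (\ref{llop}). All three are pointwise algebraic statements at a fixed support vector $v\in T_pM\backslash 0$, so we argue tensorially, choosing whichever linear Finsler connection is most convenient.

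\emph{First identity.} This is immediate from antisymmetry: $R_v(v)=R^N_v(v,v)=0$, since $R^\mu{}_{\alpha\beta}$ is antisymmetric in $\alpha\beta$ by (\ref{nsp}).

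\emph{Second identity.} This is (\ref{llop}) with $Y=v$: $g_v(v,R_v(X))=g_v(v,R^N_v(X,v))=0$.

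\emph{Third identity.} This symmetry is the main point, and we would derive it from the Cartan connection, for which the lowered curvature $R^{HH}_{\textrm{Car}}{}_{\alpha\beta\mu\nu}$ is antisymmetric in its first two indices (as quoted for (\ref{llop})). Then
\[
g_v(X,R_v(Y))=R^{HH}_{\textrm{Car}}{}_{\alpha\beta\mu\nu}X^\alpha v^\beta Y^\mu v^\nu .
\]
Swapping $X$ and $Y$ requires a pair-exchange symmetry $R_{\alpha\beta\mu\nu}\leftrightarrow R_{\mu\nu\alpha\beta}$ after contraction with $v^\beta v^\nu$. In the Riemannian case this follows from the first Bianchi identity. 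In the Finsler case the first Bianchi identity for the Cartan (or Chern--Rund) $HH$-curvature holds only up to terms involving the non-linear curvature $R^\mu{}_{\alpha\beta}$ contracted with the Cartan torsion $C_{\alpha\beta\mu}$. Those correction terms vanish upon contraction with $v$, because $C_{\alpha\beta\mu}v^\mu=0$.

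The main obstacle is tracking these torsion correction terms, which is where the argument could go wrong. The alternative we would try first is a direct computation in Berwald-type coordinates. One uses the formula $R^\alpha{}_\beta = 2\partial_\beta G^\alpha - v^\mu \partial_\mu G^\alpha_\beta + 2G^\mu G^\alpha_{\mu\beta} - G^\alpha_\mu G^\mu_\beta$. Lowering with $g_v$ and using $G^\alpha=\frac14 g^{\alpha\delta}(\ldots)$ from (\ref{axu}), one checks symmetry at a point where the coordinates are chosen so that $G^\alpha_\beta(p,v)=0$. This is possible by a quadratic coordinate change, in analogy with normal coordinates. At that point only the derivative terms survive, and their antisymmetric part cancels by the Euler--Lagrange structure (\ref{axo}), namely the self-adjointness of the Jacobi operator of a Lagrangian system.

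As a fallback, the symmetry can be derived variationally. The second variation of the energy $\int\mathscr{L}$ along a geodesic is a symmetric bilinear form in variation fields. Integrating by parts with $\tilde D_{\dot x}$ and (\ref{aao}) puts it in the form $\int [g(\tilde D J,\tilde D K)-g(J,R_{\dot x}(K))]$. Symmetry in $J,K$ then forces (\ref{mkp}), using localized variations and the arbitrariness of $J(t_0)$ and $K(t_0)$.
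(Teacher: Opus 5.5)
Your proposal is correct, and it takes a different route from the one the paper points to. For (\ref{mki}) and (\ref{mko}) it matches what the paper itself indicates: (\ref{mki}) is antisymmetry of $R^\mu{}_{\alpha\beta}$, and (\ref{mko}) is (\ref{llop}) with $Y=v$, which the paper justifies through the first-pair skew-symmetry of the Cartan $HH$-curvature.

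For (\ref{mkp}) the paper gives no argument and only cites \cite{minguzzi15}. The tools it recalls, Prop.\ \ref{cmbx} and Thm.\ \ref{tiu} (``reducing the need to refer to Finsler connections''), give a short connection-free proof. Extend $v$ to a local geodesic field $u$, and take $X,Y$ with $[X,u]=[Y,u]=[X,Y]=0$ (constant combinations of flow-box coordinate fields).
\begin{itemize}
\item Thm.\ \ref{tiu} gives $R_u(X)=-\tilde D_u D_X u$.
\item Hence (\ref{aao}) and (\ref{kkq}) yield $g_u(Y,R_u(X))=-\p_u g_u(Y,D_Xu)+g_u(D_Yu,D_Xu)$.
\item By (\ref{doj}), $g_u(Y,D_Xu)-g_u(X,D_Yu)=\p_Xg_u(u,Y)-\p_Yg_u(u,X)$.
\item By (\ref{doi}), $\p_u g_u(u,Y)=g_u(u,D_Yu)=\frac12\p_Y g_u(u,u)$.
\item Combining, $g_u(Y,R_u(X))-g_u(X,R_u(Y))=-\frac12\p_{[X,Y]}g_u(u,u)=0$. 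Taking $Y=u$ also gives (\ref{mko}).
\end{itemize}
Alternatively, the pseudo-Riemannian characterization $R_s(X)=R^{s^*g}(X,s)s$ is proved without this proposition. It reduces all three identities to the Riemann symmetries of $s^*g$.

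Your Cartan route instead imports the full $HH$-curvature machinery: metric skew-symmetry, the Bianchi identity with its vertical term, and (\ref{cjmc}). The paper keeps exactly this out of its statements. What your route buys is a purely pointwise argument that needs no extensions of $v$, $X$, $Y$.

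Two details in that route need fixing, though neither breaks it.
\begin{itemize}
\item For Chern--Rund the first Bianchi identity is exact (symmetric $\Gamma$, vanishing vertical coefficients). The defect there sits in first-pair skew-symmetry: $R_{\alpha\beta\gamma\delta}+R_{\beta\alpha\gamma\delta}$ is a multiple of $C_{\alpha\beta\sigma}R^\sigma{}_{\gamma\delta}$. For Cartan it is the reverse: skew-symmetry is exact, and the Bianchi identity acquires a cyclic sum of terms $C_{\alpha\beta\sigma}R^\sigma{}_{\gamma\delta}$ from the term $\nabla^V_{R(X,Y)}Z$ in (\ref{comc}).
\item In either case every correction in the pair-exchange algebra has the form $C_{xy\sigma}R^\sigma{}_{zw}$ with $\{x,y,z,w\}=\{\alpha,\beta,\gamma,\delta\}$. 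After contraction with $v^\beta v^\delta$ it dies either because $v$ enters $C$, or because $\{z,w\}=\{\beta,\delta\}$ and $R^\sigma{}_{\beta\delta}v^\beta v^\delta=0$. The second case really occurs: the identity with first index $\alpha$ contributes $C_{\alpha\gamma\sigma}R^\sigma{}_{\delta\beta}$. So ``$C_{\alpha\beta\mu}v^\mu=0$'' alone is not the full reason.
\end{itemize}

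Your coordinate route also goes through. At $(p,v)$ with $G^\alpha=G^\alpha_\beta=0$, which a quadratic change of coordinates achieves, $g_{\delta\alpha}R^\alpha{}_\beta$ becomes a manifestly symmetric expression in derivatives of $\mathscr{L}$. There is one extra term, $(\p g_{\alpha\delta}/\p v^\beta)\,v^\mu\p_{x^\mu}G^\alpha$, which your sketch does not anticipate; it is symmetric in $\beta,\delta$ by total symmetry of $\p^3\mathscr{L}/\p v\,\p v\,\p v$.

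For the variational fallback, note that Thm.\ \ref{tiu} as stated requires $u$ pregeodesic. In a general two-parameter variation this fails off the base curve, so you would first need the non-pregeodesic form of the commutation identity.
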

Thus it is symmetric. A similar property does not hold for the Ricci tensor if it is defined naively as $R^{HH}{}^\mu{}_{\alpha \mu \beta}$, at least if no other condition is imposed.
\subsection{Pseudo-Riemannian characterizations}

Let $s\colon M \to TM\backslash 0$ be a section, and let us consider the pullback
metric $s^*g$. Its components are $(s^*g)_{\alpha \beta}(x)=g_{\alpha
\beta}(x,s(x))=: (g_s)_{\alpha
\beta}(x)$. Let $\nabla^{s^*g}$ be the Levi-Civita connection of $s^*g$, and let
$\overset{s}{\nabla}$ be the pullback of a Finsler connection. Its coefficients are
\cite[Sect.\ 4.1.1]{minguzzi14c} \cite[Eq.\ (3.7)]{ingarden93}
\begin{equation} \label{jjd}
\big(\overset{s}{\nabla}\big)^\gamma_{\alpha \beta}=H^\gamma_{\alpha \beta}(x,s(x))+V^\gamma_{\beta \mu}(x,s(x))D_\alpha s^\mu.
\end{equation}
Particularly simple are the coefficients of the pullback
$\overset{s}{\nabla}{}^{ChR}$ of the Chern-Rund connection, since the vertical coefficients vanish. They are
$\Gamma^\gamma_{\alpha \beta}(x,s(x))$.

As the Chern-Rund connection is regular, $N^\alpha_\mu (x, s ) = \Gamma^\alpha_{ \nu \mu}(x,s) s^\nu$, we have, as recalled above,
\[
D_X s=\overset{s}{\nabla}{}^{\textrm{ChR}}_X s, \qquad \tilde D_s X=\overset{s}{\nabla}{}^{\textrm{ChR}}_s X .
\]
Let us recall the relationship between the Levi-Civita connection $\nabla^{s^* g}$ of the pullback metric and the pullback $\overset{s}{\nabla}{}^{\textrm{ChR}}$ of the Chern-Rund connection.
We have \cite[Prop.\ 1]{minguzzi15d}
\begin{proposition}
Let $s\colon M \to TM\backslash 0$ be a section. For every $X,Y\colon M \to TM$
\begin{equation} \label{dos}
\nabla^{s^* g}_X Y-\overset{s}{\nabla}{}^{\textrm{ChR}}_X Y=C(D_Ys,X)+C(D_Xs,Y)-g(C(X,Y),D
s)^\sharp.
\end{equation}
\end{proposition}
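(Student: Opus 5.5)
The plan is a direct coordinate computation: compare the Christoffel symbols of the pullback metric $s^*g$ with the pullback Chern–Rund coefficients $\Gamma^\gamma_{\alpha\beta}(x,s(x))$, which are given by Eq. (\ref{jjd}) since the vertical coefficients vanish. Throughout, $C$ denotes the Cartan tensor $C_{\alpha\beta\gamma}=\frac12\,\partial g_{\alpha\beta}/\partial v^\gamma$. It is totally symmetric, so the order of its slots in (\ref{dos}) does not matter. Index raising is done with $g_s$.

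Both connections are torsion-free: the Levi-Civita connection trivially, and the pullback Chern–Rund connection because $\Gamma^\gamma_{\alpha\beta}$ in (\ref{sog}) is symmetric in $\alpha\beta$. Their difference is therefore a symmetric $(1,2)$-tensor, and it suffices to compute it in components with $X=\partial_\alpha$, $Y=\partial_\beta$.

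\textbf{Step 1 (chain rule).} Differentiate $(g_s)_{\alpha\beta}(x)=g_{\alpha\beta}(x,s(x))$:
\[
\partial_\mu (g_s)_{\alpha\beta}=\frac{\partial g_{\alpha\beta}}{\partial x^\mu}+\frac{\partial g_{\alpha\beta}}{\partial v^\nu}\,\partial_\mu s^\nu .
\]
Now rewrite $\partial/\partial x^\mu=\delta/\delta x^\mu+N^\nu_\mu\,\partial/\partial v^\nu$, evaluated at $v=s(x)$. This gives the key identity
\[
\partial_\mu (g_s)_{\alpha\beta}=\frac{\delta g_{\alpha\beta}}{\delta x^\mu}\Big|_{(x,s)}+2C_{\alpha\beta\nu}(x,s)\,(D_\mu s)^\nu ,
\]
because $\partial_\mu s^\nu+N^\nu_\mu(x,s)=(D_\mu s)^\nu$ by the definition of $D$.

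\textbf{Step 2 (Christoffel formula).} Insert this into
\[
\Gamma^{s^*g}{}^\gamma_{\alpha\beta}=\tfrac12 g_s^{\gamma\sigma}\big(\partial_\alpha (g_s)_{\sigma\beta}+\partial_\beta (g_s)_{\sigma\alpha}-\partial_\sigma (g_s)_{\alpha\beta}\big).
\]
The $\delta/\delta x$ terms reproduce exactly the Chern–Rund expression (\ref{sog}) at support $s(x)$. The remaining terms are
\[
g^{\gamma\sigma}\big(C_{\sigma\beta\nu}(D_\alpha s)^\nu+C_{\sigma\alpha\nu}(D_\beta s)^\nu-C_{\alpha\beta\nu}(D_\sigma s)^\nu\big).
\]
Contracted with $X^\alpha Y^\beta$, this is $C(D_Xs,Y)+C(D_Ys,X)-g(C(X,Y),Ds)^\sharp$, which is the right-hand side of (\ref{dos}).

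\textbf{Main obstacle.} There is no conceptual difficulty. The only delicate point is Step 1: one must evaluate the non-linear connection coefficients at the support vector $s(x)$, so that the $N$-terms combine into $D_\mu s$ rather than into a flipped derivative. One must also track the notation $g(C(X,Y),Ds)^\sharp$, meaning the vector obtained by raising the free slot of the 1-form $Z\mapsto g_s(C(X,Y),D_Z s)$.
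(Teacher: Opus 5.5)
Your proposal is correct. The chain-rule identity $\partial_\mu (g_s)_{\alpha\beta}=\frac{\delta g_{\alpha\beta}}{\delta x^\mu}\big|_{(x,s)}+2C_{\alpha\beta\nu}(D_\mu s)^\nu$, inserted into the Christoffel formula for $s^*g$, gives two pieces. The $\delta/\delta x$ terms reproduce (\ref{sog}) at support $s(x)$. The remaining terms are exactly the three Cartan terms of (\ref{dos}); the factor $2$ from $\partial g/\partial v=2C$ cancels the $\frac12$. Your reading of $g(C(X,Y),Ds)^\sharp$ is also the right one.

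The paper does not prove this statement; it only quotes \cite[Prop.\ 1]{minguzzi15d}, so there is no in-paper route to compare against. A coordinate-free version of the same computation runs as follows. Since $\nabla^H g=0$ for Chern--Rund, your Step 1 says the pullback connection is torsion-free with non-metricity
\[
Q(Z;X,Y):=(\overset{s}{\nabla}{}^{\textrm{ChR}}_Z\, s^*g)(X,Y)=2\,g_s(C(X,Y),D_Zs).
\]
The general formula for a torsion-free connection,
\[
g_s\big(\nabla^{s^*g}_XY-\overset{s}{\nabla}{}^{\textrm{ChR}}_XY,Z\big)=\tfrac12\big[Q(X;Y,Z)+Q(Y;X,Z)-Q(Z;X,Y)\big],
\]
then gives (\ref{dos}) at once. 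That version makes clear why only two ingredients matter: metric compatibility of Chern--Rund, and $D_Zs$ appearing as the vertical part of $ds$.

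One small remark on your write-up: torsion-freeness is not needed to reduce to components. The difference of two connections is tensorial in any case. Symmetry only becomes essential in the Koszul-type formula above.
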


In the geodesic case the next result can be found in \cite[Sec.\ 6.2]{shen01} \cite[Lemma 4.3]{rademacher04} \cite[Lemma 2.3]{ohta15} \cite[Prop.\ 3.2]{lu19}.

\begin{corollary}[pseudo-Riemannian characterization of covariant derivatives]  \label{cotg} $\empty$\\
Let $s\colon M \to TM\backslash 0$ be a section. We have $D_s s=\tilde D_s s=\nabla^{s^*\! g}_s s=\overset{s}{\nabla}{}^{\textrm{ChR}}_s s$
thus the geodesic and pregeodesic conditions formulated by using the various connections are all equivalent.
If $s$ is pregeodesic then
\begin{align}
D_X s&=\nabla^{s^* \!g}_X s, \label{mnr1} \\
\tilde D_s X&=\nabla^{s^*\! g}_s X . \label{mnr2}
\end{align}
\end{corollary}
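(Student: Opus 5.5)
The plan is to derive everything from the difference formula (\ref{dos}) between the Levi-Civita connection of $s^*g$ and the pullback of the Chern--Rund connection, together with the two standard properties of the Cartan tensor $C_{\alpha\beta\gamma}=\frac{1}{2}\partial g_{\alpha\beta}/\partial v^\gamma$ evaluated at support vector $s$. These properties are that $C$ is totally symmetric and that $C_s(s,\cdot)=0$; the latter is Euler's theorem, since $g_{\alpha\beta}$ is positively homogeneous of degree zero in $v$. Throughout, $C$ and $g$ in (\ref{dos}) are evaluated at $s(x)$.

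The first step concerns the identity $D_s s=\tilde D_s s=\overset{s}{\nabla}{}^{\textrm{ChR}}_s s$. It is immediate from the definitions, because $N^\alpha_\mu(x,s)s^\mu=N^\alpha_\mu(x,s)\,s^\mu$ gives the same expression in both derivatives. It also follows from the remark preceding the corollary, which identifies $D_X s=\overset{s}{\nabla}{}^{\textrm{ChR}}_X s$ and $\tilde D_s X=\overset{s}{\nabla}{}^{\textrm{ChR}}_s X$.

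For the Levi-Civita term, apply (\ref{dos}) with $X=Y=s$. The right-hand side becomes $2C(D_s s,s)-g(C(s,s),Ds)^\sharp$. Both pieces vanish: $C(\cdot,s)=0$ and $C(s,s)=0$, by symmetry and homogeneity. This gives the chain of equalities. The equivalence of the geodesic and pregeodesic conditions then follows, since all four vectors coincide.

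Now assume $s$ is pregeodesic, $D_s s=fs$. For (\ref{mnr1}), take $Y=s$ in (\ref{dos}). The right-hand side is $C(D_s s,X)+C(D_X s,s)-g(C(X,s),Ds)^\sharp$. The second and third terms vanish because $s$ is inserted into $C$. The first equals $fC(s,X)=0$. Hence $\nabla^{s^*g}_X s=\overset{s}{\nabla}{}^{\textrm{ChR}}_X s=D_X s$.

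For (\ref{mnr2}), take $X=s$ and general $Y$. The right-hand side is $C(D_Y s,s)+C(D_s s,Y)-g(C(s,Y),Ds)^\sharp$. Again every term contains $s$ in a slot of $C$, the middle one through $D_s s=fs$. So $\nabla^{s^*g}_s Y=\overset{s}{\nabla}{}^{\textrm{ChR}}_s Y=\tilde D_s Y$.

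The only point that needs care is the reading of the last term in (\ref{dos}). It should be the vector $Z$ characterized by $g(Z,W)=C(X,Y,D_W s)$. Under this reading the term vanishes whenever $X$ or $Y$ equals $s$, so that $D_W s$ plays no role. The pregeodesic hypothesis is used only to kill $C(D_s s,\cdot)$.
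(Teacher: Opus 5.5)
Your proposal is correct and is the argument the paper intends. The paper gives no separate proof and presents the corollary as an immediate consequence of (\ref{dos}), to be read off by substituting $X=Y=s$, then $Y=s$, then $X=s$, and using that $C$ is totally symmetric, that $C(s,\cdot)=0$ at support vector $s$, and that $D_s s=fs$, exactly as you do (your reading of the last term of (\ref{dos}) as the $g$-dual of $W\mapsto C(X,Y,D_W s)$ is the right one).
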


As a consequence, recalling Eq.\ (\ref{com}) we
obtain the following pregeodesic improvement of
\cite[Prop.\ 6.2.2]{shen01}
\begin{corollary}[pseudo-Riemannian characterization of the curvature endomorphism] $\empty$\\
If $s$ is pregeodesic then
\begin{align}
R_s(X)&=R^{s^* \!g} (X,s) s,\\
Ric(s)&=R^{s^* \!g}_{\alpha \beta} s^\alpha s^\beta. \label{clop}
\end{align}
\end{corollary}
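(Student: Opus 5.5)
We prove the two identities in turn, starting from Theorem \ref{tiu} with $u=s$ pregeodesic, $D_s s=fs$:
\[
R_s(X)=D_X \tilde D_s s-\tilde D_s D_X s-D_{[X,s]} s .
\]
The goal is to rewrite each of the three terms with the Levi-Civita connection $\nabla:=\nabla^{s^*g}$ of the pullback metric, using Corollary \ref{cotg}.

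\textbf{Rewriting the first term.} By Corollary \ref{cotg}, $\tilde D_s s=D_s s=\nabla_s s=fs$. Since $D$ acting on the section $fs$ obeys the Leibniz rule with the section itself held fixed in the non-linear coefficient, we have
\[
D_X(fs)=X(f)\,s+f\,D_Xs .
\]
The inhomogeneity of $N^\alpha_\mu(x,fs)$ is harmless here: $N$ is $1$-homogeneous in $v$, and the result is $\partial_X(fs)+fN(s)X$. By (\ref{mnr1}) this equals
\[
X(f)\,s+f\,\nabla_Xs=\nabla_X(fs)=\nabla_X\nabla_s s .
\]

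\textbf{Rewriting the second term.} Since $D_Xs=\nabla_Xs$ is just some vector field $Y$, (\ref{mnr2}) applies to it: $\tilde D_sY=\nabla_sY$. Hence $\tilde D_s D_Xs=\nabla_s\nabla_Xs$.

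\textbf{Rewriting the third term.} Likewise, (\ref{mnr1}) with $[X,s]$ in place of $X$ gives $D_{[X,s]}s=\nabla_{[X,s]}s$.

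\textbf{First identity.} Substituting the three rewritten terms,
\[
R_s(X)=\nabla_X\nabla_s s-\nabla_s\nabla_X s-\nabla_{[X,s]}s=R^{s^*g}(X,s)s ,
\]
with the paper's sign convention for the Riemann tensor.

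\textbf{Second identity.} Take the trace over $X$. By definition, $Ric(s)=\operatorname{tr}R_s$. The trace of $X\mapsto R^{s^*g}(X,s)s$ is $R^{s^*g}_{\alpha\beta}s^\alpha s^\beta$, which is the pullback Ricci tensor contracted twice with $s$. This gives (\ref{clop}).

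The only delicate point is that Corollary \ref{cotg} is stated pointwise for vector fields and requires $s$ pregeodesic. One must check that the corollary really applies to the auxiliary fields $D_Xs$ and $[X,s]$, which are ordinary vector fields on $M$, so it does. The other check is the Leibniz step for $D_X(fs)$, which uses only the homogeneity of $N$; the remaining steps are direct substitution.
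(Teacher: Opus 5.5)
Your proof is correct and follows the paper's own argument: you rewrite each of the three terms of Eq.\ (\ref{com}) with $\nabla^{s^*g}$ via Corollary \ref{cotg}, recognize $R^{s^*g}(X,s)s$, and take the trace to get (\ref{clop}). One caveat concerns your homogeneity remark. It only gives $N(x,fs)=fN(x,s)$ where $f>0$, since the spray is merely positively homogeneous. So the Leibniz step $D_X(fs)=X(f)s+fD_Xs$ should rest on reading the non-linear coefficient at $s$, which is the reading you state first and the one the paper uses.
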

\begin{proof}
Since $\tilde{D}_s s=f s$ for some function $f$, we have
\begin{align*}
D_X \tilde{D}_s s&=(\p_X f) s+f D_Xs=    (\p_X f) s+f \nabla^{s^* \!g}_X s=\nabla^{s^* \!g}_X \nabla^{s^*\! g}_s s, \\
\tilde{D}_s D_X s&=\tilde{D}_s \nabla^{s^* \!g}_X s= \nabla^{s^*\! g}_s \nabla^{s^* \!g}_X s, \\
D_{[X,s]} s&=\nabla^{s^*\! g}_{[X,s]} s,
\end{align*}
thus, using Eq.\ (\ref{com})
\[
R_s(X)=\nabla^{s^* \!g}_X \nabla^{s^*\! g}_s s-\nabla^{s^*\! g}_s \nabla^{s^* \!g}_X s-\nabla^{s^*\! g}_{[X,s]} s=R^{s^* \!g} (X,s) s .
\]
\end{proof}

\section{Regular null hypersurfaces} \label{ral}


The purpose of this section is to introduce the notion of surface gravity for regular null hypersurfaces, recall the quotient bundle formalism, introduce the null Weingarten map $b$ and the quotient curvature endomorphism $\bar R$.

In Lorentz-Finsler geometry a $C^1$  hypersurface $H$ is null if for every $p\in H$ there is a future-directed lightlike vector\footnote{Note that the letter $n$ enters also the spacetime dimension $n+1$, but this should not cause confusion.} $n\in T_p H$ such that $\textrm{ker} g_{n}(n,\cdot)= T_pH$. The  vector $n$ is unique up to positive  rescaling due to the assumed regularity of the light cone, see \cite{minguzzi13c} for more on the Legendre map in Lorentzian signature.

We recall the following generalization \cite{minguzzi15} of a standard result in Lorentzian geometry \cite{kupeli87,galloway00} which follows from the existence of convex neighborhoods and the local length maximization of lightlike geodesics \cite{minguzzi13d}.

\begin{theorem} \label{ppp}
Every $C^2$  hypersurface $H$ is null if and only if it is locally achronal and ruled by null geodesics.
\end{theorem}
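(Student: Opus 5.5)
Both directions reduce to the fact that a lightlike vector $n$ is the $g_n$-orthogonal of its own Legendre image. I will use two local tools: convex neighborhoods, and the local maximization of Lorentz–Finsler length by causal geodesics. The latter says that inside a convex neighborhood $C$, if $q$ lies in the causal future of $p$ but not in its chronological future, then $q$ is joined to $p$ by a lightlike geodesic segment in $C$. Both are available in \cite{minguzzi13d}.

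\textbf{($\Rightarrow$) Null implies ruled.} Let $H$ be a null $C^2$ hypersurface. The field of lightlike semi-tangents $n$ with $\ker g_n(n,\cdot)=T H$ is unique up to positive rescaling. Since $H$ is $C^2$ and the light cone is regular, the Legendre map is a diffeomorphism, so $n$ can be chosen $C^1$. I would then show that its integral curves are pregeodesics, which I expect to be the main obstacle.

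To do so I apply Eq.\ (\ref{doj}) with $u=n$ and $X,Y$ tangent to $H$. Then $g_n(n,X)=g_n(n,Y)=0$ and $[X,Y]\in TH$, so $0=g_n(Y,D_Xn)-g_n(X,D_Yn)$. I combine this with Eq.\ (\ref{doi}) for $u=n$. Its left side is $\tfrac12\partial_X g_n(n,n)=0$, since $g_n(n,n)=2\mathscr{L}(n)=0$ on $H$. Using $g_n(n,[X,n])=0$, which holds because $[X,n]\in TH$, I obtain $g_n(\tilde D_n n,X)=\partial_n g_n(n,X)=0$ for every $X\in TH$. Hence $\tilde D_n n$ lies in $\ker g_n(\cdot,\cdot)^{\perp}$ restricted to $TH$, so $\tilde D_n n=fn$ and $n$ is pregeodesic. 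Consequently $H$ is ruled by null geodesics.

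\textbf{($\Rightarrow$) Null implies locally achronal.} For $p\in H$, take a convex neighborhood $C$ of $p$ and a small neighborhood $W\subset H\cap C$ of $p$. Suppose two points of $W$ were chronologically related. Near $p$, $H$ separates $C$ into two sides. Every future timelike vector at a point of $H$ satisfies $g_n(n,v)<0$, by the reverse Cauchy–Schwarz inequality for Finsler Lagrangians, so timelike curves cross $H$ in only one direction. A timelike curve leaving $W$ to the future side therefore cannot return to $W$, and $W$ is achronal.

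\textbf{($\Leftarrow$) Ruled and locally achronal implies null.} At $p$ the ruling provides a lightlike vector $n\in T_pH$. If $T_pH$ contained a timelike vector, a timelike curve in $H$ would contradict local achronality. So $T_pH$ meets the open cone trivially and contains the lightlike $n$. The hyperplane $T_pH$ therefore supports the causal cone at $n$, and by regularity of the cone it must equal the tangent hyperplane to the light cone along $n$, namely $\ker g_n(n,\cdot)$. Hence $H$ is null.

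The delicate point throughout is ensuring that only $C^2$ regularity of $H$ is needed to run the derivative computations.
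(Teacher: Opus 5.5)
Your proof is correct, but it reaches ruledness by a different route from the one the paper indicates. The paper gives no proof here; it only says the result follows from convex neighbourhoods and local length maximization of lightlike geodesics \cite{minguzzi13d}. With those tools ruledness is obtained causally. First one gets local achronality, since $\ker g_n(n,\cdot)$ contains no timelike vector. Then the integral curve of $n$ from $p$ to a nearby $q$ is a causal curve in $H$, so $q\in J^+(p,C)\setminus I^+(p,C)$. In a convex neighbourhood, a causal curve joining two points that are not chronologically related must be a lightlike pregeodesic.

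You instead get $\tilde D_n n\propto n$ directly from Eq.\ (\ref{doi}) of Prop.\ \ref{cmbx}, which is the Finslerian transcription of Galloway's computation $g(\nabla_K K,X)=0$. You prove local achronality separately, by the one-sided crossing argument. The two tools you announce at the start are never used. Also, the form in which you state local maximization (some lightlike geodesic joins $p$ to $q$) would not suffice to identify the integral curve with that geodesic; that needs the stronger statement that any non-pregeodesic causal curve in $C$ can be deformed to a timelike one.

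What each route buys:
\begin{itemize}
\item Your route is local and self-contained, and it directly produces the relation $D_nn=\kappa n$ that the paper uses right afterwards. It does rely on the Finsler connection calculus behind Prop.\ \ref{cmbx} and on $n$ being $C^1$.
\item The causal route needs no connection computation. It therefore transfers to lower regularity or to cone structures, where only convex neighbourhoods and local maximization are available.
\end{itemize}

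One point to tighten in your achronality step. The claim must concern timelike curves contained in a neighbourhood $U$ of $p$ in which $H\cap U$ is a connected graph $\{\Phi=0\}$, with $d\Phi$ a negative multiple of $g_n(n,\cdot)$ along $H\cap U$. A curve allowed to leave such a $U$ can return to $H$ from the past side; indeed, in a totally vicious spacetime no nonempty set is achronal in $M$. So ``cannot return to $W$'' holds only for curves confined to $U$. Taking a convex $C\subset U$ is harmless but unnecessary.
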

The last terminology means that from every point of $H$ there passes a lightlike geodesic contained in $H$. These geodesics are also called {\em generators}.

It can  be shown  \cite{minguzzi15} that any $C^2$  null hypersurface $H$ (with boundary) can be enlarged to a hypersurface $H'$ by extending  the null generators in the future and past directions. The hypersurface will remain null as long as it remains $C^2$.

Let $H$ be a $C^2$ null hypersurface and let $n$ be a $C^1$ future-directed lightlike vector field  tangent to $H$ so that its integral curves are lightlike pregeodesics running over $H$. They satisfy
\begin{equation} \label{kap}
{\tilde D_{n}} n=D_n n=\kappa n
\end{equation}
where $\kappa$ is a function over $H$. Observe that $n$ is uniquely defined up to (point dependent) positive rescalings. The function $\kappa : H \to \mathbb{R}$ is called {\em surface gravity} and depends on the chosen field. In general $\kappa$ cannot be chosen to vanish without spoiling the regularity of $n$: this is due to the fact that the  generators can accumulate on themselves and that lightlike vectors do not have a privileged normalization.

The following result provides a convenient way to calculate the surface gravity \cite{moncrief83} \cite[Prop.\ 5]{minguzzi21} (recall that $\ker g_n(n,\cdot)=TH$)
\begin{proposition} \label{bkq}
Let $T$ be a vector field defined in a neighborhood of $H$ such that  $g_n(n,T)=-\frac{1}{a}=const\ne 0$ on $H$ (hence transverse to $H$), and extend $n$ to a neighborhood of $H$ in such a way that $L_Tn=0$, then
\begin{equation} \label{yry}
\kappa=\frac{a}{2} \p_T g_n(n,n) \vert_H.
\end{equation}
\end{proposition}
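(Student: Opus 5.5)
The plan is to apply the first identity (\ref{doi}) of Proposition \ref{cmbx} with $u=n$ (extended off $H$ so that $L_T n=0$) and $X=T$, and then restrict to $H$. The extension requires $n$ to be defined in a neighbourhood of $H$. Since $T$ is transverse to $H$, flowing $n\vert_H$ along $T$ defines $n$ near $H$ with $[T,n]=0$, and $g_n(n,n)$ becomes a function on that neighbourhood. Identity (\ref{doi}) then reads
\[
\frac{1}{2}\,\partial_T g_n(n,n)=\partial_n g_n(n,T)-g_n(\tilde D_n n,T)+g_n(n,[T,n]).
\]

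We evaluate the three terms on $H$. The last term vanishes because $[T,n]=L_Tn=0$. For the first term, $n$ is tangent to $H$, so $\partial_n$ differentiates only along $H$. On $H$ we have $g_n(n,T)=-1/a$, which is constant, hence $\partial_n g_n(n,T)\vert_H=0$. The one point to watch is that only the values of $g_n(n,T)$ on $H$ enter here, and this holds precisely because $n$ is tangent to $H$. For the middle term, Eq.\ (\ref{kap}) gives $\tilde D_n n=\kappa n$ on $H$. This quantity is computed from $n$ along its integral curves, which lie in $H$, so only $n\vert_H$ is involved. Therefore $g_n(\tilde D_n n,T)=\kappa\, g_n(n,T)=-\kappa/a$.

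Combining the three terms gives $\frac12 \partial_T g_n(n,n)\vert_H=\kappa/a$, that is, $\kappa=\frac a2\,\partial_T g_n(n,n)\vert_H$, which is Eq.\ (\ref{yry}). No step is a real obstacle. The only care needed is to justify that the terms evaluated on $H$ depend only on data along $H$, with the transverse derivative appearing solely on the left-hand side.
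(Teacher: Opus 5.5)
Your proof is correct and is essentially the paper's argument. The paper writes $\frac12\partial_T g_n(n,n)=g_n(D_Tn,n)$, replaces $D_Tn$ by $\tilde D_nT$ via (\ref{kkq}) using $[n,T]=0$, and then applies the compatibility (\ref{aao}) together with $\tilde D_nn=\kappa n$ and $g_n(n,T)=-1/a$ on $H$; the second equality in (\ref{doi}) carries out these same steps at once.
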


\begin{proof}
We have, using Prop.\ \ref{cmbx} and Eqs.\ (\ref{kkq}) and (\ref{aao})
\[
D_T g_n(n,n)=2 g_n(D_T n, n)=2 g_n(\tilde D_n T,n)=2[\p_n g_n(T,n)-g_n(T,\tilde D_n n)]=\frac{2\kappa}{a}  .
\]
\end{proof}

On the \(C^2\) null hypersurface we consider the vector bundle \(V = TH/\!\!\sim\), obtained by regarding two vectors \(X, Y \in T_pH\) as equivalent if \(Y - X \propto n\). Clearly, this bundle has \((n-1)\)-dimensional fibers. We denote by \(\overline{X}\) the equivalence class of \(\sim\) containing \(X\). At each \(p \in H\) we introduce:  a scalar product on $V_p$
\begin{equation} \label{cmpt}
h(\overline{X},\overline{Y}) := g_n(X,Y);
\end{equation}
 an endomorphism (shape operator, null Weingarten map)
  \[
  b : V_p \to V_p, \qquad \overline{X} \mapsto b(\overline{X}) := D_{\overline{X}} n := \overline{D_X n};
  \]
a second endomorphism
  \[
  \bar{R} : V_p \to V_p, \qquad \bar{R}(\overline{X}) := \overline{R_n(X)};
  \]
and a third endomorphism, which is the trace-free part of \(\bar{R}\),
  \[
  \bar{C} : V_p \to V_p, \qquad \bar{C}(\overline{X}) := \bar{R} - \frac{1}{n-1} \, \operatorname{tr} \bar{R} \; \text{Id}.
  \]

The definition of \(b\) is well posed because \(D_X\) is linear in \(X\) and \(D_n n \propto n\); hence \(D_{X+an} n = D_X n + k n\) for some \(k\). Moreover, by Prop.\ \ref{cmbx}, taking into account that $n$ is lightlike over $H$
\[
g_n(n, D_X n) =\frac{1}{2} D_X g_n(n,n) =0,
\]
which shows that \(D_X n \in TH\). The definition of \(\bar{R}\) is well posed because \(R_n(n)=0\) and, by Eq. (\ref{mko}), \(g_n(n, R_n(X))=0\); hence for every \(X \in T_pM\), \(R_n(X) \in T_pH\).

The endomorphisms \(b\), \(\bar{R}\) and \(\bar{C}\) are all self-adjoint with respect to \(h\), see \cite{minguzzi15}.  Moreover \cite{minguzzi15}
\begin{align}
\operatorname{tr} \bar{R} = \operatorname{tr} R_n = Ric(n). \label{sid}
\end{align}

\begin{definition}
The spacetime $(M,\mathscr{L})$ satisfies the  {\em null/timelike/causal convergence condition} if for  every future-directed lightlike/timelike/causal vector \(v\), \(Ric(v) \ge 0\).
\end{definition}

The derivative \(\tilde{D}_n\) induces a derivative \(\overline{X}' := \overline{\tilde{D}_n X}\) on sections of \(V\), and hence, in the usual way, a derivative on endomorphisms via \(E'(\overline{X}) := \bigl(E(\overline{X})\bigr)' - E(\overline{X}')\).

We arrive at the following Finslerian result \cite{minguzzi15} analog of a Lorentzian result  which can be found in \cite[Cor.\ 42]{kupeli87}.

\begin{proposition} \label{buu}
Along a generator of \(H\) the null Weingarten map satisfies the Riccati equation
\begin{equation} \label{nkw}
b' = -\bar{R} - b^2 + \kappa\, b.
\end{equation}
\end{proposition}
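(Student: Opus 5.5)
We prove the Riccati equation (\ref{nkw}) by computing $b'(\overline{X})$ for a well-chosen field $X$ and applying the commutation formula of Theorem \ref{tiu} with $u=n$. By definition,
\[
b'(\overline{X})=\bigl(b(\overline{X})\bigr)'-b(\overline{X}')=\overline{\tilde D_n D_X n}-\overline{D_{\tilde D_n X} n}.
\]
Since $b'$ is tensorial, we may pick any representative $X$ of $\overline{X}$ at a point of a generator. It is convenient to take $X$ tangent to $H$ with $[X,n]=0$, e.g.\ by Lie transporting a vector along the generator via the flow of $n$. Then Eq.\ (\ref{kkq}) gives $\tilde D_n X=D_X n$, so the second term becomes
\[
\overline{D_{D_Xn} n}=b\bigl(b(\overline{X})\bigr)=b^2(\overline{X}).
\]
This uses that $D_X n\in TH$, so $b$ can be applied to its class, and that $D_Y n$ is linear in $Y$.

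For the first term, $n$ is pregeodesic with $\tilde D_n n=D_n n=\kappa n$ by Eq.\ (\ref{kap}), so Theorem \ref{tiu} applies. With $[X,n]=0$ it gives
\[
\tilde D_n D_X n=D_X\tilde D_n n-R_n(X)=D_X(\kappa n)-R_n(X)=(\partial_X\kappa)\,n+\kappa D_Xn-R_n(X).
\]
Passing to the quotient kills the term proportional to $n$, leaving $\kappa\, b(\overline{X})-\bar R(\overline{X})$. Combining the two terms yields $b'=-\bar R-b^2+\kappa b$.

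The main technical point is that Theorem \ref{tiu} needs $n$ and $X$ to be fields on an open set of $M$, whereas $n$ is given only on $H$. We therefore extend $n$ off $H$ to a $C^1$ nonvanishing field (not necessarily lightlike) and extend $X$ so that $[X,n]=0$ holds at least on $H$. We then check that every quantity restricted to $H$ depends only on values along $H$: $D_X n$ with $X\in TH$, $\tilde D_n$ along the generator, and $R_n(X)$ evaluated at the support vector $n$. The pregeodesic hypothesis $D_nn=\kappa n$ is needed only along $H$, but Theorem \ref{tiu} as stated requires it on the open set. So either we choose the extension to be pregeodesic nearby, for instance by extending $n$ on a transverse hypersurface and flowing along geodesics, or we re-derive Eq.\ (\ref{com}) locally using only $D_nn=\kappa n$ on $H$ and $X$ tangent to $H$. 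I expect this extension bookkeeping to be the only delicate step; the algebra above is routine.
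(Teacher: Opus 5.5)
Your proof is correct and follows the argument the paper relies on (it defers to \cite{minguzzi15}): write $b'(\overline{X})=\overline{\tilde D_n D_X n}-\overline{D_{\tilde D_n X}n}$, use Eq.\ (\ref{kkq}) and the commutation formula of Theorem \ref{tiu} with $u=n$ and $\tilde D_n n=\kappa n$, and discard the terms proportional to $n$ in the quotient. Choosing $[X,n]=0$ is only a convenience, since for general $X$ the two $D_{[X,n]}n$ contributions cancel, and the extension bookkeeping you flag is harmless because every term involved depends only on data along $H$.
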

Additionally, we proved \cite{minguzzi15}

\begin{proposition} \label{bcu}
Along a generator of \(H\) the metric $h$ satisfies
\begin{equation}
h' = 0.
\end{equation}
\end{proposition}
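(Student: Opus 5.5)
We have to show that $h'=0$ along a generator, where $h(\overline X,\overline Y)=g_n(X,Y)$ and the derivative on sections of $V$ is $\overline X'=\overline{\tilde D_n X}$. The induced derivative on the bilinear form is
\[
h'(\overline X,\overline Y)=\p_n\bigl(h(\overline X,\overline Y)\bigr)-h(\overline X',\overline Y)-h(\overline X,\overline Y').
\]
The plan is to lift everything to $TH$ and apply the metric compatibility of the flipped derivative, Eq.\ (\ref{aao}). That identity holds for pregeodesic support vectors, and $n$ qualifies by Eq.\ (\ref{kap}).

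First I will check that the construction is well posed. Take $X,Y$ tangent to $H$ along the generator, i.e.\ representatives extended as fields on $H$. Then $\tilde D_n X\in TH$, because
\[
g_n(n,\tilde D_n X)=\p_n g_n(n,X)-g_n(\tilde D_n n,X)=0-\kappa\, g_n(n,X)=0 .
\]
Here I use Eq.\ (\ref{aao}), $g_n(n,X)=0$ on $H$, and $\tilde D_n n=\kappa n$. So $\overline{X}'$ is a class in $V$.

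Independence of the representative also needs checking. Replacing $X$ by $X+fn$ changes $\tilde D_n X$ by $(\p_n f)n+f\kappa n$, which is $\propto n$. Here I use that $X\mapsto\tilde D_n X$ is linear over functions in the derivation sense, since $N^\alpha_\mu(x,n)$ is evaluated at the fixed support $n$. Likewise, $h$ is well defined because $g_n(n,\cdot)$ vanishes on $TH$.

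Next comes the main computation. By Eq.\ (\ref{aao}), $\tilde D_n g_n=0$, so
\[
\p_n\bigl(g_n(X,Y)\bigr)=g_n(\tilde D_n X,Y)+g_n(X,\tilde D_n Y).
\]
Rewritten in the quotient, this reads $\p_n h(\overline X,\overline Y)=h(\overline X',\overline Y)+h(\overline X,\overline Y')$, which is exactly $h'=0$.

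The one delicate point is that Eq.\ (\ref{aao}) is stated for a pregeodesic field $u$ defined on an open set of $M$, whereas $n$ is given only on $H$. I would handle this by extending $n$ to a neighborhood arbitrarily, keeping it pregeodesic along the generator. The quantities $\tilde D_n X$ and $\p_n$ involve only derivatives along $n$, which is tangent to $H$, and evaluation at the support $n$, so they do not depend on the extension. Alternatively, I would use Corollary \ref{cotg}: $\tilde D_n X=\nabla^{n^*g}_n X$ with $n^*g$ pseudo-Riemannian, and then compatibility of the Levi-Civita connection gives the result directly. I expect this extension and regularity check ($n$ only $C^1$) to be the only real obstacle, and it is minor.
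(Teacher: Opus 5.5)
Your proof is correct and follows essentially the same route as the paper. The paper cites \cite{minguzzi15} for this statement, but it uses the identical computation in the proof of Prop.~\ref{cmop}: $h'(\overline X,\overline Y)=\p_n g_n(X,Y)-g_n(\tilde D_n X,Y)-g_n(X,\tilde D_n Y)=(\tilde D_n g_n)(X,Y)=0$ by Eq.\ (\ref{aao}), and your check that $\tilde D_n X\in TH$ is exactly what makes the passage to the quotient well posed. Your concern about extending $n$ is unnecessary for the main route, since Eq.\ (\ref{aao}) only involves derivatives of $n$ along itself, i.e.\ along the generator; only your alternative Levi-Civita argument via Cor.~\ref{cotg} needs an extension.
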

%
%
%
%
%



Define the {\em expansion} \(\theta := \operatorname{tr} b\) and the {\em shear}
\[
\bar{\sigma} := b - \frac{1}{n-1}\, \theta \, \text{Id},
\]
so that \(\bar{\sigma}\) is the trace-free part of \(b\). For brevity write \(\sigma^2 := \operatorname{tr} \bar{\sigma}^2\). A trivial consequence of the definition is \(\sigma^2 \ge 0\), with equality if and only if \(\bar{\sigma} = 0\).

Taking the trace and the trace-free parts of Eq. (\ref{nkw}) we obtain

\begin{align}
\theta' &= -\operatorname{Ric}(n) - \sigma^2 - \frac{1}{n-1}\, \theta^2 + \kappa\, \theta, \qquad \text{(Raychaudhuri)} \label{raw} \\
\bar{\sigma}' &= -\bar{C} - \Bigl(\bar{\sigma}^2 - \frac{1}{n-1} \operatorname{tr} \bar{\sigma}^2 \; \text{Id}\Bigr) - \frac{2}{n-1}\, \theta\, \bar{\sigma} + \kappa\,\bar{\sigma}. \label{shw}
\end{align}

The term in parentheses is the trace-free part of \(\bar{\sigma}^2\); it vanishes in the physically relevant four-dimensional spacetime case (\(n=3\)).

\section{Totally geodesic null hypersurfaces} \label{mvut}
In this Section we define ``totally geodesic null hypersurfaces'' via the equation $b=0$. This naturally leads to a new object (Prop.\ \ref{cmop}), a 1-form $\omega$ over $H$ which happens to be connected with the restriction of the Ricci 1-form to the hypersurface (Thm.\ \ref{vie}). The equation $i_n \dd \omega=0$ (equiv. $Ric_n\vert_{TH}=0$) is derived from the null convergence condition and a certain equation $\chi=0$ (Thm. \ref{cmqq}). Once this is done many results can be obtained verbatim via a trick that embeds the hypersurface in a Lorentzian manifold (Thm.\ \ref{nner}).

To start with, observe that the Lie derivative $L_n$ makes sense over sections of the bundle $V\to H$. Indeed, if $\bar X$ is such section and $X$ is a representative
\[
L_n \bar X:= \overline{L_n X}
\]
where the expression on the right-hand side does not depend on the representative due to $L_n (\alpha n)=(L_n \alpha)n\propto n$.

\begin{proposition} \label{cmop}
Let $H$ be a $C^2$ null hypersurface and let $n$ be a lightlike tangent vector field on $H$. The following conditions are equivalent (and independent of $n$)
\begin{itemize}
\item[(i)] $b=0$,
\item[(ii)] $L_n \hat g=0$ where $\hat g=g\vert_{TH\times TH}$,
\item[(iii)] $L_n h=0$,
\item[(iv)] there is a 1-form $\omega: H \to T^*H$ such that $D_X n=\omega(X) n$ for every $X\in TH$.
\item[(v)] For a (not necessarily causal) extension of $n$ (and hence for every extension), we have for $X\in T_pH$, $p\in H$, $Y: H \to TH$,
    \[
    \nabla^{g_n}_X Y\in TH.
\]
\end{itemize}
If they hold we have on $H$, $\bar R=0$. In particular, $Ric(n)=0$ and $\bar C=0$.
\end{proposition}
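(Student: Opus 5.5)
I will prove the chain (i)$\Leftrightarrow$(iv), (i)$\Leftrightarrow$(iii)$\Leftrightarrow$(ii), and (iv)$\Leftrightarrow$(v). Then I will read off $\bar R=0$ from the Riccati equation.

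\emph{(i)$\Leftrightarrow$(iv).} By definition $b(\overline X)=\overline{D_X n}$, and we already know $D_X n\in TH$. So $b=0$ means exactly that $D_Xn\propto n$ for every $X\in TH$. Since $D_Xn$ is linear in $X$, the proportionality factor defines a 1-form $\omega$ on $H$. Its regularity follows from that of $n$: write $\omega(X)=g_n(D_Xn,T)/g_n(n,T)$ for any transverse $T$.

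\emph{Independence of $n$.} Take $n'=fn$ with $f>0$. At support vector $n'$, positive homogeneity of $N^\alpha_\mu$ gives $D_X(fn)=(\p_Xf)n+fD_Xn$. Hence $b'=fb$, and all the conditions are unaffected by rescaling.

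\emph{(iii)$\Leftrightarrow$(i).} I compute $L_n h$ on sections $\overline X,\overline Y$ with representatives $X,Y$:
\[
(L_nh)(\overline X,\overline Y)=\p_n g_n(X,Y)-g_n([n,X],Y)-g_n(X,[n,Y]).
\]
Eq.\ (\ref{aao}) applies because $n$ is pregeodesic, so $\p_n g_n(X,Y)=g_n(\tilde D_nX,Y)+g_n(X,\tilde D_nY)$. By Eq.\ (\ref{kkq}), $\tilde D_nX-[n,X]=D_Xn$. Therefore
\[
(L_nh)(\overline X,\overline Y)=h(b\overline X,\overline Y)+h(\overline X,b\overline Y)=2h(b\overline X,\overline Y),
\]
where the last step uses the self-adjointness of $b$. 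Since $h$ is nondegenerate on $V$ (positive definite, as $n^\perp/n$ is spacelike), we get $L_nh=0\Leftrightarrow b=0$.

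\emph{(ii)$\Leftrightarrow$(iii).} We have $\hat g(X,Y)=h(\overline X,\overline Y)$. Also $L_n$ maps sections of $TH$ to sections of $TH$ and commutes with the projection to $V$. Hence $L_n\hat g$ is the pullback of $L_nh$, and the two vanish together.

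\emph{(iv)$\Leftrightarrow$(v).} Extend $n$ arbitrarily off $H$, so $g_n$ is a Lorentzian metric near $H$. The condition $\nabla^{g_n}_XY\in TH$ is equivalent to $g_n(n,\nabla^{g_n}_XY)=0$. Since $g_n(n,Y)=0$ on $H$, differentiating along $X\in TH$ gives
\[
g_n(n,\nabla^{g_n}_XY)=-g_n(\nabla^{g_n}_Xn,Y).
\]
The key input is Corollary~\ref{cotg}, Eq.\ (\ref{mnr1}): $D_Xn=\nabla^{g_n}_Xn$ when $n$ is pregeodesic. One caution is that the extension off $H$ need not be pregeodesic. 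However, only derivatives along $X\in TH$ enter, and $n$ is pregeodesic along $H$. If necessary, I will instead use Eq.\ (\ref{dos}), noting that the $C$-terms vanish when contracted with $n$ because $C(n,\cdot,\cdot)=0$, and that $g(C(X,Y),Dn)$ paired with $n$ also vanishes.

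Granting this, (v) holds iff $g_n(D_Xn,Y)=0$ for all $X,Y\in TH$. That is, $D_Xn\in TH^{\perp_{g_n}}\cap TH=\mathrm{span}(n)$, which is (iv) (with $D_Xn\in TH$ already known). Checking the identification of $\nabla^{g_n}_Xn$ with $D_Xn$ for a non-pregeodesic extension is the main technical point I expect to need care with. Independence from the chosen extension follows because the final criterion involves only $n\vert_H$.

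\emph{Consequences.} Once $b=0$, the Riccati equation (\ref{nkw}) gives $0=b'=-\bar R$. Hence $\bar R=0$, so $\bar C=0$, and by Eq.\ (\ref{sid}) $Ric(n)=\operatorname{tr}\bar R=0$.
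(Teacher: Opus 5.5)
Your proposal is correct and follows essentially the paper's proof. It uses the same Lie-derivative computation via (\ref{kkq}) and (\ref{aao}), producing $2h(b(\bar X),\bar Y)$ for (i)$\Leftrightarrow$(ii)$\Leftrightarrow$(iii), the same identity $g_n(n,\nabla^{g_n}_X Y)=-g_n(D_X n,Y)$ via Corollary~\ref{cotg} for (v), and the Riccati equation (\ref{nkw}) for $\bar R=0$. The extension worry you flag is harmless: setting $Y=s=n$ in (\ref{dos}), the terms $C(D_Xn,n)$ and $g(C(X,n),Dn)^\sharp$ vanish identically, and $C(D_nn,X)$ vanishes wherever $D_nn\propto n$, i.e.\ on $H$, so $D_Xn=\nabla^{g_n}_Xn$ holds there for any extension.
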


The proofs are the same of the Lorentzian version provided the Levi-Civita connection $\nabla$ of Lorentzian geometry is replaced with $D$ or $\tilde D$ at suitable places.

\begin{proof}
The independence of $n$ is clear from (i) since if $n'=e^f n$ then $b'=e^f b$.

(i) $\Rightarrow$ (iv). The condition $b=0$ means $\overline{D_X n}=0$ that is $D_X n \propto n$. The linearity of the left-hand side in $X$ proves the existence of $\omega$.

(iv) $\Rightarrow$ (i). If $D_X n=\omega(X) n$ for every $X\in TH$ then $b=\overline{D_X n}=0$.

(i) $\Leftrightarrow$ (ii). The proof is as in \cite[Lemma 7]{minguzzi21}. Let $Y,Z: H \to TH$, so that $L_n Y, L_n Z:H \to TH$, and using Eq.\ (\ref{kkq}) and  (\ref{aao})
\begin{align*}
L_n &\hat g(X,Y)= L_n ( g_n(X,Y))-g_n(L_n X, Y)- g_n(X,L_n Y)\\
&=\p_n (g_n(X,Y))- g_n(\tilde D_n X, Y)-g_n(X,\tilde D_n Y)+ g_n( D_X n, Y)+g_n(X, D_Y n)\\
&=\tilde D_n g_n(X,Y)+ g_n( D_X n , Y)+g_n(X,D_Y n)= g_n( D_X n, Y)+g_n(X, D_Y n)\\
&=h(b(\bar X), \bar Y)+h(\bar X, b(\bar Y))=2 h(b(\bar X), \bar Y),
\end{align*}
where in the last step we used  that $b$ is self-adjoint with respect to $h$ \cite{minguzzi15}. Since $h$ is non-degenerate the right-hand side vanishes iff $b$ vanishes.

(ii) $\Leftrightarrow$ (iii). This follows from the first line of the previous equation in display where the right-hand side can also be rewritten $L_n (h(\bar X,\bar Y))-h(\overline{L_n X}, \bar Y)-h(\bar X,\overline{L_n Y})=L_n h (\bar X,\bar Y)$.

(i) $\Leftrightarrow$ (v). Let $Y:H\to TH$ be a vector field on $H$, and $X\in T_pH$, $p\in H$. Then, for any extension $n$, using Cor.\ \ref{cotg},
    \begin{align*}
    g_n(n,\nabla^{g_n}_X Y)= \nabla^{g_n}_X(g_n(n, Y))-g_n(\nabla^{g_n}_X n,Y)=-g_n(D_X n,Y)=-h(b(\bar X),\bar Y),
    \end{align*}
    thus $b=0$ iff $\nabla^{g_n}_X Y\in TH$.

The last statement follows from the Riccati equation (\ref{nkw}).
\end{proof}

\begin{definition}
A $C^2$ null hypersurface is said to be {\em totally geodesic} if any of the equivalent conditions of Prop.\  \ref{cmop} hold.
\end{definition}

\noindent {\em Warning.} In spite of the terminology, we did not prove that this property is equivalent to: every geodesic starting tangent to $H$ remains in $H$. In fact, we do not expect this  to be an equivalent characterization. What we proved, via characterization (v), is that $H$ is totally geodesic iff so is for a connection $\nabla^{g_n}$ where there is no need to specify the field extension. We shall return on this important point later on. Observe that in the totally geodesic case the connection induced by $\nabla^{g_n}$ on $TH\to H$ depends on the extension, however, the connection induced by it on the quotient bundle $V=TH/n\to H$ does not depend on the extension because, by \cite[Prop.\ 4]{minguzzi24c}, it is the transverse Levi-Civita connection, a connection which only depends on the quotient metric $h$ (and so not on $n$).

We stress that there seem to be many inequivalent definitions of totally geodesic hypersurface in Finsler geometry \cite{matsumoto85,berk09,bejancu14}. Our definition is better suited for our purposes and seems different from that of previous approaches.
\\

The surface gravity admits the following expression
\[
\kappa=\omega(n).
\]
Both $\omega$ and $\kappa$ depend on the choice of lightlike field $n$. From property (iv) we get that if  $n'= e^f n$, then $\omega'=\omega+\dd f$. Indeed,
\[
\omega'(X) e^fn=D_X (e^f n)=e^f X(f) n+e^f D_X n=[\p_X  f+\omega(X) ] e^f n.
\]
Let $\Lambda(p)\in (0,\infty]$ be the affine length of the geodesic $\gamma$ such that $\gamma(0)=p$, $\dot\gamma(0)=n$. The rescaling of $n$ (gauge transformation) is thus connected with the following contextual changes \cite{moncrief20,reiris21,minguzzi21}
\begin{align}
n'&=e^f n, \label{ok2} \\
\omega'&=\omega + \dd f, \label{ok3}\\
\Lambda'&=e^{-f}\Lambda,  \label{ok4} \\
\kappa'&= e^f(\kappa + \p_n f). \label{ok5}
\end{align}



The following is a Finslerian generalization of  \cite[Lemma 3]{minguzzi21}.\footnote{It should be noted that the statement concerning the bilinear form $\mu$ introduced in that Lemma cannot be generalized due to the replacement of the usual curvature symmetry with a more complicated identity (see the second equation in display in Sec.\ 5.4.1 of \cite{minguzzi14c}). 
}

\begin{theorem} \label{vie}
Let $H$ be a totally geodesic null hypersurface. For vector fields $X,Y\in TH$
\begin{align}
R_n^N(X,Y)&=\dd \omega(X,Y) n, \\
Ric_n(Y)&=\dd \omega(n,Y) . \label{noq}
\end{align}
\end{theorem}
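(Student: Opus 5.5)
The plan is to realize $R^N_n(X,Y)$ as the curvature of an ordinary linear connection on $M$ (near $H$) applied to $n$, and then to use characterization (iv) of Prop.\ \ref{cmop}, $D_Xn=\omega(X)n$ for $X\in TH$. Extend $n$ arbitrarily to a neighbourhood of $H$; this is harmless because only derivatives along $TH$ will appear. Consider the pullback $\overset{n}{\nabla}$ of the Berwald connection along the section $n$. By (\ref{jjd}), with vanishing vertical coefficients, its Christoffel symbols are $G^\gamma_{\alpha\beta}(x,n(x))$. By regularity, $G^\alpha_{\mu\nu}(x,n)n^\mu=G^\alpha_\nu(x,n)=N^\alpha_\nu(x,n)$, so $\overset{n}{\nabla}_X n=D_Xn$. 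In particular $\overset{n}{\nabla}_X n=\omega(X)n$ for $X\in TH$.

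The first key step is to compare the curvature of $\overset{n}{\nabla}$ with $R^{HH}$ of the Berwald connection at support $n$. Differentiating the pulled-back coefficients $G^\gamma_{\alpha\beta}(x,n(x))$ in $x$ gives the horizontal derivative $\delta G/\delta x$ plus a term $\frac{\p G^\gamma_{\alpha\beta}}{\p v^\mu}D n^\mu$. Hence
\[
R^{\overset{n}{\nabla}}(X,Y)Z=R^{HH}(X,Y)Z+\big[\p_v G(D_Xn)\big](Y,Z)-\big[\p_v G(D_Yn)\big](X,Z),
\]
the correction involving the Berwald $HV$-curvature $G^\alpha_{\beta\gamma\mu}=\p G^\alpha_{\beta\gamma}/\p v^\mu$ contracted with $D_Xn$ and $D_Yn$. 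For $X,Y\in TH$ these vectors are proportional to $n$. Since $G^\alpha_{\beta\gamma}$ is positively homogeneous of degree $0$ in $v$, we have $G^\alpha_{\beta\gamma\mu}(x,n)n^\mu=0$, so the correction terms vanish. Using (\ref{cjmc}) we therefore get, at points of $H$ and for $X,Y\in TH$,
\[
R^N_n(X,Y)=R^{HH}(X,Y)n=R^{\overset{n}{\nabla}}(X,Y)n .
\]
I expect this comparison to be the main obstacle: one must keep careful track of which terms come from the $v$-dependence of the coefficients along the section, and check that none survive (the term acting on $n$ through the Christoffel symbols of the second derivative also reduces via regularity).

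The second step is a routine calculation, using that $X,Y$ and $[X,Y]$ are tangent to $H$:
\[
R^{\overset{n}{\nabla}}(X,Y)n=\overset{n}{\nabla}_X(\omega(Y)n)-\overset{n}{\nabla}_Y(\omega(X)n)-\omega([X,Y])n=\big(X\omega(Y)-Y\omega(X)-\omega([X,Y])\big)n ,
\]
since the terms $\omega(Y)\omega(X)n-\omega(X)\omega(Y)n$ cancel. The bracket is exactly $\dd\omega(X,Y)$, which gives the first identity of the theorem.

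For (\ref{noq}), compute the trace of $X\mapsto R^N_n(X,Y)$ with $Y\in TH$ in a basis $\{n,e_1,\dots,e_{n-1},T\}$, where the $e_i$ span $TH$ together with $n$ and $T$ is transverse. For $X\in\{n,e_i\}$ the image is $\dd\omega(X,Y)n$, so only $X=n$ contributes, with diagonal coefficient $\dd\omega(n,Y)$. For $X=T$, the $T$-component of $R^N_n(T,Y)$ equals $g_n(n,R^N_n(T,Y))/g_n(n,T)$, because $g_n(n,\cdot)$ annihilates $TH$. This vanishes by (\ref{llop}). Hence $Ric_n(Y)=\dd\omega(n,Y)$.
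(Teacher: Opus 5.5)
Your proof is correct and follows the same route as the paper. You write $R^N_n(X,Y)$ as the curvature of a pullback connection acting on $n$, discard the correction terms, obtain $\dd\omega(X,Y)\,n$ from $\overset{n}{\nabla}_X n=\omega(X)n$, and take the trace in a basis adapted to $TH$ plus a transverse vector, with the transverse contribution killed by (\ref{llop}). The only variation is in the comparison step. The paper pulls back the Chern--Rund connection and quotes Ingarden--Matsumoto's formula, whose correction terms $-L(D_Xn,Y)+L(D_Yn,X)$ vanish because the Landsberg tensor is annihilated by the support vector. You pull back the Berwald connection and derive the comparison directly by the chain rule; the corrections $G^\alpha_{\beta\gamma\mu}(D_Xn)^\mu$ vanish by degree-zero homogeneity of $G^\alpha_{\beta\gamma}$. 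So the step you flagged as the main obstacle is already complete as written: the quadratic terms agree identically, and your identity holds for any $Z$, not only $Z=n$.
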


\begin{proof}
From Eq.\ (\ref{cjmc}) and  equation  (3.11) of \cite{ingarden93}  we have at support vector $n$,
\[
R_n^N(X,Y)=R^{HH}_{\textrm{ChR}}(X,Y)n=\overset{n}{R}(X,Y)n-L(D_Xn,Y)+L(D_Yn,X) ,
\] where $\overset{n}{R}$ is the curvature of the pullback Chern-Rund connection $\overset{n}{\nabla}$ i.e.\ that of coefficients $\Gamma^\alpha_{\beta \gamma}(x,n(x))$. The last two terms vanish because $D_Xn, D_Yn\propto n$ and the Landsberg tensor in annihilated by the support vector.
Noting that $D_Xn=\overset{n}{\nabla}_X n$ and similarly for $Y$
\begin{align*} 
\overset{n}{R}(X,Y)n&=\overset{n}{\nabla}_X\overset{n}{\nabla}_Yn-\overset{n}{\nabla}_Y\overset{n}{\nabla}_Xn-\overset{n}{\nabla}_{[X,Y]}n \\
&=\overset{n}{\nabla}_XD_Yn-\overset{n}{\nabla}_YD_Xn-D_{[X,Y]}n\\ &=\overset{n}{\nabla}_X[\omega(Y)n]-\overset{n}{\nabla}_Y[\omega(X) n]-\omega([X,Y])n\\
&=\{X[\omega(Y)] n+\omega(Y) D_Xn\}-\{Y[\omega(X)]n +\omega(X) D_Yn\}-\omega([X,Y])n\\
&= \{X[\omega(Y)]-Y[\omega(X)]-\omega([X,Y])\}n=[\dd \omega(X,Y)] n .
\end{align*}

Let $N$ be a $g_n$-lightlike vector field on $H$ such that  $g_n(N,n)=-1$.
Let $e_1,\cdots,$ $ e_{n-1}\in T_pH$ be such that $\textrm{Span}(e_1, \cdots, e_{n-1})=\textrm{ker}g_n(N,\cdot)\cap TH$.
Note that on $TM$ the basis dual to $(N,n,e_1, \cdots, e_{n-1})$ is $( -g_n(n,\cdot),-g_n(N,\cdot), e^1, $ $\cdots e^{n-1})$ where $e^i(e_j)=\delta^i_{j}$, $e^i(n)=e^i(N)=0$, $i,j=1,\cdots, n-1$.
On $TH$ the basis dual to $(n,e_1,\cdots, e_{n-1}))$ is $(-g_n(N,\cdot),  e^1,\cdots, e^{n-1}))$.

The Ricci 1-form is
\[
Ric_n(Y)=\textrm{Tr}(X\to R^N_n(X, Y))=\sum_a b^a (R^N_n( b_a, Y))
\]
where $\{b_a, a=0,1,\cdots,n\}$ is any basis with $\{b^a\}$  the dual basis. In particular, we can let $\{b_a\}=\{N,n,e_1, \cdots, e_{n-1}\}$.
We must be careful in calculating the Ricci tensors, since the traces are  different in $M$ or $H$. Actually, this is not  a problem because in the   calculation of $Ric_n(Y)$ there appears the term
\[
b^0 (R^N_n(b_0, Y) )= -g_n(n, R^N_n( N, Y) )=0
\]
which vanishes due to Eq.\ (\ref{llop}). We conclude that for the calculation of $Ric_n(Y)$ we can use the trace restricted to $TH$ and then for $Y\in TH$
\[
Ric_n(Y) =\textrm{Tr}\{X\mapsto R^N_n(X,Y)\} = \textrm{Tr}\{X\mapsto \dd \omega(X,Y) n\}=\dd \omega(n,Y).
\]
\end{proof}

\begin{corollary} \label{vib}
For a totally geodesic null hypersurface the 1-form $Ric_n\vert_{TH}$ annihilates $n$ and so passes to the quotient to a 1-form $\overline{Ric_n} : H\to V^*$.
\end{corollary}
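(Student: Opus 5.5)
The corollary follows directly from Eq.\ (\ref{noq}) of Theorem \ref{vie}; I would argue in two short steps.

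\emph{Step 1: $Ric_n$ annihilates $n$.} Take $Y=n$ in (\ref{noq}), which is legitimate because $n$ is a vector field tangent to $H$. Then
\[
Ric_n(n)=\dd\omega(n,n)=0,
\]
since $\dd\omega$ is a 2-form and hence antisymmetric. There is also an independent check. We have $Ric_n(n)=Ric(n)$, which vanishes on a totally geodesic $H$ by the last statement of Prop.\ \ref{cmop}, since $\bar R=0$ there. Alternatively, $R_n(n)=0$ by Eq.\ (\ref{mki}). I would mention this consistency check only in passing.

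\emph{Step 2: passing to the quotient.} The restriction $Ric_n\vert_{TH}$ is a linear form on each $T_pH$ that vanishes on the line spanned by $n$. It therefore factors through the projection $T_pH\to V_p=T_pH/\!\sim$. Concretely, set
\[
\overline{Ric_n}(\bar X):=Ric_n(X).
\]
This is well defined: if $X'=X+an$, then $Ric_n(X')=Ric_n(X)+a\,Ric_n(n)=Ric_n(X)$ by linearity and Step 1.

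Finally, I would note that the construction is compatible with rescalings of $n$. If $n'=e^f n$, then $\omega'=\omega+\dd f$, so $\dd\omega'=\dd\omega$. Hence $\overline{Ric_{n'}}=e^f\,\overline{Ric_n}$, so the vanishing of the quotient form does not depend on the choice of $n$.

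There is no real obstacle here. The only point needing care is that (\ref{noq}) holds for $Y$ tangent to $H$, and $n$ is such a vector. That is exactly why the statement concerns the restriction of $Ric_n$ to $TH$ and not the full 1-form on $TM$.
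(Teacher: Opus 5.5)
Your argument is correct, but your main justification differs from the paper's. The paper does not use Eq.~(\ref{noq}) here. It observes that $Ric_n(n)=\textrm{Tr}(X\mapsto R^N_n(X,n))=\operatorname{tr}R_n=Ric(n)$, and this vanishes by Prop.~\ref{cmop}: $b=0$ gives $\bar R=0$ through the Riccati equation (\ref{nkw}), and then $Ric(n)=\operatorname{tr}\bar R=0$ by (\ref{sid}). This is what you list as your ``independent check''. The paper's route is lighter, since it needs only the Riccati equation and the trace identity (\ref{sid}). It avoids the pullback Chern--Rund/Landsberg computation behind Theorem~\ref{vie}. Your route via $\dd\omega(n,n)=0$ depends on Theorem~\ref{vie}. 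In return it identifies the quotient form explicitly, $\overline{Ric_n}(\bar Y)=\dd\omega(n,Y)$, so $\overline{Ric_n}$ is the quotient of $i_n\dd\omega$. That is exactly how the corollary is used afterwards, in the equivalence of $Ric_n\vert_{TH}=0$ with $i_n\dd\omega=0$. Your well-definedness check is fine. So is the rescaling remark: $Ric_{e^f n}=e^f Ric_n$, which also follows directly from the positive homogeneity of degree one of $R^\mu{}_{\mu\alpha}$.

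You should delete one aside: ``Alternatively, $R_n(n)=0$ by Eq.~(\ref{mki})'' does not give $Ric_n(n)=0$. The quantity $Ric_n(n)$ is the trace of $R_n$, not its value on $n$. Identity (\ref{mki}) holds at every support vector, including lightlike $v$ with $Ric(v)>0$, so it only removes the $n$-diagonal entry from the trace. To conclude you also need (\ref{mko}) to remove the transverse $N$-entry, and $\bar R=0$ for the block in $TH$. That combination is precisely (\ref{sid}) together with Prop.~\ref{cmop}.
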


\begin{proof}
From Prop.\ \ref{cmop} $Ric(n)=Ric_n(n)=0$ and the claim follows.
\end{proof}

\begin{proposition}
For a totally geodesic null hypersurface  the following conditions are all equivalent
\begin{itemize}
\item[(i)] $Ric_n\vert_{TH}=0$,
\item[(ii)] $Ric_n(\cdot) \propto g_n(n,\cdot)$ on $H$,
\item[(iii)] $\overline{Ric_n}=0$.
\end{itemize}
\end{proposition}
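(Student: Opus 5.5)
The argument is essentially linear algebra at each point $p\in H$. It rests on two facts: $T_pH=\ker g_n(n,\cdot)$, and $Ric_n(n)=0$ on $H$ by Corollary \ref{vib} (equivalently, $Ric(n)=0$ from Prop.\ \ref{cmop}). I will prove (i) $\Leftrightarrow$ (iii) and then (i) $\Leftrightarrow$ (ii).

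For (i) $\Leftrightarrow$ (iii), recall that by Corollary \ref{vib} the form $\overline{Ric_n}$ is defined by $\overline{Ric_n}(\bar X)=Ric_n(X)$ for $X\in TH$. This is well defined because $Ric_n(n)=0$. Since the projection $TH\to V$ is surjective, $\overline{Ric_n}=0$ holds if and only if $Ric_n(X)=0$ for every $X\in TH$, which is exactly (i).

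For (ii) $\Rightarrow$ (i), suppose $Ric_n(\cdot)=\lambda\, g_n(n,\cdot)$ at $p$ for some $\lambda\in\mathbb{R}$. For $X\in T_pH=\ker g_n(n,\cdot)$ we get $Ric_n(X)=\lambda g_n(n,X)=0$.

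For (i) $\Rightarrow$ (ii), suppose the covector $Ric_n\in T_p^*M$ vanishes on the hyperplane $T_pH$. The annihilator of a hyperplane is one-dimensional. It contains the nonzero covector $g_n(n,\cdot)$, which is nonzero by nondegeneracy of $g_n$, and whose kernel is $T_pH$ by the definition of a null hypersurface. Hence $Ric_n=\lambda\, g_n(n,\cdot)$ at $p$.

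One can compute $\lambda$ explicitly by choosing $N$ with $g_n(N,n)=-1$, as in the proof of Theorem \ref{vie}; this gives $\lambda=-Ric_n(N)$. If $n$ is $C^1$, this also shows that $\lambda$ is a function on $H$ of the same regularity.

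There is no real obstacle here. The only point needing care is that $Ric_n$ is a covector on $T_pM$ rather than on $T_pH$. The key facts are that $T_pH$ is exactly $\ker g_n(n,\cdot)$ and that $Ric_n$ annihilates $n$; the latter is what makes $\overline{Ric_n}$ well defined in (iii).
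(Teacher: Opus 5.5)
Your proof is correct and follows essentially the same route as the paper: (i)$\Leftrightarrow$(iii) by the definition of $\overline{Ric_n}$ through representatives in $TH$, (ii)$\Rightarrow$(i) because $T_pH=\ker g_n(n,\cdot)$, and (i)$\Rightarrow$(ii) because a covector vanishing on the hyperplane $\ker g_n(n,\cdot)$ must be proportional to $g_n(n,\cdot)$. Your explicit formula $\lambda=-Ric_n(N)$ is a correct extra not present in the paper and is not needed for the statement.
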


\begin{proof}
(i) $\Rightarrow$ (ii). If $Ric_n\vert_{TH}=0$ then $\ker Ric_n\supset TH=\ker g_n(n,\cdot)$ thus $Ric_n(\cdot) \propto g_n(n,\cdot)$.
(ii) $\Rightarrow$ (i). If $X\in TH$, then $g_n(n,X)=0$ hence $Ric_n(X)=0$.
(iii) $\Rightarrow$ (i). For each $X\in TH$, $Ric_n(X)=\overline{Ric_n}(\bar X)=0$.
(i) $\Rightarrow$ (iii). For each $\bar X\in V$ let $X\in TH$ be a representative, then $\overline{Ric_n}(\bar X)=Ric_n(X)=0$
\end{proof}

The following result has no Lorentzian counterpart

\begin{proposition} \label{cnph}
 On a totally geodesic null hypersurface $H$, let $n$ be, as usual, a non-vanishing future-directed lightlike vector field tangent to it. We have at support vector $n$ the identity $R^\alpha{}_\beta A^\beta{}_{\alpha \mu}=0$  over $H$ where $R^\alpha{}_\beta$ is the curvature endomorphism and $A_{\alpha \beta \gamma}$ is any tensor  annihilated by the Liouville vector field  in the first two indices ($A_{\alpha \beta \gamma} v^\alpha=A_{\alpha \beta \gamma} v^\beta=0$) e.g.\ the Cartan torsion $C_{\alpha \beta \gamma}$ or the Landsberg tensor  $L_{\alpha \beta \gamma}$.

Furthermore,    $R^\alpha{}_\beta B_{\alpha}\propto n_\beta$  where $B$ is any 1-form field annihilated by the Liouville vector field, e.g.\ the mean Cartan torsion $I_{\alpha}$ or the mean Landsberg tensor  $J_{\alpha}$.
\end{proposition}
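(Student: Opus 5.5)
The plan is to pin down the image and the kernel-related structure of the endomorphism $R_n$ on $T_pM$, $p\in H$, using only Prop.\ \ref{cmop} and Prop.\ \ref{xll}. We know $\bar R=0$, so $R_n(X)\propto n$ for every $X\in T_pH$. Next we extend to all of $T_pM$. Fix a transverse vector $N$ with $g_n(N,n)=-1$, and write $R_n(N)=Y+cn$-type decompositions. By (\ref{mko}), $g_n(n,R_n(N))=0$, so $R_n(N)\in T_pH$. Then, for $X\in T_pH$, the symmetry (\ref{mkp}) gives
\[
g_n(X,R_n(N))=g_n(N,R_n(X))=\lambda(X)\,g_n(N,n)=-\lambda(X),
\]
where $R_n(X)=\lambda(X)n$. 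Taking $X=n$ and using $R_n(n)=0$ (\ref{mki}) we get $\lambda(n)=0$.

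The key step is to show that $\lambda$ vanishes on $T_pH$ as well, or at least to control it. Since $\bar R=0$, the quotient endomorphism says nothing more about $\lambda$, so I would use symmetry inside $TH$: for $X,Z\in T_pH$ we have $g_n(Z,R_n(X))=\lambda(X)g_n(Z,n)=0$. This is consistent and gives no extra information, so $\lambda$ need not vanish. Hence, in components, $R^\alpha{}_\beta=n^\alpha \rho_\beta+ w^\alpha \nu_\beta$, where $\nu_\beta=g_n(n,\cdot)_\beta=:n_\beta$ is the 1-form with kernel $TH$, $\rho$ is a 1-form with $\rho(n)=0$ (extending $\lambda$ from $TH$), and $w=R_n(N)-\rho(N)n$ lies in $TH$. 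Concretely, the image of $R_n$ lies in $\mathrm{span}(n)+\mathrm{span}(R_n(N))$, and the part of $R_n$ not landing in $\mathrm{span}(n)$ is proportional to $n_\beta$. Using symmetry of $g_n(\cdot,R_n\cdot)$ once more, I expect to be able to write
\[
R^\alpha{}_\beta=n^\alpha \rho_\beta+\rho^\alpha n_\beta+c\, n^\alpha n_\beta
\]
with $\rho^\alpha$ the $g_n$-raised version of $\rho$. This normal form, obtained from (\ref{mki})--(\ref{mkp}) together with $\bar R=0$, is the main obstacle. Checking it requires carefully verifying that the $TH$-component of $R_n(N)$ equals $\rho^\sharp$ modulo $n$, which follows from the displayed identity above.

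With the normal form in hand, both claims are immediate contractions. For the first, $R^\alpha{}_\beta A^\beta{}_{\alpha\mu}$ contracts every term against either $n^\alpha$ in the first index of $A$ or $n_\beta$, i.e.\ $n^\beta$ in the second index after lowering and raising with $g_n$. Since $A$ is annihilated by the support vector $n$ in its first two slots, every term vanishes. Care is needed with index positions: $A^\beta{}_{\alpha\mu}n_\beta=A_{\beta\alpha\mu}n^\beta=0$ and $A^\beta{}_{\alpha\mu}n^\alpha=0$. For the second claim, $R^\alpha{}_\beta B_\alpha=(B_\alpha n^\alpha)\rho_\beta+(B_\alpha\rho^\alpha)n_\beta+c(B_\alpha n^\alpha)n_\beta=(B\cdot\rho)\,n_\beta$, which is proportional to $n_\beta$. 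The examples (Cartan torsion, Landsberg tensor and their traces) satisfy the annihilation hypotheses by their standard properties, such as $L_{\alpha\beta\gamma}v^\gamma=0$ together with total symmetry.
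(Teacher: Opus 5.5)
Your proof is correct and is essentially the paper's argument without a frame. The paper works in the null frame $(N,n,e_i)$ and records $R^\alpha{}_-=R^+{}_\beta=R^i{}_j=0$, which is what your decomposition encodes; the annihilation properties of $A$ and $B$ then kill every surviving term.

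The symmetry (\ref{mkp}) and the full normal form you flag as the main obstacle are not needed. The cruder form $R^\alpha{}_\beta=n^\alpha\rho_\beta-w^\alpha n_\beta$ already suffices, and it follows from $R_n(T_pH)\subset\mathrm{span}(n)$ alone. Note the minus sign there, which comes from $g_n(N,n)=-1$; for the same reason $R_n(N)\equiv-\rho^\sharp$ modulo $n$, not $+\rho^\sharp$. With this form the first contraction vanishes and the second is $-B(w)\,n_\beta$.
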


\begin{proof}
Let $p\in H$ and let us  introduce on $T_pM$ and its dual the same basis and cobasis introduced in the proof of Thm.\ \ref{vie}. We shall assign to them the labels $+,-,1,2,\cdots, n-1$. So, for instance, $e_+=N$, $e^+=-g_n(n,\cdot)$, $e_-=n$, $e^-=-g_n(N,\cdot)$.

We have the following component expressions at support vector $n$.
Equation (\ref{mki}) reads $R^\alpha{}_-=0$. Equation (\ref{mko}) reads $R^+{}_\beta=0$. The identity $A^\beta{}_{\alpha \mu} n^\alpha=0$ reads $A^\beta{}_{- \mu}=0$. The identity $A^\beta{}_{\alpha \mu} n_\beta=0$ reads $A^+{}_{\alpha \mu}=0$. Let us focus on the non-vanishing contributions for  $R^\alpha{}_\beta A^\beta{}_{\alpha \mu}$. Here $\alpha$ can't be - due to $A^\beta{}_{- \mu}=0$ but can't be + due to  $R^+{}_\beta=0$. Moreover, $\beta$ can't be - due to $R^\alpha{}_-=0$ and can't be + due to $A^+{}_{\alpha \mu}=0$. Thus all non-vanishing contribution come from $R^i{}_j A^j{}_{i \mu}$ for $i,j=1,\ldots, n-1$.
However, Prop.\ \ref{cmop} gives $\bar R=0$ which means $R_n(X) \propto n$ for every $X\in TH$. This implies $R^i{}_j=0$ and we have finished.

For the second statement, as noted we have: $R^\alpha{}_-=R^+{}_\beta=R^i{}_j=0$, $B_-=0$. Thus $R^\alpha{}_\beta B_{\alpha} e^\beta=R^\alpha{}_+ B_\alpha e^+\propto g_n(n,\cdot)$.
\end{proof}

Now, we are going to show that the condition $Ric_n\vert_{TH}=0$ can be implied by the null convergence condition, at least in some Finslerian gravitational theories.
This conclusion  depends on some mild additional conditions that ideally come  from  the gravitational dynamical equation imposed.

\begin{proposition} \label{kkj}
Let $N^\alpha_\beta$ be a torsionless ($\frac{\p }{\p v^\gamma} N^\alpha_\beta=\frac{\p }{\p v^\beta} N^\alpha_\gamma$) non-linear connection positive homogeneous of degree one (e.g.\ $N^\alpha_\beta=\frac{\p}{\p v^\beta} G^\alpha$ where $G^\alpha$ a spray, possibly induced by a pseudo-Finsler Lagrangian \cite[Eq.\ (36)]{minguzzi14c}).  Let $R^\alpha{}_{\beta \gamma}$ be its non-linear curvature, then the equation
\begin{align} \label{one}
-2\chi_\alpha:=\Big(\frac{\p}{\p v^{\gamma}} R^\gamma{}_{\alpha\nu}\Big) v^\nu&= 0,
\end{align}
is equivalent to
\begin{align}\label{two}
\frac{\p }{\p v^\alpha} Ric(v)&= 2 (Ric_v)_\alpha.
\end{align}
In the context of pseudo-Finsler geometry, it is also equivalent to
\begin{equation} \label{three}
 \frac{1}{2} \frac{\p^2 }{\p v^\alpha \p v^\beta} Ric(v)=  R^{HH}_{Ber} {}^\mu{}_{\alpha \mu \beta},
\end{equation}
where $R^{HH}_{Ber} $ is the horizontal-horizontal  curvature tensor of the Berwald connection.
\end{proposition}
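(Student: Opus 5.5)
Both equivalences come from differentiating $Ric(v)=R^\mu{}_{\mu\nu}(x,v)v^\nu$ in the fibre directions and using homogeneity plus the torsionless condition.

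\textbf{Equivalence of (\ref{one}) and (\ref{two}).} Since $N^\alpha_\beta$ is positively homogeneous of degree one, $\delta/\delta x^\alpha$ preserves fibre homogeneity degrees. By (\ref{nsp}), $R^\mu{}_{\alpha\beta}$ is therefore positively homogeneous of degree one in $v$. Differentiating $Ric(v)=R^\mu{}_{\mu\nu}v^\nu$ gives
\[
\frac{\p}{\p v^\alpha}Ric(v)=R^\mu{}_{\mu\alpha}+\Big(\frac{\p}{\p v^\alpha}R^\mu{}_{\mu\nu}\Big)v^\nu .
\]
To handle the last term, I would use the Bianchi-type identity for the non-linear curvature of a torsionless connection,
\[
\frac{\p}{\p v^\alpha}R^\mu{}_{\beta\gamma}+\frac{\p}{\p v^\beta}R^\mu{}_{\gamma\alpha}+\frac{\p}{\p v^\gamma}R^\mu{}_{\alpha\beta}=0 .
\]
I would derive it either from the Jacobi identity for the brackets of $\delta/\delta x^\alpha$ and $\p/\p v^\beta$, or directly from $R^\mu{}_{\alpha\beta}=\delta_\alpha N^\mu_\beta-\delta_\beta N^\mu_\alpha$. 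In the direct route one differentiates in $v^\gamma$, uses $\p_{v^\gamma}N^\mu_\beta=G^\mu_{\beta\gamma}$ symmetric in its lower indices, and checks that all terms cancel under cyclic summation. Contracting with $\mu=\beta$ and then with $v^\gamma$ gives
\[
\Big(\frac{\p}{\p v^\alpha}R^\mu{}_{\mu\nu}\Big)v^\nu=-\Big(\frac{\p}{\p v^\mu}R^\mu{}_{\nu\alpha}\Big)v^\nu-\Big(\frac{\p}{\p v^\nu}R^\mu{}_{\alpha\mu}\Big)v^\nu .
\]
The last term equals $-(\p_{v^\nu}R^\mu{}_{\mu\alpha})v^\nu=-R^\mu{}_{\mu\alpha}$ by antisymmetry and degree-one homogeneity. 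The middle term equals $(\p_{v^\mu}R^\mu{}_{\alpha\nu})v^\nu=-2\chi_\alpha$ by antisymmetry. Collecting terms,
\[
\frac{\p}{\p v^\alpha}Ric(v)=2R^\mu{}_{\mu\alpha}-2\chi_\alpha=2(Ric_v)_\alpha-2\chi_\alpha ,
\]
so (\ref{one}) holds if and only if (\ref{two}) holds. The main obstacle is getting the index placements and signs in this identity right.

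\textbf{Equivalence with (\ref{three}).} For the Berwald connection I would compute $R^{HH}_{Ber}{}^\mu{}_{\alpha\gamma\beta}$ from (\ref{comc}) with $H=G^\mu_{\alpha\beta}=\p_{v^\alpha}N^\mu_\beta$ and vanishing vertical coefficients. The standard result is $R^{HH}_{Ber}{}^\mu{}_{\alpha\gamma\beta}=\p_{v^\alpha}R^\mu{}_{\gamma\beta}$, which is consistent with (\ref{cjmc}) by Euler's theorem; this can be verified directly from (\ref{vll})-type formulas with $G$ in place of $\Gamma$. Hence
\[
R^{HH}_{Ber}{}^\mu{}_{\alpha\mu\beta}=\p_{v^\alpha}R^\mu{}_{\mu\beta}=\p_{v^\alpha}(Ric_v)_\beta .
\]

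\textbf{The two directions.} If (\ref{two}) holds, differentiating it in $v^\beta$ gives (\ref{three}). Conversely, if (\ref{three}) holds, contract it with $v^\alpha$. By Euler's theorem, $Ric(v)$ is homogeneous of degree two, so $\p_{v^\beta}Ric$ has degree one and the left side becomes $\tfrac12\p_{v^\beta}Ric$. The right side becomes $(\p_{v^\alpha}(Ric_v)_\beta)v^\alpha=(Ric_v)_\beta$, since $Ric_v$ has degree one. This yields (\ref{two}).
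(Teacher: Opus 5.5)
Your proposal is correct and is essentially the paper's proof: the cyclic Bianchi identity for the torsionless non-linear curvature, contracted and combined with degree-one homogeneity, gives $\frac{\p}{\p v^\alpha}Ric(v)=2(Ric_v)_\alpha-2\chi_\alpha$, i.e.\ Eq.\ (\ref{cnng}). The identities $\frac{\p}{\p v^\beta}R^\alpha{}_{\gamma\nu}=R^{HH}_{Ber}{}^\alpha{}_{\beta\gamma\nu}$ and $R^{HH}_{Ber}{}^\alpha{}_{\beta\gamma\nu}v^\beta=R^\alpha{}_{\gamma\nu}$ then give the equivalence with (\ref{three}) by differentiation and contraction, exactly as you describe; the paper cites these identities and the Bianchi identity from earlier work, whereas you propose to verify them directly. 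The only blemish is a wording slip: with its minus sign, the last term of your contracted identity equals $+R^\mu{}_{\mu\alpha}$, not $-R^\mu{}_{\mu\alpha}$ as you state, and $+R^\mu{}_{\mu\alpha}$ is what your correctly collected formula actually uses.
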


\begin{remark}
Note that the affine sphere condition $I_\alpha=0$, defining the context of affine sphere spacetimes  \cite{minguzzi15e}, implies the symmetry of the Berwald Ricci tensor \cite[Thm.\ 5.2]{minguzzi14c}, $R^{HH}_{Ber} {}^\mu{}_{\alpha \mu \beta}=R^{HH}_{Ber} {}^\mu{}_{\beta \mu \alpha}$, and hence, by the Bianchi identities \cite[Eq.\ (73)]{minguzzi14c}, $R^{HH}_{Ber} {}^\mu{}_{ \mu \alpha \beta} v^\beta=0$, namely Eq.\ (\ref{one}).
\end{remark}

\begin{proof}
By the contracted Bianchi identities \cite[Eq.\ (64)]{minguzzi14c}  $\frac{\p}{\p v^{\alpha}} R^\gamma{}_{\beta \gamma}+\frac{\p}{\p v^{\beta}} R^\gamma{}_{\gamma \alpha}+\frac{\p}{\p v^{\gamma}} R^\gamma{}_{\alpha\beta }=0$
we get, contracting with $v^\alpha$ and using the positive homogeneity of degree one of the non-linear curvature,
\[
 R^\gamma{}_{\beta \gamma} +\Big(\frac{\p}{\p v^{\beta}} Ric(v) -R^\gamma{}_{\gamma \beta}\Big)-\Big(\frac{\p}{\p v^{\gamma}} R^\gamma{}_{\beta \alpha}\Big) v^\alpha=0,
\]
which reads
\begin{equation} \label{cnng}
-2\chi_\alpha=\frac{\p}{\p v^{\alpha}} Ric(v) -2R^\gamma{}_{\gamma \alpha},
\end{equation}
from which the equivalence of Eqs.\ (\ref{one}) and (\ref{two}) is clear. In pseudo-Finsler geometry it is well known that  \cite[Eq.\ (69)]{minguzzi14c}  $\frac{\p}{\p v^{\beta}} R^\alpha{}_{\gamma \nu}=R^{HH}_{Ber} {}^\alpha{}_{\beta\gamma \nu}$, thus  (\ref{two}) implies  (\ref{three}) by differentiation. It is also well known that \cite[Eq.\ (67)]{minguzzi14c}  $R^{HH}_{Ber} {}^\alpha{}_{\beta\gamma \nu}  v^\beta= R^\alpha{}_{\gamma \nu}$, thus  (\ref{three}) implies (\ref{two}) contracting with $v^\alpha$ and using positive homogeneity.
\end{proof}

In Finsler geometry the quantity $\chi_\alpha$ was studied in \cite{li15} (see also \cite{mo09}) where it was proved that  the symmetry  $R^{HH}_{Ber} {}^\mu{}_{\alpha \mu \beta}=R^{HH}_{Ber} {}^\mu{}_{\beta \mu \alpha}$, which by the first Bianchi identity is equivalent to $R^{HH}_{Ber} {}^\mu{}_{\mu \alpha \beta}=0$,  is also equivalent to $\chi_\alpha=0$ \cite[Thm.\ 1.1]{li15}. In fact, the following formulas hold
\[
\chi_\alpha=-\frac{1}{2}R^{HH}_{Ber} {}^\mu{}_{\mu \alpha \nu}v^\nu, \qquad \frac{\p \chi_\beta}{\p v^\alpha}-\frac{\p \chi_\alpha}{\p v^\beta}= R^{HH}_{Ber}{}^\mu{}_{\mu\alpha \beta}= R^{HH}_{Ber}{}^\mu{}_{\alpha\mu \beta}- R^{HH}_{Ber}{}^\mu{}_{\beta \mu \alpha}.
\]
The form of the identities of Prop.\ \ref{kkj} that will prove most useful for us is  (\ref{two}).

The content of this work will make a strong case in support of this equation (jointly with the null convergence condition and hence $Ric=0$ in the vacuum case), as a  Finslerian gravitational equation. In fact, thanks to it, we shall be able to reproduce many results of physical relevance such as the constancy of temperature (surface gravity) over compact horizons  in analogy to the standard Lorentzian theory. A key step is provided by the following theorem

\begin{theorem} \label{cmqq}
Suppose that $(M,\mathscr{L})$ satisfies the null convergence condition and Eq.\  (\ref{two}). Let $H$ be a null hypersurface and let $n$ be a tangent future-directed lightlike field. Suppose $Ric(n)=0$ on $H$ (e.g.\ because $H$ is totally geodesic or $(M,\mathscr{L})$ is Ricci flat) then  we have $Ric_n\vert_{TH}=0$.
\end{theorem}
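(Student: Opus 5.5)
The idea is a minimization argument on the null cone. The null convergence condition says that the function $v\mapsto Ric(v)$, restricted to the closed future causal cone (and in particular to the future light cone), is non-negative. At every $p\in H$ the hypothesis $Ric(n)=0$ says that $n$ attains the minimum value $0$ on the future light cone at $p$. Hence the differential of $Ric$ at $n$, restricted to directions tangent to the light cone, must vanish. By Eq.\ (\ref{two}), this differential is exactly $2Ric_n$, so $Ric_n$ vanishes on the tangent space to the light cone at $n$. The remaining step is to identify that tangent space with $T_pH$.

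First I fix $p\in H$ and the vertical data at $p$. The future light cone in $T_pM$ is the set $\{v:\mathscr{L}(x,v)=0\}$ within the closure of the future timelike component. Near $n$ it is a smooth hypersurface of $T_pM$, because $\frac{\p \mathscr{L}}{\p v^\alpha}(n)=g_n(n,\cdot)_\alpha\neq 0$. This uses the homogeneity identity $\frac{\p\mathscr{L}}{\p v^\alpha}=g_{\alpha\beta}v^\beta$ and the nondegeneracy of $g_n$. Its tangent space at $n$ is therefore $\ker g_n(n,\cdot)$, which equals $T_pH$ by the definition of a null hypersurface.

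Next, for any $X\in T_pH$ I pick a smooth curve $c(s)$ on the future light cone with $c(0)=n$ and $\dot c(0)=X$. The function $s\mapsto Ric(c(s))$ is $\ge 0$ by the null convergence condition and equals $0$ at $s=0$. It therefore has an interior minimum there, so
\[
0=\frac{\dd}{\dd s}\Big\vert_{s=0}Ric(c(s))=\frac{\p Ric}{\p v^\alpha}(n)X^\alpha=2(Ric_n)_\alpha X^\alpha ,
\]
where the last equality is Eq.\ (\ref{two}). This gives $Ric_n\vert_{TH}=0$.

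For the parenthetical clause of the statement: when $H$ is totally geodesic, $Ric(n)=0$ holds by Prop.\ \ref{cmop}; in the Ricci flat case it holds trivially.

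The main obstacle is to justify the smoothness of the cone near $n$ and the validity of differentiating $Ric$ in $v$ there. Both follow from the standing assumption that $\mathscr{L}$ is smooth on $TM\backslash 0$ with regular light cone. If the cone were non-smooth, one would need the caveat already mentioned in the paper, namely that $n$ avoids the singular directions.
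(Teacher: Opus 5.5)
Your proposal is correct and is essentially the paper's proof: $n$ minimizes $Ric_p$ on the future light cone at $p$, the tangent space of the cone at $n$ is $\ker g_n(n,\cdot)=T_pH$, and Eq.\ (\ref{two}) then gives that $Ric_n$ annihilates $T_pH$. The only difference is phrasing: the paper states the last step as a Lagrange-multiplier relation $\dd Ric_p\vert_{v=n}\propto g_n(n,\cdot)$, while you differentiate along curves in the cone.
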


\begin{remark}
We recall that, by Theorem \ref{vie}, for a totally geodesic hypersurface $H$, the equation $Ric_n\vert_{TH}=0$ is equivalent to
\begin{equation}
i_n\dd \omega=0.
\end{equation}
\end{remark}

The equation $Ric_n\vert_{TH}=0$  is crucially related to  the machinery of the ribbon argument and essential to prove the dichotomy between degenerate and non-degenerate horizons, among other results. The reader will see little of these arguments and tools in what follows as we shall be able to sidestep most lengthy proofs by reducing the problems to their Lorentzian, hence already established, versions.

The theorem implies that the 2-form $\dd \omega$, section of $T^*H\otimes T^*H$, passes to the quotient to a section of $V^*\otimes V^*$, $V=TH/n$. The quotient space $V$ is  also endowed with the scalar product $h$ which is the quotient of $g$.

\begin{proof}
Let $p\in H$ and let us denote with $E_p\subset T_pM\backslash 0$ the set of future-directed lightlike vectors at $p$. Let $\mathscr{L}_p:=\mathscr{L}\vert_{T_pM\backslash 0}$,  $Ric_p:=Ric\vert_{T_pM\backslash 0}$.
At $p\in H$ we have
$Ric_p(n)=0$ so the function $Ric_p$ reaches a minimum on $E_p$ at $n$. What follows is a kind of Lagrange multiplier argument. The tangent space to $E_p$ at $n$, namely $\ker  \dd \mathscr{L}_p \vert_n= \ker g_n(n,\cdot)$ (note that  $g_n(n,\cdot)$ does not vanish  because $g_n$ is Lorentzian, hence non-degenerate, and $n\ne 0$, because it is lightlike) must be included in $\ker \dd Ric_p\vert_{v=n}$ (otherwise it would be possible to reach positive and negative values of $Ric_p$ in a neighborhood of $n$ while remaining in $E_p$) which implies $ \dd Ric_p\vert_{v=n} \propto g_n(n,\cdot)$ or using Eq.\ (\ref{two}) $2(Ric_n)_\alpha=\frac{\p}{\p v^\alpha }Ric \vert_{v=n} \propto  (g_n)_{\alpha \beta} n^\beta$. Evaluating on vectors in $TH$  gives the desired result.
\end{proof}

\begin{remark} \label{llrt}
Another, less physically compelling, but still interesting  argument in favour of (\ref{two}) passes through the observation of its algebraic convenience. Via the equivalent equation (\ref{three}) some proposals for Finslerian gravitational equations that have been proposed in the literature actually coincide. Indeed, most of them have been written, following an algebraic analogy, in the form
\begin{quote}
{\em Ricci tensor} - $\frac{1}{2}$ {\em  curvature scalar} $\times$  {\em metric} = {\em  source},
\end{quote}
since the scalar curvature is the contraction of the Ricci tensor with the metric,  some arbitrariness is resolved if the Ricci tensor is identified (there is still the ambiguity in the choice of Finsler connection, not to mention the possibility of adopting these equations in a Sasakian rather than Finslerian framework).
For the Berwald connection, Eq.\  (\ref{three}) shows that two possible choices (the Akbar-Zadeh choice on the left, and the standard one on the right) coincide (and also the latter is thus symmetric) \cite[Prop.\ 3.4]{mo09}\cite{sevim23}. There is also a third choice, motivated by energy-momentum conservation, that we   introduced in \cite[Sec.\ 5.4.2]{minguzzi14c} (denoted with $\mathring{Ric}$ in\footnote{The first two indices of our curvature are switched with respect to the conventions of that work as we use the convention of \cite{misner73}.}  \cite{sevim23}) which, unfortunately, does not coincide with the other two.
\end{remark}

Using Cor.\ \ref{cotg} we prove the following key result

\begin{theorem}
Let $H$ be a $C^2$ null hypersurface on the spacetime $(M,\mathscr{L})$.
Let $T$ be a $g_n$-timelike vector field on $H$ such that $g_n(n,T)<0$.
Extend $n$ and $T$ to  vector fields in a neighborhood $U$ of $H$,
then $H$ is a $C^2$ null  hypersurface on the spacetime $(U,g_n)$ where the time orientation is provided\footnote{As on $H$, $g_n(T,T)<0$ and $g_n(n,T)<0$, these inequalities hold by continuity in a neighborhood of $H$. In particular, the continuity of $T$ allows to define a time-orientation compatible with that of $n$ at $H$. Observe that we are not assuming that $n$ is causal in a neighborhood of $H$.} by $T$.



Hypersurface $H$ is a totally geodesic on  $(M,\mathscr{L})$ iff it is   totally geodesic on
  $(U,g_n)$.
 In fact, the induced $\omega$ and hence the surface gravity is the same in the two interpretations because they share the same 1-form $\omega$ as
\begin{equation} \label{lppo}
D_X n=\nabla^{g_n}_X n =\omega(X) n.
\end{equation}
\end{theorem}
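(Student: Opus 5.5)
The proof will rest almost entirely on Corollary \ref{cotg}, applied to the section $s=n$ of $TU\backslash 0$ obtained by extending $n$ to $U$. The pullback metric $n^*g$ is exactly the pseudo-Riemannian metric $g_n$ on $U$, and we denote its Levi-Civita connection by $\nabla^{g_n}$.

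\emph{Step 1: $H$ is null in $(U,g_n)$.} At points of $H$ the bilinear form $g_n$ coincides with the Finsler metric at support $n$. Since $n$ is lightlike for $\mathscr{L}$, we have $g_n(n,n)=2\mathscr{L}(n)=0$. Moreover $\ker g_n(n,\cdot)=TH$ by the definition of null hypersurface. Hence $H$ is a $C^2$ hypersurface whose $g_n$-normal $n$ is $g_n$-lightlike and tangent, so $H$ is null for the Lorentzian metric $g_n$.

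For the time orientation, the footnote's continuity argument shows that $T$ is $g_n$-timelike on a neighborhood of $H$; shrinking $U$ if needed, $T$ time-orients $(U,g_n)$. The condition $g_n(n,T)<0$ then makes $n$ future-directed in this orientation. The only point needing care is that $g_n$ is Lorentzian on $U$; this holds because $n$ is nonzero on $U$ after shrinking, and $g_v$ is Lorentzian for every $v\neq 0$.

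\emph{Step 2: equality of the connections on $n$.} The integral curves of $n$ on $H$ are pregeodesics by Theorem \ref{ppp}, so $D_n n=\kappa n$ on $H$. The extension of $n$ off $H$ need not be pregeodesic, however, while Corollary \ref{cotg} asks for $s$ pregeodesic. To get around this I will use Eq.\ (\ref{dos}) directly with $Y=n$. Its right-hand side is $C(D_n s,X)+C(D_X s,n)-g(C(X,n),Ds)^\sharp$. The Cartan torsion is annihilated by its support vector $s=n$, which kills the last two terms. The first term $C(D_nn,X)$ also vanishes on $H$, because $D_nn=\kappa n\propto n$ there.

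This gives $\nabla^{g_n}_X n=\overset{n}{\nabla}{}^{\textrm{ChR}}_X n=D_X n$ for every $X\in T_pH$, $p\in H$, which only requires $X$ tangent. If the argument has a delicate point, it is precisely this verification that only the on-$H$ pregeodesic property is used. Alternatively, one can note that Corollary \ref{cotg} holds pointwise wherever $D_ss\propto s$ is combined with Eq.\ (\ref{dos}).

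\emph{Step 3: conclusion.} By Prop.\ \ref{cmop}(iv), $H$ is totally geodesic in $(M,\mathscr{L})$ iff $D_Xn=\omega(X)n$ for all $X\in TH$. By the Lorentzian version of the same characterization, $H$ is totally geodesic in $(U,g_n)$ iff $\nabla^{g_n}_Xn=\omega'(X)n$. In the Lorentzian case the Finsler objects reduce to Levi-Civita ones; note also that the null vector of $H$ for $g_n$ is the same $n$. By Step 2 these two conditions coincide, and they do so with the same 1-form, so $\omega=\omega'$, which is Eq.\ (\ref{lppo}). Evaluating at $X=n$ gives $\kappa=\omega(n)$ in both interpretations, so the surface gravities agree. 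Equivalently, the totally geodesic equivalence also follows from Prop.\ \ref{cmop}(v), since the Lorentzian second fundamental form criterion $\nabla^{g_n}_XY\in TH$ is literally condition (v).
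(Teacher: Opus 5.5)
Your proposal is correct and follows the paper's proof. Nullity of $H$ in $(U,g_n)$ comes from $g_n(n,n)=0$ and $TH=\ker g_n(n,\cdot)$, Eq.\ (\ref{lppo}) comes from Corollary \ref{cotg}, and characterization (iv) (or (v)) of Proposition \ref{cmop}, read in both settings with the same $\omega$, gives the equivalence. Your use of Eq.\ (\ref{dos}) with $Y=n$ makes explicit that only $D_nn=\kappa n$ on $H$ is needed, since the extension off $H$ need not be pregeodesic; this sharpens the paper's terse ``immediate from Cor.\ \ref{cotg}'' rather than taking a different route.
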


%

Let $\gamma : I \to H$, $t\mapsto \gamma(t)$ be such that  $\dot \gamma=\alpha n$, $\alpha(t)>0$ and so have image in the integral curves of $n$. The equation (\ref{mnr1}) also implies
\[
\nabla^{g_n}_{\dot \gamma} \dot \gamma=\nabla^{g_{\dot \gamma}}_{\dot \gamma} \dot \gamma=D_{\dot \gamma} \dot \gamma.
\]
Thus $\gamma(t)$ is a parametrized lightlike geodesic for $(U,g_n)$ iff it is a parametrized geodesic for $(M,\mathscr{L})$. This implies that the function $\Lambda(p)$ has the same interpretation in terms of affine length of the geodesic starting with tangent $n$ at $p$ in both spacetimes  $(U,g_n)$ and $(M,\mathscr{L})$.


\begin{proof}
The equation $g_n(n,n)=0$ at $H$ implies that the tangent $n$ to $H$ is lightlike according to the spacetime  $(U,g_n)$. The equation $TH=\ker g_n(n,\cdot)$ implies that $TH$ is a null hypersurface in $(U,g_n)$.
Equation (\ref{lppo}) is immediate from Cor.\ \ref{cotg}.

Characterization (iv) of Proposition \ref{cmop} read in the Finslerian version for $(M,\mathscr{L})$, or in the restricted standard Lorentzian version for $(U,g_n)$ shows that $H$ is totally geodesic in one spacetime if so is in the other, where $\omega$ is the same in both interpretations. Alternatively, use characterization (v) of the same Proposition.
\end{proof}

It is important to note that in Theorem \ref{cmqq} we  proved the equation $i_n\dd \omega=0$ via the Finslerian null convergence condition and Eq.\  (\ref{two}), not via the Lorentzian null convergence condition $g_n(X,X)=0, g_n(n,X)\le 0  \Rightarrow Ric^{g_n}(n,n)\ge 0$ which might not hold (though the Ricci scalars are the same, see Eq.\ (\ref{clop})).

The main idea is expressed by the following broad consequence
\begin{theorem} \label{nner}
 Results for the null hypersurface $H$ in $(U,g_n)$, when expressed in terms of the totally geodesic property, the metric $g_n$, the quotient metric $h$, the vector field $n$, the 1-form $\omega$, the surface gravity $\kappa$, the parametrization of geodesics hence the function $\Lambda$, the Ricci scalar $Ric(n)$ and curvature endomorphism $R_n$ on $H$, the topological properties of the flow on $H$, pass through to $H$ as null hypersurface of $(M,\mathscr{L})$.
\end{theorem}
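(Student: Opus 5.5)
The statement is a meta-theorem: a dictionary between the two settings. The plan is to show that every quantity in the list is literally the same object whether computed in $(U,g_n)$ or in $(M,\mathscr{L})$. Any Lorentzian theorem whose hypotheses and conclusion involve only these objects then holds verbatim in the Finsler case. I would go through the list item by item, relying on the previous theorem and on Cor.\ \ref{cotg}.

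First, the geometric data of $H$.
\begin{itemize}
\item The null and totally geodesic properties, the common 1-form $\omega$, and $\kappa=\omega(n)$ are exactly the content of the previous theorem, via Eq.\ (\ref{lppo}).
\item The induced degenerate metric $\hat g=g_n\vert_{TH\times TH}$ is the same tensor in both settings, since the Finsler one is defined through $g_n$. Hence the quotient metric $h$ of Eq.\ (\ref{cmpt}) coincides, and so does $V=TH/n$.
\item Parametrized generators coincide by the remark after the previous theorem, which uses $\nabla^{g_n}_{\dot\gamma}\dot\gamma=D_{\dot\gamma}\dot\gamma$ for $\dot\gamma=\alpha n$. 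Hence $\Lambda$, completeness and incompleteness of generators, and the gauge rules (\ref{ok2})--(\ref{ok5}) agree.
\item The flow of $n$ is the same vector field on $H$, so its topological and dynamical properties (closed orbits, accumulation, holonomy of generators) are identical.
\end{itemize}

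Second, the curvature. Here I would invoke the pseudo-Riemannian characterization of the curvature endomorphism with $s=n$. This needs $n$ to be pregeodesic in a neighborhood, not only on $H$, so the first step is to choose the extension of $n$ off $H$ to be pregeodesic for the spray. For instance, extend $n$ to a transverse slice and flow along the spray, which is possible locally near $H$. We then get $R_n(X)=R^{g_n}(X,n)n$ and $Ric(n)=Ric^{g_n}(n,n)$ on $H$. This step must be handled carefully because the extension of $n$ in the statement is arbitrary.
\begin{itemize}
\item If a Lorentzian result involves $Ric^{g_n}(n,n)$ or $R^{g_n}(\cdot,n)n$ only on $H$, I would check that, for a totally geodesic $H$, these restrictions depend only on data along $H$.
\item One way to check this is through the Riccati equation (\ref{nkw}) and the identity $R^N_n(X,Y)=\dd\omega(X,Y)n$ of Thm.\ \ref{vie}. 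Both express the relevant curvature on $TH$ through $\omega$ and $b$, which are intrinsic to $H$.
\item The same reasoning gives the Lorentzian identity $Ric^{g_n}(n,Y)=\dd\omega(n,Y)$ for $Y\in TH$. Hence the Finslerian condition $Ric_n\vert_{TH}=0$ coincides with the Lorentzian hypothesis $i_n\dd\omega=0$, even though, as stressed in the text, the Lorentzian null convergence condition for $g_n$ may fail.
\end{itemize}

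The main obstacle is the scope of the statement itself. It is not a single implication but a transfer principle, and a Lorentzian proof could secretly use $g_n$ off $H$: geodesics transverse to $H$, the ambient null convergence condition of $g_n$, or global causality of $(U,g_n)$. So the proof must make explicit that only results phrased in the listed quantities transfer. I would state this as a formal criterion: two data sets $(H,n,\hat g,h,\omega,\kappa,\Lambda,R_n\vert_H)$ that coincide yield the same truth value for any statement expressed through them. The theorem then follows from the coincidences established above, together with the independence of the extension of $n$, which is guaranteed by Prop.\ \ref{cmop}(v) and the transverse Levi-Civita remark.
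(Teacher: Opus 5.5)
Your proposal is correct and takes the same route as the paper, which gives no separate proof but justifies the statement by this same dictionary: the preceding theorem (same null and totally geodesic property, same $\omega$ and hence $\kappa$ via Eq.~(\ref{lppo})), the remark that parametrized generators and hence $\Lambda$ coincide, $h$ being defined through $g_n$, and Cor.~\ref{cotg} with the pseudo-Riemannian characterization of the curvature endomorphism (Eq.~(\ref{clop})). Your caution about the extension is a point the paper glosses over, but a pregeodesic extension off $H$ is not actually needed (and a spray-flow extension would anyway have to be rescaled to agree with $n$ on $H$ when $\kappa\neq 0$): since $n$ is pregeodesic on the flow-invariant set $H$, the difference $A(X,Y):=\nabla^{g_n}_XY-\overset{n}{\nabla}{}^{\textrm{ChR}}_XY$ of Eq.~(\ref{dos}) satisfies $A(n,n)\equiv 0$ and $A(\cdot,n)=A(n,\cdot)=0$ on $H$, which kills every correction term, and the Landsberg terms vanish as in the proof of Thm.~\ref{vie}, so $R^{g_n}(X,n)n=R_n(X)$ on $H$ for every $X\in T_pM$ and every extension.
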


or

\begin{theorem} \label{nndr}
Let $H$ be a totally geodesic null hypersurface on $(M,\mathscr{L})$ and suppose
$Ric_n\vert_{TH}=0$ (e.g.\ because   the null convergence condition and Eq.\  (\ref{two}) hold).
 To this Lorentz-Finsler case
 apply the same results that would apply in Lorentzian geometry provided they  involve the concepts mentioned in Thm.\ \ref{nner}.
\end{theorem}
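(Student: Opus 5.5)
The statement is a transfer principle, so the plan is to reduce every ingredient to the Lorentzian spacetime $(U,g_n)$ built in the previous theorem. Then check that each concept in the list of Thm.~\ref{nner} means the same thing in $(M,\mathscr{L})$ and in $(U,g_n)$. The one hypothesis a Lorentzian result typically uses that is not automatically available is the Lorentzian null convergence condition for $g_n$. It enters those proofs only through the derived identities $Ric(n)=0$, $\bar R=0$ and $i_n\dd\omega=0$ on $H$. So the proof will consist of verifying these identities directly in $(U,g_n)$ and checking that no further property of $g_n$ off $H$ is used.

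First step: fix an extension of $n$ and a $g_n$-timelike $T$ with $g_n(n,T)<0$ on $H$, and apply the previous theorem. This gives that $H$ is a totally geodesic null hypersurface of $(U,g_n)$, with the same $\omega$ and hence the same $\kappa=\omega(n)$ by Eq.~(\ref{lppo}). It also gives the same generators with the same affine parametrizations, hence the same $\Lambda$, and the same flow of $n$ on $H$, hence the same topological data.

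Second step: since $TH=\ker g_n(n,\cdot)$ in both settings, the quotient bundle $V$ and the metric $h$ of (\ref{cmpt}) coincide, and $b=0$ in both. Because $n$ is pregeodesic on $H$, Cor.~\ref{cotg} and the subsequent corollary give $R_n(X)=R^{g_n}(X,n)n$ at points of $H$, provided the extension is chosen pregeodesic in a neighborhood; I would take the extension geodesic for $\mathscr{L}$ off $H$ if needed. Hence $\bar R$ and $Ric(n)$ agree, and both vanish by Prop.~\ref{cmop}.

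Third step, which I expect to be the main obstacle: identifying the hypothesis $Ric_n\vert_{TH}=0$ with its Lorentzian counterpart $Ric^{g_n}(n,\cdot)\vert_{TH}=0$. These are different tensors, since the Finsler Ricci 1-form is a trace of $R^N_n$, which need not equal $R^{g_n}(\cdot,Y)n$ for $Y$ transverse to $n$. I would avoid comparing curvatures altogether. Apply Thm.~\ref{vie} in the Finsler setting and its Lorentzian analogue ([Lemma 3] of \cite{minguzzi21}) in $(U,g_n)$. Both express the respective restricted Ricci 1-form as $i_n\dd\omega$ with the same $\omega$, so one vanishes iff the other does. Thus $i_n\dd\omega=0$ holds in $(U,g_n)$, and this is exactly the form in which Lorentzian results for compact totally geodesic null hypersurfaces use the energy condition (via the ribbon argument, the degenerate/non-degenerate dichotomy, and so on).

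Finally, I would conclude that any Lorentzian theorem whose hypotheses and conclusions are phrased through the listed objects ($g_n$ on $TH$, $h$, $n$, $\omega$, $\kappa$, $\Lambda$, $Ric(n)$, $R_n$, the flow) holds verbatim for $H\subset(M,\mathscr{L})$. The remaining check is to note that such conclusions do not depend on the arbitrary extension of $n$ or $T$ off $H$.
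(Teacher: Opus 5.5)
Your proposal is correct and follows the paper's own route. You realize $H$ as a totally geodesic null hypersurface of $(U,g_n)$ with the same $\omega$, $\kappa$, $h$, affinely parametrized generators and $\Lambda$. You then transfer the Ricci hypothesis by observing, via Thm.~\ref{vie} and its Lorentzian counterpart (Lemma 3 of \cite{minguzzi21}), that $Ric_n\vert_{TH}=0$ and $Ric^{g_n}(n,\cdot)\vert_{TH}=0$ both amount to $i_n\dd\omega=0$, which is the condition ($\star\star$) under which the Lorentzian results are proved.

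Your extra requirement that the extension be pregeodesic off $H$ only serves to identify $g_n(N,R_n(N))$ with $g_n(N,R^{g_n}(N,n)n)$, the one curvature component not already fixed on $H$ by $\omega$. The quantities actually used ($Ric(n)$, $\bar R$, $i_n\dd\omega$) agree for any extension, so the unproved global existence of such an extension does not affect the argument.
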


This leads to the immediate translation of plenty of results proved in the traditional Lorentzian theory to the Lorentz-Finsler theory.

In some cases these results could have been proved following step by step the original Lorentzian proofs and looking for  some (often non-trivial) adaptation of the arguments. It is true, for instance, that the dichotomy or the topological results on the flow, proved below, could have been approached in this way. However, other results would have been much more difficult to prove. For instance, the argument on the smoothness of $\Lambda$ in \cite{minguzzi21} requires use of a connection $\nabla$ on the bundle $TH\to H$ which is induced from the spacetime connection thanks to the totally geodesic property of $H$ in the Lorentzian theory. However, there is no such natural connection in the Finslerian theory as recalled in the {\em Warning} above.

In other words, in some steps of the proofs one would need to introduce the spacetime $(U,g_n)$ as a tool to construct some useful quantities of non-physical interest. For instance, by  \cite[Lemma 3]{minguzzi21} we have
\begin{theorem}
Let  $H$ be a $C^2$ totally geodesic null hypersurface on a spacetime $(M,\mathscr{L})$. Then extending $n$ as shown above, for
vector fields $X,Y\in TH$
\begin{align}
R^{g_n}(X,Y)n&=\dd \omega(X,Y) n, \\
Ric^{g_n}(Y,n)&=\dd \omega(n,Y) .
\end{align}
Moreover, there is a bilinear form $\mu: H \to T^*H\otimes T^*H$ such that for every $X,Y\in TH$, $R^{g_n}(n,X)Y=\mu(X,Y) n$ and $\mu(X,Y)-\mu(Y,X)=-\dd \omega(X,Y)$.
\end{theorem}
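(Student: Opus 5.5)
The plan is to reduce everything to the Lorentzian manifold $(U,g_n)$, where $n$ has been extended to a neighborhood $U$ of $H$. By the previous theorem, $H$ is a $C^2$ totally geodesic null hypersurface of $(U,g_n)$, with the same 1-form $\omega$ given by Eq.\ (\ref{lppo}): $\nabla^{g_n}_X n=\omega(X)n$ for $X\in TH$. Then the statement is a purely Lorentzian fact about a totally geodesic null hypersurface in a Lorentzian manifold, i.e.\ \cite[Lemma 3]{minguzzi21}. To keep the argument self-contained I would still sketch the computation with $\nabla:=\nabla^{g_n}$.

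First, for $X,Y$ tangent to $H$, I would compute $R^{g_n}(X,Y)n=\nabla_X\nabla_Y n-\nabla_Y\nabla_X n-\nabla_{[X,Y]}n$. Since $[X,Y]\in TH$ and $\nabla_Z n=\omega(Z)n$ on $H$ for tangent $Z$, only derivatives along $H$ occur. The same computation as in the proof of Thm.\ \ref{vie} then gives $\dd\omega(X,Y)\,n$. For the Ricci identity, I would take the null frame $(N,n,e_1,\dots,e_{n-1})$ of Thm.\ \ref{vie}. The term $-g_n(n,R^{g_n}(N,Y)n)$ vanishes by the antisymmetry of the Riemann tensor in its last two slots. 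Hence $Ric^{g_n}(Y,n)$ is the trace over $TH$ of $X\mapsto \dd\omega(X,Y)n$, which is $\dd\omega(n,Y)$.

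For $\mu$, I would use characterization (v) of Prop.\ \ref{cmop}: $\nabla$ preserves tangency to $H$, so for $X,Y,Z$ tangent to $H$, $R^{g_n}(n,X)Y\in TH$. By the pair symmetry of the Riemannian curvature,
\[
g_n(R^{g_n}(n,X)Y,Z)=g_n(R^{g_n}(Y,Z)n,X)=\dd\omega(Y,Z)\,g_n(n,X)=0 .
\]
So $R^{g_n}(n,X)Y$ is $g_n$-orthogonal to $TH$, hence proportional to $n$. This defines $\mu(X,Y)$, bilinear and tensorial. Finally, the first Bianchi identity $R(n,X)Y+R(X,Y)n+R(Y,n)X=0$ gives $\mu(X,Y)-\mu(Y,X)+\dd\omega(X,Y)=0$, which is the stated relation.

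The main obstacle is the mild regularity and extension issue: everything must be independent of how $n$ is extended off $H$. I would handle this by noting that each expression involves only derivatives of $n$ along $TH$ and first derivatives of $g_n$. The curvature of the metric $g_n$ is still tensorial on $U$, so the restriction to $H$ is well defined. The symmetries used (pair symmetry and first Bianchi) hold because $g_n$ is an honest Lorentzian metric on $U$, which is exactly why the analogous Finslerian statement about $\mu$ fails.
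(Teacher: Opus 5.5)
Your proposal is correct and follows the paper's route: the paper also views $H$ as a totally geodesic null hypersurface of $(U,g_n)$ with the same $\omega$ (Eq.\ (\ref{lppo})) and then simply quotes the Lorentzian \cite[Lemma 3]{minguzzi21}, and your sketched verification of that lemma (trace over the null frame, pair symmetry, first Bianchi identity) is sound. One correction to your last paragraph: $R^{g_n}$ involves second derivatives of $g_n$, including transverse ones, so not everything is extension-independent. $R^{g_n}(X,Y)n$ and $Ric^{g_n}(Y,n)$ are independent only because your computation expresses them through $\omega$, whereas $\mu$ (only its antisymmetric part is fixed by $\dd\omega$) can depend on the extension, like the induced connection on $TH$ mentioned in the paper's Warning; this is harmless, since the statement concerns a fixed extension.
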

We see that we have defined a 1-form $\mu$ that could not be defined in the Finslerian spacetime. This would be used in proofs and then removed from useful statements.
It is certainly more convenient to apply Theorem \ref{nner} directly.

Note also that  if $(M,\mathscr{L})$ satisfies $Ric_n\vert_{TH}=0$ by Thm.\ \ref{vie} $i_n\dd \omega=0$ and by the previous theorem  $Ric^{g_n}(n,\cdot)\vert_{TH}=0$ (this is condition ($\star \star$) in \cite{minguzzi21}).

\section{Translation of results} \label{trasl}

We list a number of corollaries which follow from  Lorentzian results obtained in previous work by the author in collaboration with Sebastian Gurriaran and Raymond Hounnonkpe \cite{minguzzi21,minguzzi24c,minguzzi25d}.

For shortness we introduce the following terminology
\begin{definition}
A {\em horizon} is a smooth embedded connected  totally geodesic null hypersurface.
\end{definition}
and the following property
\begin{itemize}
\item[$\star$] On the smooth spacetime $(M,\mathscr{L})$  let  $H$ be a horizon. We assume the `null convergence condition and Eq.\  (\ref{two})' or the weaker condition on the Ricci 1-form: $Ric_n\vert_{TH}=0$.
\end{itemize}

The following comes from \cite[Lemma 2]{minguzzi21}

\begin{lemma} \label{cpx}
Assume ($\star$).
Let $x\colon\mathbb{R}\to H$, $s \mapsto x(s)$,  be an integral curve of $n$, with $x(0)=p$.
The geodesic $\gamma$ starting from $x(\tau)$ with tangent $\dot \gamma=n$ has future affine length
\begin{equation} \label{bud}
\Lambda(x(\tau))=\int_\tau^\infty e^{\int_\tau^\rho \kappa(x(s))\dd s} \dd \rho.
\end{equation}
Thus $\Lambda$ is finite at one point of the integral curve iff it is finite everywhere over it, and in this case it satisfies the differential equation
\begin{equation} \label{biw}
1+\kappa \Lambda+\p_n \Lambda=0.
\end{equation}
In particular, for $\kappa$ constant, we have $\Lambda<\infty$ iff $\kappa<0$, in which case $\Lambda=
- \kappa^{-1}$.
\end{lemma}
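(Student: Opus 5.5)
The plan is to compute the affine length directly by reparametrizing the integral curve $x(s)$ of $n$ into an affine geodesic. By (\ref{kap}) we have $D_n n=\kappa n$ along $x$. I will look for a parameter $t(s)$ such that $\gamma(t)=x(s(t))$ satisfies $D_{\dot\gamma}\dot\gamma=0$.

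\emph{Reparametrization.} With $\dot\gamma=s'(t)\, n$, the positive homogeneity of $N^\alpha_\mu$ gives
\[
D_{\dot\gamma}\dot\gamma=\bigl(s''+s'^2\kappa\bigr)n .
\]
So $\gamma$ is affine iff $\frac{\dd t}{\dd s}=C\,e^{\int^s\kappa}$. Imposing $\dot\gamma(0)=n$ at $s=\tau$ fixes the normalization, and we get
\[
t(s)=\int_\tau^s e^{\int_\tau^\rho\kappa(x(\sigma))\dd\sigma}\dd\rho .
\]
Alternatively, by Thm.\ \ref{nner}, one can simply quote \cite[Lemma 2]{minguzzi21} for $(U,g_n)$. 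That translation is legitimate because $\Lambda$, $\kappa$ and the geodesic parametrization coincide in the two interpretations, via (\ref{mnr1}).

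\emph{Affine length.} The future affine length is $\lim_{s\to\infty}t(s)$, which is (\ref{bud}). For this to be the whole affine length, the integral curve must be defined for all $s\in\mathbb{R}$, and the geodesic must stay in $H$ on its maximal domain. This is the main obstacle. Locally the geodesic with initial velocity $n$ is the generator, by Theorem \ref{ppp} and uniqueness of geodesics. The issue is that the geodesic might leave the generator if the generator ends while the affine parameter remains finite.

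The statement assumes $x\colon\mathbb{R}\to H$, i.e.\ a complete integral curve, so the affine geodesic is defined on the image interval $[0,t(\infty))$ within $H$. If $t(\infty)<\infty$, I still need the geodesic to be inextendible at that point. My first attempt: an extension would give a limit point $q$ with $\gamma(t)\to q$. Then $x(s)\to q$ as $s\to\infty$ while $\dot\gamma$ converges to a nonzero vector, so $n$ would have to vanish or blow up relative to the affine velocity. I would try to rule this out by the flow of $n$ being complete in an embedded $H$. I expect this to require exactly the argument in \cite{minguzzi21}, which I would import via Thm.\ \ref{nner}. Note that assumption ($\star$) is not actually needed for formula (\ref{bud}) itself, only for the surrounding theory.

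\emph{Remaining claims.} Finiteness at $x(\tau)$ vs.\ at $x(\tau')$: the integrals differ by a finite segment times the positive factor $e^{\int_\tau^{\tau'}\kappa}$, so one is finite iff the other is. Differentiating (\ref{bud}) in $\tau$ gives
\[
\p_n\Lambda=-1-\kappa\Lambda,
\]
which is (\ref{biw}). For constant $\kappa$, $\int_\tau^\infty e^{\kappa(\rho-\tau)}\dd\rho$ is finite iff $\kappa<0$, in which case it equals $-\kappa^{-1}$.
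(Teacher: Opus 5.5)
Your proof is correct, and it is more self-contained than the paper's. The paper gives no argument of its own. It takes \cite[Lemma 2]{minguzzi21} for $H$ viewed as a null hypersurface of the Lorentzian manifold $(U,g_n)$ and transfers it through Thm.~\ref{nner}, which is the route you list as an alternative. Your direct computation needs less. The identity $D_{\dot\gamma}\dot\gamma=(s''+s'^2\kappa)\,n$ uses only (\ref{kap}) and the positive degree-one homogeneity of $N^\alpha_\mu$, with $s'>0$. It therefore shows that (\ref{bud}), the finiteness dichotomy, (\ref{biw}) and the constant-$\kappa$ case hold for the generators of any $C^2$ null hypersurface. You need neither ($\star$), nor total geodesicity, nor the auxiliary metric $g_n$. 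The paper's route buys only uniformity with the rest of Section~\ref{trasl}. Its transfer step rests on the same pregeodesic computation, through (\ref{mnr1}).

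On your ``main obstacle'': the paper compares $(U,g_n)$ and $(M,\mathscr{L})$ only for parametrized curves running inside $H$. So $\Lambda$ has to be read as the affine length of the generator, and then $\lim_{s\to\infty}t(s)$ is already the complete answer.

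Your proposed way of upgrading this to the maximal geodesic of $M$, via completeness of the flow of $n$ on an embedded $H$, cannot work. Take Minkowski space with $H=\{t=x,\ t<0\}$ and $n=-t(\p_t+\p_x)$. Then ($\star$) holds, the flow of $n$ is complete, and $\kappa=-1$, so (\ref{bud}) gives $\Lambda=1$. Yet the Minkowski geodesic leaves $H$ at $t=0$ and is complete.

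What is actually needed is that $H$ be closed, for instance compact, which is the case in all later applications. With that, your first attempt closes in one line without importing anything. Suppose $\gamma(t)\to q$ and $\dot\gamma(t)=s'(t)\,n(\gamma(t))\to v$ as $t\to t(\infty)<\infty$. Then $q\in H$ and $n(\gamma(t))\to n(q)\neq 0$, so $s'(t)$ converges to a finite limit. Hence $s(t)$ stays bounded, contradicting $s(t)\to\infty$.
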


The following comes from \cite[Lemma 6]{minguzzi21}

\begin{lemma} \label{loq}
Assume ($\star$). We have
$L_n\omega=\dd \kappa$ and hence $L_n \dd \omega=0$.
\end{lemma}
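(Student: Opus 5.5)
The identity follows from Cartan's formula combined with $\kappa=\omega(n)$ and $i_n\dd\omega=0$. Recall from the discussion after Prop.\ \ref{cmop} that $\kappa=\omega(n)$. Assumption ($\star$) gives $Ric_n\vert_{TH}=0$, and Eq.\ (\ref{noq}) of Theorem \ref{vie} then gives
\[
i_n\dd\omega=0 .
\]
On $H$ we regard $\omega$ as a 1-form on the manifold $H$ and $n$ as a vector field tangent to $H$, so the Lie derivative $L_n$ is the intrinsic one on $H$. Cartan's formula gives
\[
L_n\omega=\dd(i_n\omega)+i_n\dd\omega=\dd\kappa+0=\dd\kappa .
\]
Applying $\dd$, which commutes with $L_n$, yields $L_n\dd\omega=\dd\dd\kappa=0$.

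Alternatively, one can invoke Theorem \ref{nndr} directly. The embedding theorem shows that $H$ is totally geodesic in $(U,g_n)$ with the same $\omega$ and $\kappa$. The subsequent theorem shows $Ric^{g_n}(n,\cdot)\vert_{TH}=\dd\omega(n,\cdot)=0$, which is condition ($\star\star$) of \cite{minguzzi21}. Lemma 6 of \cite{minguzzi21} then applies verbatim.

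The only point requiring care is regularity. Since $H$ is smooth (a horizon) and $n$ is smooth, $\omega$ is smooth, because it is defined by $D_Xn=\omega(X)n$ with smooth data. Hence $\dd\omega$, $\dd\kappa$ and Cartan's formula all make sense. Beyond this there is no real obstacle; the substantive input is Theorem \ref{vie} together with $Ric_n\vert_{TH}=0$.
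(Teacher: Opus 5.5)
Your argument is correct, but your main route is not the one the paper takes. Your ``alternative'' paragraph is the paper's route. The paper gives no computation of its own: it imports the statement from \cite[Lemma 6]{minguzzi21} via the translation principle of Theorem \ref{nndr}. There $H$ is regarded as a totally geodesic null hypersurface of the Lorentzian manifold $(U,g_n)$, where by Eq.\ (\ref{lppo}) $\omega$ and $\kappa$ are unchanged and where $Ric_n\vert_{TH}=0$ becomes condition ($\star\star$) of that reference.

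Your primary argument stays intrinsic to $H$ and to the Finsler setting. Theorem \ref{vie} already turns $Ric_n\vert_{TH}=0$ into $i_n\dd\omega=0$, and Cartan's formula with $\kappa=\omega(n)$ does the rest. Under the first alternative in ($\star$), you reach $Ric_n\vert_{TH}=0$ only through Theorem \ref{cmqq}, using $Ric(n)=0$ from Prop.\ \ref{cmop}; this step is worth stating explicitly.

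Your route buys economy and transparency. It needs no extension of $n$ or of a transverse timelike field off $H$, and no auxiliary metric $g_n$. It also shows the lemma is pure exterior calculus on $H$, valid for every smooth choice of $n$.

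The paper's route buys uniformity. A single translation principle handles the whole list of Section \ref{trasl}. That list includes statements such as the smoothness of $\Lambda$, for which, as the paper notes, a direct Finslerian proof would be considerably harder because $TH\to H$ carries no natural induced connection.
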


The following comes from \cite[Lemma 2 and Prop.\ 1]{minguzzi21}

\begin{proposition} \label{vos}
Assume ($\star$) and $H$ is compact. The validity of the property ``the integral $\int_{\gamma_p([0,s))} \omega$ over the future-directed generator  starting from $p$ converges to  $-\infty$'' (resp. $+\infty$, is upper bounded, is lower bounded) does not depend on the generator $\gamma_p$ considered. Moreover, if the horizon  admits one future incomplete generator, then $\int_{\gamma([0,\infty))} \omega=-\infty$ over every generator.
\end{proposition}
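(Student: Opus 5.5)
The plan is to reduce the statement to its Lorentzian counterpart through Theorem \ref{nner} (equivalently Theorem \ref{nndr}), and to check along the way that every ingredient of the Lorentzian proof is one of the concepts listed there. We extend $n$ and a $g_n$-timelike field $T$ to a neighbourhood $U$ of $H$ and regard $H$ as a compact totally geodesic null hypersurface of $(U,g_n)$. By the preceding theorem this interpretation shares the same $\omega$, the same $\kappa=\omega(n)$ and the same affine parametrization of generators, hence the same $\Lambda$. By Theorem \ref{vie} together with the subsequent theorem on $R^{g_n}$, the hypothesis $Ric_n\vert_{TH}=0$ in ($\star$) yields $i_n\dd\omega=0$ and $Ric^{g_n}(n,\cdot)\vert_{TH}=0$. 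This is exactly condition ($\star\star$) of \cite{minguzzi21}, under which Lemma 2 and Prop.\ 1 there are proved. The statement involves only $\omega$ integrated along generators and the (in)completeness of generators, both invariant under the translation, so the Lorentzian result transfers verbatim.

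For a self-contained argument I would proceed as follows. First, use Lemma \ref{loq}: $L_n\dd\omega=0$ and $i_n\dd\omega=0$, so $\dd\omega$ is invariant under the flow $\phi_s$ of $n$. Compare two generators $\gamma_p,\gamma_q$ with $p,q$ nearby. Connect $p$ to $q$ by a short transverse curve $c\subset H$ and flow it to obtain the ribbon $\phi_{[0,s]}(c)$. Stokes' theorem then gives
\[
\int_{\gamma_p([0,s))}\omega-\int_{\gamma_q([0,s))}\omega=\int_{c}\omega-\int_{\phi_s(c)}\omega+\int_{\text{ribbon}}\dd\omega ,
\]
and the last term vanishes because the ribbon is tangent to $n$ and $i_n\dd\omega=0$. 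Compactness of $H$ is then needed to bound $\int_{\phi_s(c)}\omega$ uniformly in $s$, so that the two integrals differ by a bounded amount.

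This bounding step is the main obstacle, since $\phi_s(c)$ may stretch. I would try the Lorentzian device: $\phi_s^*\omega-\omega=\dd(\int_0^s\kappa\circ\phi_\tau\,\dd\tau)$ by $L_n\omega=\dd\kappa$, so the defect becomes a difference of a function at the endpoints $\phi_s(p)$, $\phi_s(q)$. This function equals the very integrals being compared, which makes the step circular. To break the circularity one instead exploits boundedness of $\omega$ on compact $H$ along flowed short curves of bounded $h$-length; note $h$ is flow-invariant by $L_nh=0$ (Prop.\ \ref{cmop}(iii)), so $c$ can be chosen $h$-short and its flowed image stays $h$-short. The $n$-component of $\phi_s(c)$ is then handled by reparametrizing it along generators. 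This yields the generator independence locally, and connectedness of $H$ spreads it to all generators.

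The last claim follows by combining the above with formula (\ref{bud}) of Lemma \ref{cpx}. Incompleteness means $\Lambda<\infty$, which forces $\int\omega=\int\kappa\,\dd s\to-\infty$ along that generator; otherwise $e^{\int\kappa}$ would not decay and the integral defining $\Lambda$ would diverge. A Cesàro-type refinement would handle the case where the $\liminf$ is finite. Finally, the first part propagates the divergence to $-\infty$ along every generator.
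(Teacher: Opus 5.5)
Your first paragraph is exactly the paper's route. The paper does not prove the proposition independently; it imports it from \cite[Lemma 2 and Prop.\ 1]{minguzzi21} via Theorem \ref{nndr}. It uses that $Ric_n\vert_{TH}=0$ gives $i_n\dd\omega=0$, hence condition ($\star\star$) for $(U,g_n)$, while $\omega$, $\kappa$ and the affine parametrizations of the generators (hence completeness) are unchanged. That part is correct and suffices.

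The optional self-contained sketch, however, rests on a false step. At equal flow time $s$, $\int_{\phi_s(c)}\omega$ is in general unbounded in $s$, and $\int_{\gamma_p([0,s))}\omega-\int_{\gamma_q([0,s))}\omega$ can grow linearly. For example, take a non-degenerate compact horizon with closed generators (e.g.\ Misner space times a flat torus) and a field $n$ with constant $\kappa\neq 0$. Replace $n$ by $e^{f}n$ with $n(f)=0$ but $f$ not constant. The new surface gravity $e^{f}\kappa$ is constant on each generator but differs between nearby ones. Your circularity remark is the symptom: $\phi_s(c)$ acquires an unbounded $n$-component.

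The repair you gesture at works, but it changes the conclusion. Slide each point of $\phi_s(c)$ along its generator so that the tangent lies in a fixed complement $W=\ker\sigma$, $\sigma(n)=1$. Then:
\begin{itemize}
\item by $L_nh=0$, the slid curve has length (for any fixed Riemannian metric on $H$) bounded by a constant times the $h$-length of $c$, so $\int\omega$ over it is bounded;
\item the ribbon is still ruled by generators, so $\dd\omega$ contributes nothing;
\item one obtains $\bigl|\int_{\gamma_p([0,s))}\omega-\int_{\gamma_q([0,\tau(s)))}\omega\bigr|\le C$ for an increasing homeomorphism $\tau$ of $\mathbb{R}$, not at equal parameters.
\end{itemize}
Monotonicity of $\tau$ holds because the slid curves are integral curves of a line field on the strip, so they do not cross. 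They do not blow up because $\frac{d}{d\rho}\sigma(d\phi_\rho X)=\dd\sigma(n,d\phi_\rho X)$ is bounded by a constant times $h(\bar X,\bar X)^{1/2}$, giving at most linear growth. This suffices because the four listed properties are invariant under such reparametrizations.

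Likewise, in the last step the needed fact is not a Ces\`aro argument. It is that $s\mapsto\int_0^s\kappa$ is Lipschitz, since $\kappa$ is bounded on compact $H$. If its $\limsup$ were $>-\infty$, then $e^{\int_0^s\kappa}$ would be bounded below on infinitely many disjoint unit intervals, and $\Lambda$ would be infinite. The delicate case is $\liminf=-\infty<\limsup$, not a finite $\liminf$.
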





%

The following comes from \cite[Cor.\ 2 and 4]{minguzzi21}

\begin{corollary} \label{cbv}Assume ($\star$) and $H$ is compact.
Assume $H$ admits a future incomplete generator.
All geodesic generators are future incomplete. Moreover,  $\Lambda$ is smooth.
\end{corollary}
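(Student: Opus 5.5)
The plan is to reduce the statement to its Lorentzian counterpart \cite[Cor.\ 2 and 4]{minguzzi21} using the embedding trick of Theorems \ref{nner} and \ref{nndr}. I would then check that every ingredient of the Lorentzian proof is one of the quantities preserved by the embedding.

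\emph{Construction.} Fix a smooth future-directed lightlike field $n$ tangent to $H$. Choose a $g_n$-timelike field $T$ on $H$ with $g_n(n,T)<0$, and extend $n$ and $T$ to a neighborhood $U$ of $H$. By the theorem preceding Theorem \ref{nner}, $H$ is then a compact, smooth, connected, totally geodesic null hypersurface in the Lorentzian spacetime $(U,g_n)$, and it carries the same $\omega$, the same $\kappa$ and the same flow of $n$.

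\emph{Transfer of hypotheses and conclusions.} Two identifications are needed.
\begin{itemize}
\item The affine parametrizations of the generators coincide in the two spacetimes, by Eq.\ (\ref{mnr1}). Hence $\Lambda$ is the same function, and the notion of a future incomplete generator ($\Lambda<\infty$) is the same. So the hypothesis of the corollary transfers to $(U,g_n)$, and the conclusions (incompleteness of all generators, smoothness of $\Lambda$) transfer back.
\item The Lorentzian results in \cite{minguzzi21} assume the condition ($\star\star$), namely $Ric^{g_n}(n,\cdot)\vert_{TH}=0$, in place of an energy condition. By Theorem \ref{vie}, ($\star$) gives $Ric_n\vert_{TH}=0$, hence $i_n\dd\omega=0$. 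By the Lorentzian identity $Ric^{g_n}(Y,n)=\dd\omega(n,Y)$ stated just above, this yields ($\star\star$).
\end{itemize}
It matters that the Lorentzian null convergence condition for $g_n$ is never invoked; it may fail, and only the Ricci 1-form identity is needed.

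\emph{Argument on the Finsler side.} For the incompleteness claim, I would first note that Proposition \ref{vos} already gives $\int\omega=-\infty$ along every generator, once one generator is incomplete. Then I would use Eq.\ (\ref{bud}) of Lemma \ref{cpx} to turn this into finiteness of $\Lambda$. The difficulty is that divergence to $-\infty$ alone does not make the integral in (\ref{bud}) finite. One therefore needs a uniform bound, of the kind obtained in \cite{minguzzi21} from compactness and $L_n\omega=\dd\kappa$ (Lemma \ref{loq}).

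\emph{Main obstacle.} The smoothness of $\Lambda$ is the step I expect to be hardest. Its Lorentzian proof uses the connection induced on $TH$ by the spacetime connection, which has no Finslerian counterpart (see the Warning after Proposition \ref{cmop}). So rather than redo it on the Finsler side, I would invoke \cite[Cor.\ 4]{minguzzi21} directly in $(U,g_n)$, where the induced connection $\nabla^{g_n}$ on $TH$ exists by characterization (v) of Proposition \ref{cmop}. The smooth function obtained there is the same $\Lambda$, by the identification of affine parameters above. The only care needed is that the extension of $n$ be smooth, so that $g_n$ is a smooth Lorentzian metric on $U$; this holds since $H$ is smooth and $\mathscr{L}$ is smooth.
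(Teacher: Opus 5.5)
Your proposal is correct and follows the paper's route. The paper derives this corollary by embedding $H$ in $(U,g_n)$ via Theorems \ref{nner}--\ref{nndr}, where $\omega$, $\kappa$, $\Lambda$ and condition ($\star\star$) are preserved, and then invoking \cite[Cor.\ 2 and 4]{minguzzi21}; this includes your point that the smoothness of $\Lambda$ needs the connection on $TH$, which exists only in the Lorentzian picture.

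Your separate Finsler-side paragraph on incompleteness is unnecessary, and as you note it is not closed as written, because the transfer you already set up gives that claim directly from \cite[Cor.\ 2]{minguzzi21}. Also, strictly, it is ($\star$) (via Theorem \ref{cmqq}) that gives $Ric_n\vert_{TH}=0$, while Theorem \ref{vie} converts this into $i_n\dd\omega=0$.
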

The last result expresses the {\em dichotomy} in the smooth category \cite{reiris21,minguzzi21} (see Isenberg and Moncrief \cite{moncrief08} for the original result  in the vacuum analytic setting). Since it holds also in the past version, the generators are either all complete or all incomplete (and in the latter case all future incomplete or all past incomplete).



The following comes from \cite[Def.\ 4]{minguzzi21} and corresponding equivalence proof.

\begin{theorem} \label{der}
 Assume ($\star$) and $H$ is compact.
The following properties are equivalent:
\begin{enumerate}
\item $H$ admits a future incomplete generator (and hence every generator is future incomplete),
\item The smooth vector field $n$ can be chosen such that $\kappa<0$ over $H$,
\item The smooth vector field $n$ can be chosen such that $\kappa=-1$ over $H$,
\item For some choice of smooth vector field $n$ (and hence for  every choice) there is a generator over which $\int_0^\infty \kappa(s)\dd s=-\infty$ (and hence the same is true for every generator).
\end{enumerate}
\end{theorem}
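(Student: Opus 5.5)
The plan is to prove the cycle of implications $1\Rightarrow 4\Rightarrow 2\Rightarrow 3\Rightarrow 1$, relying on Lemma \ref{cpx}, Lemma \ref{loq}, Proposition \ref{vos} and Corollary \ref{cbv}. Alternatively, the whole statement could be obtained from the Lorentzian counterpart in \cite{minguzzi21} by invoking Theorem \ref{nndr}. Every notion involved ($n$, $\kappa$, $\omega$, $\Lambda$, incompleteness of generators) is among those listed in Theorem \ref{nner}, so the Lorentzian equivalence proof transfers verbatim to $(U,g_n)$. I would still sketch the direct argument to make the transfer transparent.

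The easy steps are as follows.

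\textbf{(3) $\Rightarrow$ (2).} This is trivial.

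\textbf{(2) $\Rightarrow$ (1).} If $\kappa<0$ on compact $H$, then $\kappa\le -c<0$. Formula (\ref{bud}) then gives $\Lambda\le\int_\tau^\infty e^{-c(\rho-\tau)}\dd\rho=1/c<\infty$, so every generator is future incomplete.

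\textbf{(1) $\Rightarrow$ (4).} By Corollary \ref{cbv} every generator is incomplete. Since $\kappa=\omega(n)$, along an integral curve of $n$ the integral $\int_0^s\kappa$ equals $\int_{\gamma_p([0,s))}\omega$. Proposition \ref{vos} gives that this diverges to $-\infty$ on every generator. The parenthetical independence statements in (4) follow from the same proposition. Independence of the choice of $n$ follows from (\ref{ok5}): under $n'=e^fn$, reparametrizing by the new flow parameter, $\int\kappa'\,\dd s'=\int(\kappa+\p_nf)\,\dd s$. This differs by $f(\text{end})-f(\text{start})$, which is bounded by compactness.

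\textbf{(4) $\Rightarrow$ (1).} Since $\Lambda$ is defined via (\ref{bud}), it suffices to show that $\int_\tau^\infty e^{\int_\tau^\rho\kappa}\dd\rho<\infty$. Divergence of $\int\kappa$ to $-\infty$ alone does not immediately give integrability of the exponential. Here I would use Proposition \ref{vos} together with compactness and the cocycle identity $L_n\omega=\dd\kappa$ (Lemma \ref{loq}). The aim is a uniform rate: on compact $H$, if $\int_0^T\kappa\le -1$ along some generator for large $T$, then by continuity and the generator-independence there is a uniform $T$ with $\int_0^T\kappa<-1$ along all generators starting in $H$. Iterating along the flow then gives $\int_0^{kT}\kappa\le -k+C$, which yields exponential decay and finiteness of $\Lambda$.

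\textbf{(1) $\Rightarrow$ (3).} Corollary \ref{cbv} makes $\Lambda$ smooth and positive, and finite by incompleteness. Set $n'=\Lambda n$, i.e. $f=\log\Lambda$. By (\ref{ok5}) and (\ref{biw}),
\[
\kappa'=\Lambda(\kappa+\p_n\Lambda/\Lambda)=\kappa\Lambda+\p_n\Lambda=-1 .
\]

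The main obstacle is obtaining the uniform bound in (4) $\Rightarrow$ (1), converting $-\infty$ divergence into integrable decay. If the direct argument stalls, I would fall back on Theorem \ref{nndr} to import the equivalence from \cite{minguzzi21}, where this step is handled.
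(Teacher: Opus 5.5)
Your proposal is correct, and your fallback is exactly what the paper does. The paper gives no separate argument: it imports the equivalence of \cite[Def.\ 4]{minguzzi21} through Theorem \ref{nndr}. This works because $n$, $\kappa$, $\omega$, $\Lambda$ and (in)completeness of generators are all preserved when $H$ is regarded as a null hypersurface of $(U,g_n)$, and ($\star$) yields the Lorentzian hypothesis $Ric^{g_n}(n,\cdot)\vert_{TH}=0$.

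Your primary route instead reassembles the equivalence inside the Finslerian setting from Lemma \ref{cpx}, Proposition \ref{vos} and Corollary \ref{cbv}. This makes visible where compactness and ($\star$) actually enter: the generator-independence in Proposition \ref{vos} and the smoothness of $\Lambda$ in Corollary \ref{cbv}. It does not avoid the embedding trick, since those two results are themselves obtained in the paper by the same translation. The implications you prove ($3\Rightarrow 2\Rightarrow 1\Rightarrow 3$ and $1\Leftrightarrow 4$) differ from the cycle you announce, but they suffice.

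One step needs tightening. In $4\Rightarrow 1$, continuity and compactness do not directly produce a single uniform $T$. Let $\phi_s$ denote the flow of $n$. Generator-independence plus continuity give, for each $p$, a time $T_p$ and a neighbourhood on which $\int_0^{T_p}\kappa(\phi_s q)\,\dd s<-1$. Compactness then yields finitely many times $T_1,\dots,T_k$ such that every $q\in H$ satisfies this inequality for \emph{some} $T_i$.

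Your iteration still works with this weaker information. Step along the flow by these times, each step of length at most $T_{\max}=\max_i T_i$. Bound each incomplete final stretch by $T_{\max}\max_H|\kappa|$. This gives $\int_0^\rho\kappa(\phi_s q)\,\dd s\le -\rho/T_{\max}+C$, uniformly in $q$, hence $\Lambda\le C'T_{\max}<\infty$ by (\ref{bud}); a uniform $T$ then follows a posteriori. The only ``cocycle'' property used here is additivity of the integral along the flow, so Lemma \ref{loq} plays no role in this step.
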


\begin{definition}
If they hold, we call
$H$ {\em future non-degenerate} (and analogously for the past). If they fail to hold in the future version and also in  the past version, then
$H$ is called {\em degenerate}.

\end{definition}




We arrive at a central result of this work which is the Finslerian generalization of   \cite[Thm.\ 4]{minguzzi25d}

\begin{corollary} \label{ocvt}
 Assume ($\star$) and $H$ is compact. Then there exists  a smooth future-directed lightlike tangent vector field $n$ of constant surface gravity: $\nabla_n n=\kappa n$,  $\kappa=cnst$. We have $\kappa=0$ iff the horizon is degenerate.
\end{corollary}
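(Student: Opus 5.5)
The plan is to reduce the statement to its Lorentzian counterpart \cite[Thm.\ 4]{minguzzi25d} through the embedding trick of the previous section, specifically Theorems \ref{nner} and \ref{nndr}.

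\textbf{Setting up the auxiliary Lorentzian manifold.}
Fix any smooth future-directed lightlike tangent field $n_0$ on the compact horizon $H$. Choose a $g_{n_0}$-timelike field $T$ on $H$ with $g_{n_0}(n_0,T)<0$. Extend $n_0$ and $T$ to a neighbourhood $U$ of $H$. By the key theorem preceding Theorem \ref{nner}, $H$ is then a smooth compact connected totally geodesic null hypersurface of the Lorentzian manifold $(U,g_{n_0})$, time-oriented by $T$. Moreover, the 1-form $\omega$, the surface gravity, and the affine parametrizations of the generators coincide in the two interpretations.

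\textbf{Transferring the curvature hypothesis.}
Assumption ($\star$) gives $Ric_{n_0}\vert_{TH}=0$. In the Finsler–Lorentz case it is either assumed directly or derived via Theorem \ref{cmqq}. By Theorem \ref{vie} this is equivalent to $i_{n_0}\dd\omega=0$. By the Lorentzian curvature identity stated just before Section \ref{trasl}, this is in turn equivalent to $Ric^{g_{n_0}}(n_0,\cdot)\vert_{TH}=0$, which is condition ($\star\star$) of \cite{minguzzi21}.

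This is the main point to check. The hypothesis of \cite[Thm.\ 4]{minguzzi25d} must be exactly this restricted-Ricci (or $i_n\dd\omega=0$) condition, and not the full Lorentzian null convergence condition for $g_{n_0}$, which may fail on $U$. I expect the Lorentzian proof to use only the totally geodesic property together with $i_n\dd\omega=0$ (and hence $L_n\omega=\dd\kappa$ as in Lemma \ref{loq}), possibly along with the dichotomy of Corollary \ref{cbv}. I would verify this by inspecting the hypotheses of that theorem and of the lemmas it relies on.

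\textbf{Applying the Lorentzian result and pulling back.}
Apply \cite[Thm.\ 4]{minguzzi25d} in $(U,g_{n_0})$. It yields a smooth positive function $f$ such that $n=e^f n_0$ has constant surface gravity, $\nabla^{g_{n_0}}_n n=\kappa n$. Rescaling by a positive function preserves tangency, lightlikeness and future direction. Since $g_{e^f n_0}=g_{n_0}$ by zero-homogeneity of the Finsler metric, the same $U$ and metric serve for $n$. By (\ref{lppo}) and the gauge relations (\ref{ok2})–(\ref{ok5}), which hold identically in both settings, we get $D_n n=\omega(n)n=\kappa n$ in $(M,\mathscr{L})$ with the same constant $\kappa$.

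\textbf{The equivalence $\kappa=0$ iff $H$ is degenerate.}
This follows from Theorem \ref{der} together with Lemma \ref{cpx}. If $\kappa<0$, then $H$ is future non-degenerate. If $\kappa>0$, the time-reversed analogue gives past non-degeneracy. If $\kappa=0$, Lemma \ref{cpx} shows every generator is future complete, and the past analogue shows they are past complete, so $H$ is degenerate. Conversely, degeneracy excludes both signs of a nonzero constant $\kappa$.

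All the quantities used (geodesic completeness, $\Lambda$, $\kappa$) are among those that Theorem \ref{nner} guarantees are shared between $(U,g_{n_0})$ and $(M,\mathscr{L})$, so this conclusion holds in the Finsler setting as well.
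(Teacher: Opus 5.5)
Your proposal is correct and follows the paper's own route. The paper obtains this corollary by viewing $H$ as a totally geodesic null hypersurface of $(U,g_n)$ with the same $\omega$, $\kappa$ and $\Lambda$. It converts $Ric_n\vert_{TH}=0$ into $i_n\dd\omega=0$, and hence into $Ric^{g_n}(n,\cdot)\vert_{TH}=0$ (condition ($\star\star$)), and then imports \cite[Thm.\ 4]{minguzzi25d} through Theorems \ref{nner} and \ref{nndr}. The point you flag, that the Lorentzian argument must use only this restricted Ricci condition and not the Lorentzian null convergence condition for $g_n$, is exactly the caveat the paper stresses after the key embedding theorem and then takes for granted.
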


Thus under ($\star$) and compact  $H$ there are three cases: the generators are future incomplete and surface gravity can be normalized to $-1$ (non-degenerate case); the generators are future complete and surface gravity can be normalized to $0$ (degenerate case);  the generators are past incomplete and surface gravity can be normalized to $1$ (non-degenerate case).

We refer the reader to \cite{minguzzi24c} for an introduction to Riemannian flows. Essentially, on a null hypersurface we have  a Riemannian flow if there is a metric acting on the vector bundle $TH/n\to H$  preserved by the oriented 1-dimensional foliation  $\mathcal{F}$ of the generators.

The following follows from \cite[Prop.\ 1]{minguzzi24c}

\begin{proposition} \label{bort}
Assume ($\star$). The flow induced by the generators  is a Riemannian flow with a preassigned invariant metric $h$.
Conversely, every smooth null hypersurface $H$ that induces  a Riemannian flow structure $(H,h)$ is necessarily totally geodesic.
\end{proposition}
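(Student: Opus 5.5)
The plan is to reduce the statement to its Lorentzian counterpart, \cite[Prop.\ 1]{minguzzi24c}, through the embedding trick: Theorem \ref{nner}, with the ingredients of the key theorem preceding it. The first claim only concerns $h$, $n$, the totally geodesic property and the flow of generators, all of which transfer to the auxiliary spacetime $(U,g_n)$. The converse is a direct consequence of Proposition \ref{cmop}. I would nevertheless write both directions intrinsically, since each is short and this avoids checking the hypotheses of the Lorentzian result.

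\emph{Direct part.} Under ($\star$), $H$ is totally geodesic, so Proposition \ref{cmop}(iii) gives $L_n h=0$, with $h$ the quotient metric (\ref{cmpt}) on $V=TH/n$. For this to make sense as an invariance under the foliation $\mathcal{F}$, rather than under one particular field $n$, I need independence of the choice of $n$. If $n'=e^f n$, then for a section $\bar X$ we have $L_{n'}\bar X=e^f L_n\bar X-(\p_X f)\,\overline{n}=e^f L_n \bar X$, since $\overline{n}=0$ in $V$. Hence $L_{n'}h=e^f L_n h=0$, and $h$ is a holonomy-invariant transverse metric. It is positive definite because $TH=\ker g_n(n,\cdot)$ and $g_n$ is Lorentzian, so $g_n$ restricted to a complement of $n$ in $TH$ is spacelike. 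Thus $(H,h)$ is a Riemannian flow with preassigned metric $h$. Notably, the condition $Ric_n\vert_{TH}=0$ is not used here; only total geodesy is.

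\emph{Converse.} Suppose $H$ is a smooth null hypersurface and $\mathcal{F}$ preserves some bundle metric $\tilde h$ on $V$, that is, $L_n\tilde h=0$. I want to conclude $L_n h=0$ for the induced $h$. This is the main obstacle, because a priori the invariant metric need not coincide with $h$. My plan follows the Lorentzian proof. Use the Riccati equation (Prop.\ \ref{buu}) together with $h'=0$ (Prop.\ \ref{bcu}) to express $L_n h(\bar X,\bar Y)=2h(b\bar X,\bar Y)$, as in the proof of Prop.\ \ref{cmop}. Then show that $b=0$ using the invariant metric $\tilde h$.

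Write $\tilde h(\bar X,\bar Y)=h(A\bar X,\bar Y)$ with $A$ positive and $h$-self-adjoint. The invariance $L_n\tilde h=0$ turns into a transport equation relating $A'$ to $b$. From it I would derive that the volume ratio $\det A$ evolves through $\theta$ and that $A$ conjugates the flow. The idea is then to exploit that $\tilde h$ is a genuine transverse metric invariant under holonomy along generators. If this cannot be closed intrinsically, I would instead invoke \cite[Prop.\ 1]{minguzzi24c} via Theorem \ref{nner}. The Riemannian-flow hypothesis is purely about $H$ and $\mathcal{F}$, so it is identical in $(U,g_n)$, and totally geodesic in $(U,g_n)$ is equivalent to totally geodesic in $(M,\mathscr{L})$ by the preceding theorem. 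That fallback is the route I would ultimately rely on.
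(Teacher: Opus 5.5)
Your direct part is correct. It is the shortcut the paper itself mentions: characterization (iii) of Proposition \ref{cmop}, plus the facts that $g_n$, and hence $h$, is unchanged under $n\mapsto e^f n$ and that $L_{e^fn}h=e^fL_nh$. The paper instead obtains the whole proposition by transporting \cite[Prop.\ 1]{minguzzi24c} through Theorem \ref{nner}. You are also right that $Ric_n\vert_{TH}=0$ plays no role here.

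The converse is where the proposal goes wrong. You read the hypothesis as ``the generator foliation preserves \emph{some} transverse metric $\tilde h$'' and set out to deduce $b=0$ from that. This stronger claim is false, so neither your intrinsic plan nor your fallback can deliver it.

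A counterexample is the light cone $H=\{t=|x|>0\}$ in Minkowski spacetime, with $n=\p_t+\p_r$. The induced quotient metric is $h=r^2\,\dd\Omega^2$, so $L_nh=\tfrac{2}{r}h$ and $b=r^{-1}\mathrm{Id}\neq0$. Yet $\tilde h=\dd\Omega^2$ is invariant under the flow, so the generators do form a Riemannian flow. Compactness does not rescue it either. Take $g=2\,\dd t\,\dd r+h_t$ on $S^1\times\mathbb{R}\times\Sigma$, with $h_t$ a non-constant periodic family. Then $H=\{r=0\}$ is null with closed generators that preserve $h_0$, but $b=\tfrac12h_t^{-1}\p_th_t\neq0$.

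Accordingly, your transport equation $L_nA=-2bA$ is perfectly consistent with $b\neq0$. Your fallback also misapplies \cite[Prop.\ 1]{minguzzi24c}. That result, like the present statement, concerns the preassigned quotient metric $h$ of (\ref{cmpt}); in your reading it would already fail in Minkowski space.

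With the correct reading, ``$(H,h)$ is a Riemannian flow'' means $L_nh=0$ for the induced $h$. The converse is then exactly (iii)$\Rightarrow$(i) of Proposition \ref{cmop}, valid for every $C^2$ null hypersurface without assuming ($\star$). That is what you stated in your first paragraph before talking yourself out of it; there is no ``main obstacle''.
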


We recall that our argument goes as follows. Our horizon can be seen as a horizon in a Lorentzian manifold and since the previous proposition holds for such horizons and all terms involved in the statement are left invariant in the translation, the Finslerian version holds.

As observed by R. Hounnonkpe, the structural results for compact horizons based on the Riemannian flow property presented in \cite{minguzzi24c}  pass to the Finslerian case more directly because of characterization (iii) of Prop \ref{cmop}. He will pursue this line of reasoning in forthcoming work.




While this Riemannian flow observation  is sufficient to get some relevant results (e.g.\ Thms.\ \ref{mpty}-\ref{car} below), for others, such as the proof that   non-degenerate horizons admit non-vanishing constant surface gravity or Thm.\ \ref{mqp} and Cor.\ \ref{vmx} below, the full strength of our Theorem \ref{nndr} seems required.


The following is a generalization of a classical result \cite{petersen18b,minguzzi21,reiris21}. It can be obtained from the observation that, when $\kappa$ is a non-zero constant, thanks to Lemma \ref{loq} and Theorem \ref{der}, $L_n (g_n+\omega\otimes \omega)=0$, where the metric in parenthesis is positive definite

\begin{theorem} \label{mqp}
 Assume ($\star$) and $H$ is compact and non-degenerate. The Riemannian flow on $H$ is isometric.
\end{theorem}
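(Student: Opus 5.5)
The plan follows the hint in the text. First I would invoke Corollary \ref{ocvt} and Theorem \ref{der} to choose a smooth future-directed lightlike tangent field $n$ with constant surface gravity $\kappa\neq 0$. If $H$ is future non-degenerate we take $\kappa=-1$; in the past non-degenerate case we take $\kappa=1$. With this choice Lemma \ref{loq} gives $L_n\omega=\dd\kappa=0$.

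Next I would build the candidate invariant Riemannian metric on $H$,
\[
\rho:=\hat g+\omega\otimes\omega, \qquad \hat g=g_n\vert_{TH\times TH},
\]
and check that it is positive definite. The form $\hat g$ is positive semidefinite on $TH$, with radical spanned by $n$, because $TH=\ker g_n(n,\cdot)$ and $g_n$ is Lorentzian. Moreover $\omega(n)=\kappa\neq 0$. So if $\rho(X,X)=0$, then $\hat g(X,X)=0$, which forces $X=cn$; then $\omega(X)=c\kappa=0$ gives $c=0$. Hence $\rho$ is a Riemannian metric on the compact manifold $H$.

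Then I would show invariance. Characterization (ii) of Prop.\ \ref{cmop} gives $L_n\hat g=0$, and together with $L_n\omega=0$ this yields $L_n\rho=0$. So $n$ is a Killing field of $(H,\rho)$, and its flow $\phi_t$ acts by isometries of $\rho$.

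Finally I would link this to the notion of an isometric Riemannian flow: the foliation by generators is generated by a nonvanishing Killing field of a Riemannian metric. If the definition in \cite{minguzzi24c} instead asks that the transverse metric $h$ be realized as the transverse part of a bundle-like metric whose leaves are orbits of an isometric $\mathbb{R}$-action, I would note that $\rho$ restricted to $\ker\omega$ induces exactly $h$ on $V=TH/n$. This holds because $\ker\omega$ is complementary to $n$ and $\omega\otimes\omega$ vanishes there.

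The only genuine issue is this last matching with the precise definition. Compactness of $H$ enters only through Corollary \ref{ocvt}, which supplies the constant nonzero $\kappa$. Alternatively, the whole statement follows at once from Theorem \ref{nndr} applied to the Lorentzian result for $(U,g_n)$ \cite{minguzzi21,reiris21}, since every ingredient ($h$, $n$, $\omega$, $\kappa$) is among those preserved in the translation.
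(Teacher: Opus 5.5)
Your proposal is correct and is essentially the paper's argument: Theorem~\ref{der} gives $n$ with constant $\kappa=\mp 1$, Lemma~\ref{loq} then gives $L_n\omega=0$, Prop.~\ref{cmop}(ii) gives $L_n\hat g=0$, and so $n$ is a Killing field of the positive definite metric $\hat g+\omega\otimes\omega$ on $H$. Your positivity check, and your remark that this metric makes $\ker\omega$ orthogonal to $n$ and induces $h$ there, are correct details that the paper leaves implicit.
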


The following comes from  \cite[Prop.\ 8]{minguzzi24c} which we obtained applying a classical result on foliations by Ghys

\begin{theorem} \label{mpty}
Assume ($\star$) and $H$ is compact and simply connected.  Then its Riemannian flow is isometric.
\end{theorem}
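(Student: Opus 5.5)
The plan is to reduce the statement to its Lorentzian counterpart, \cite[Prop.\ 8]{minguzzi24c}, via the embedding trick of Theorem \ref{nndr}.

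First, assuming ($\star$), extend $n$ and a $g_n$-timelike field $T$ with $g_n(n,T)<0$ to a neighborhood $U$ of $H$, as in the embedding theorem above. Then $H$ is a smooth, compact, simply connected, totally geodesic null hypersurface of the Lorentzian manifold $(U,g_n)$. It carries the same field $n$, the same 1-form $\omega$ (Eq.\ (\ref{lppo})) and the same quotient metric $h$, since $h(\bar X,\bar Y)=g_n(X,Y)$ is defined through $g_n$ alone. By Theorem \ref{vie} and the theorem following Theorem \ref{nndr}, $Ric_n\vert_{TH}=0$ is equivalent to $i_n\dd\omega=0$, and this in turn gives $Ric^{g_n}(n,\cdot)\vert_{TH}=0$. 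So condition ($\star\star$) of \cite{minguzzi21} holds in the Lorentzian picture.

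Second, Proposition \ref{bort} shows that the generators define a Riemannian flow on $H$ with invariant transverse metric $h$. This follows from characterization (iii) of Prop.\ \ref{cmop}, $L_n h=0$. The flow is therefore the same oriented one-dimensional foliation $\mathcal F$ with the same transverse metric, whether $H$ is viewed in $(M,\mathscr L)$ or in $(U,g_n)$.

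Third, I would invoke Ghys' classification result as packaged in \cite[Prop.\ 8]{minguzzi24c}. A Riemannian flow on a compact simply connected manifold is isometric, meaning there is a Riemannian metric on $H$ for which the flow, suitably reparametrized, acts by isometries. Equivalently, some nowhere-vanishing vector field tangent to $\mathcal F$ is Killing for a metric on $H$. This property depends only on the smooth foliation $\mathcal F$, its transverse metric and the topology of $H$, all of which are among the concepts listed in Theorem \ref{nner}. The Lorentzian conclusion therefore transfers verbatim.

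The main obstacle I expect is checking that the notion of ``isometric'' used in \cite{minguzzi24c} really involves only the flow data and not the ambient Lorentzian connection, for instance a Killing field for $g_n$ off $H$. If it were phrased in terms of the ambient metric, I would instead argue directly on $H$: apply Ghys' theorem to the Riemannian foliation $(H,\mathcal F,h)$ and produce a Riemannian metric $\hat g$ on $H$ with $n'$ Killing, where $n'=e^f n$ is a rescaling. Writing $\hat g$ as a bundle-like completion of $h$ plus $\theta\otimes\theta$, with $\theta(n')=1$, keeps the argument free of any Finslerian connection data. Simple connectedness enters only through Ghys' theorem, which excludes the non-isometric (e.g.\ Carrière-type) flows on non-simply-connected manifolds.
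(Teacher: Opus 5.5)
Your proposal is correct and follows the paper's route: the paper obtains the theorem by transferring \cite[Prop.~8]{minguzzi24c} (Ghys' theorem applied to the Riemannian flow with invariant transverse metric $h$) through the embedding into $(U,g_n)$ of Thm.~\ref{nner}. Your observation that only the flow data $(H,\mathcal F,h)$ matter also matches the paper's remark that the Riemannian flow property coming from characterization (iii) of Prop.~\ref{cmop} already suffices for this result.
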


The Finslerian generalization continues as follows  \cite[Thm.\ 10]{minguzzi24c}

\begin{theorem}
Assume ($\star$) and $H$ is compact. The minimal volume in Gromov's sense of $H$ is zero, so $H$ does not admit a metric with negative sectional curvature.
\end{theorem}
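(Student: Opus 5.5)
The plan is to prove this via the translation principle of Theorem \ref{nner} and Theorem \ref{nndr}. First extend $n$ and a $g_n$-timelike field $T$ with $g_n(n,T)<0$ to a neighborhood $U$ of $H$. By the key embedding theorem, $H$ is then a compact, smooth, connected, totally geodesic null hypersurface of the Lorentzian spacetime $(U,g_n)$, and the 1-form $\omega$ is the same in both interpretations.

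Next we verify that the Lorentzian hypothesis needed for \cite[Thm.\ 10]{minguzzi24c} holds. Under ($\star$) we have $Ric_n\vert_{TH}=0$. Theorem \ref{vie} then gives $i_n\dd\omega=0$, and the Lorentzian version of the curvature identities gives $Ric^{g_n}(n,\cdot)\vert_{TH}=0$, which is condition ($\star\star$) of \cite{minguzzi21}. So $H\subset (U,g_n)$ satisfies exactly the hypotheses of the Lorentzian result. Alternatively, and more directly, Proposition \ref{bort} shows that the generator flow is a Riemannian flow on the compact manifold $H$ with invariant transverse metric $h$. The Lorentzian proof in \cite{minguzzi24c} uses essentially only this structure.

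The topological conclusion is then a statement purely about the compact manifold $H$ carrying a nonsingular Riemannian flow. The plan is to invoke the argument of \cite[Thm.\ 10]{minguzzi24c}. A compact manifold with a nonsingular Riemannian (or isometric, after Carrière/Molino structure) flow admits a $1$-parameter family of metrics collapsing along the flow direction with bounded curvature. Concretely, one rescales the leaf direction, or uses the Molino closure tori, to obtain an $F$-structure, or a collapse with $|K|\le 1$ and volume $\to 0$. This gives $\textrm{MinVol}(H)=0$. Since a closed manifold with a negatively curved metric has positive simplicial volume, and hence positive minimal volume by Gromov, $H$ cannot admit such a metric.

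The main obstacle is making sure every ingredient used in the Lorentzian proof is among the concepts listed in Theorem \ref{nner}: the flow topology, $h$, $n$ and compactness. None of them may rely on spacetime curvature beyond ($\star\star$), which we have secured. Since the conclusion is intrinsic to the foliated manifold $(H,\mathcal F,h)$, no Finslerian subtlety should remain. The only care needed is smoothness of $n$ and of the extension, which holds because $H$ is smooth and $\mathscr{L}$ is smooth.
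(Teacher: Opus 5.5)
Your proposal is correct and follows the paper's route. The paper also obtains this statement by transferring \cite[Thm.\ 10]{minguzzi24c} to the Finslerian setting, either through the Lorentzian embedding $(U,g_n)$ of Theorem \ref{nner} or, more directly, through the Riemannian flow structure of Proposition \ref{bort}; since that structure comes only from characterization (iii) of Proposition \ref{cmop}, the Ricci part of ($\star$) and your verification of ($\star\star$) play no real role for this particular statement.
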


We have also \cite[Thm.\ 13]{minguzzi24c}

\begin{theorem}
Assume ($\star$) and $H$ is compact.
The closures of
the leaves of $\mathcal{F}$ partition $H$ into embedded submanifolds that are diffeomorphic
to tori. The foliation induced on each torus is differentiably conjugate (without
parameter) to a linear flow on the torus with dense orbits.
\end{theorem}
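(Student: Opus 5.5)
The strategy is the reduction trick used throughout this section, Theorems \ref{nner} and \ref{nndr}. Fix a smooth future-directed lightlike tangent field $n$ on $H$, a $g_n$-timelike field $T$ with $g_n(n,T)<0$, and extensions of both to a neighborhood $U$ of $H$. By the theorem preceding Theorem \ref{nner}, $H$ is then a compact, smooth, connected, totally geodesic null hypersurface of the Lorentzian manifold $(U,g_n)$. Its 1-form $\omega$, its surface gravity $\kappa$ and its generator flow $\mathcal{F}$ coincide with those computed in $(M,\mathscr{L})$.

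The hypothesis ($\star$) gives $Ric_n\vert_{TH}=0$, so $i_n \dd\omega=0$ by Theorem \ref{vie}. The theorem on $R^{g_n}$ stated just before Section \ref{trasl} then yields $Ric^{g_n}(n,\cdot)\vert_{TH}=0$. This is exactly condition ($\star\star$) of \cite{minguzzi21}, which is the standing hypothesis in \cite{minguzzi24c}. Hence the Lorentzian \cite[Thm.\ 13]{minguzzi24c} applies verbatim to $H\subset (U,g_n)$.

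It remains to check that the conclusion uses only notions that Theorem \ref{nner} declares invariant. The statement mentions only the leaves of $\mathcal{F}$ (the integral curves of $n$, i.e.\ the generators), their closures, and the differentiable conjugacy class of the unparametrized flow on the resulting tori. All of these are topological or differential properties of the flow on $H$, and they are independent both of the extension and of the ambient metric. So the conclusion transfers directly to $(M,\mathscr{L})$.

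For transparency, I would also sketch why the Lorentzian result holds, to confirm that no hidden ambient structure is used. By Proposition \ref{bort} the flow is Riemannian with transverse metric $h$, and $H$ is compact. Molino's structure theory for Riemannian foliations then says that the leaf closures partition $H$ into embedded submanifolds. Because the foliation is one-dimensional, these closures are orbit closures of an abelian structure, namely tori (Carrière's theorem for Riemannian flows). On each torus the flow is conjugate to a linear flow with dense orbits. Every ingredient here lives on $(H,\mathcal{F},h)$ alone.

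The only point I expect to need care is confirming that $H$ satisfies the precise hypotheses of \cite[Thm.\ 13]{minguzzi24c} inside $(U,g_n)$. That paper may state its results for a full spacetime rather than an open neighborhood $U$ in which $n$ need not be causal. Since its arguments depend only on the intrinsic Riemannian flow structure of $H$, which is certified by characterization (iii) of Prop.\ \ref{cmop} and Proposition \ref{bort}, this should be harmless. If necessary, one can bypass $(U,g_n)$ entirely and apply Molino's and Carrière's results directly to $(H,\mathcal{F},h)$.
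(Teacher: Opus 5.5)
Your proposal is correct and follows the paper: the paper obtains this statement by transporting \cite[Thm.\ 13]{minguzzi24c} to the Finslerian setting via Theorem \ref{nndr}, and it also remarks, as you do, that the Riemannian flow structure coming from characterization (iii) of Prop.\ \ref{cmop} already suffices. One small point is that the condition $Ric_n\vert_{TH}=0$ in ($\star$) plays no role here, since Carri\`ere's theorem needs only compactness and the invariant transverse metric $h$, so your detour through ($\star\star$) is harmless but unnecessary.
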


In the following result $k+1$ is the dimension of the closures of the leaves $\bar L\simeq \mathbb{T}^{k+1}$, $L
\in \mathcal{F}$, of maximal dimension (they are tori by the previous theorem).
The main structure theorem in dimension 4 follows from \cite[Thm.\ 14]{minguzzi24c}

\begin{theorem} \label{car}
Assume ($\star$) and $H$ is compact and orientable, and that $M$ is $4$-dimensional. There are the following mutually excluding possibilities:
\begin{enumerate}
\item[(i)] All orbits are closed, $k=0$. $H$ is a Seifert fibration over a Satake manifold. The  fibers are the orbits of the flow.
\item[(ii)] The flow has precisely two closed orbits and every other orbit densely fills a two-torus, $k=1$. If the closed orbits are removed the manifold is diffeomorphic to a product $(0,1) \times  T^2$ where the flow in each fiber $\{s\} \times T^2$ is a linear flow with dense orbits. The manifold $H$ is obtained by gluing two solid tori by their boundaries, on each solid torus the foliation being given through a suspension by an irrational rotation of the disk. Moreover, there are two subcases:
\begin{itemize}
\item[a)] $H$ is diffeomorphic to the lens space\footnote{It comprises the case $L(1,0)=S^3$ while the case $L(0,1)=S^1 \times S^2$ is included in (ii-b). See \cite{saveliev12} for this gluing construction. Note that a $S^3$ topology can also appear in case (i) (e.g.\ Hopf bundle).} $L(p,q)$,  and the flow is conjugate to that of \cite[Example 1]{minguzzi24c}.
\item[b)] $H$ is diffeomorphic to $S^1\times S^2$ and the flow is conjugate to the flow given by the suspension of an irrational rotation of $S^2$ (with respect to, say, the $z$-axis in the standard isometric embedding in $\mathbb{R}^3$).
\end{itemize}
\item[(iii)] Every orbit densely fills a two-torus, so no orbit is closed, no orbit is dense,  $k=1$. $H$ is a $T^2$-bundle over $S^1$, each  torus fiber being the closure of the orbits in it, and there are two possibilities:
\begin{itemize}
\item[a)] $H$ is diffeomorphic to $T^3$ with a flow conjugate to a linear flow on the torus.
\item[b)] $H$ is diffeomorphic to the hyperbolic fibration $T^3_A$, ($
\textrm{tr} A > 2$) and the flow is conjugate to one of the flows of \cite[Example 2]{minguzzi24c}.
\end{itemize}
\item[(iv)] All orbits are dense, $k=2$. $H$ is diffeomorphic to $T^3$ with a flow conjugate to a linear flow on the torus.
\end{enumerate}
Moreover, except case (iii-b), the flow is isometric.
\end{theorem}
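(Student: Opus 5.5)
The plan is to reduce Theorem \ref{car} to its Lorentzian counterpart, \cite[Thm.\ 14]{minguzzi24c}, via the embedding trick of Theorems \ref{nner} and \ref{nndr}.

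First I would construct the auxiliary Lorentzian spacetime. Given the horizon $H$, choose a smooth future-directed lightlike tangent field $n$ on $H$, a $g_n$-timelike field $T$ on $H$ with $g_n(n,T)<0$, and extend both to a neighborhood $U$ of $H$. By the theorem preceding Theorem \ref{nner}, $(U,g_n)$ is a time-oriented Lorentzian $4$-manifold in which $H$ is a smooth, embedded, connected, compact, orientable null hypersurface. It is totally geodesic there, and its 1-form $\omega$ coincides with the Finslerian one by Eq.\ (\ref{lppo}).

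Next I would check the Lorentzian hypothesis of \cite[Thm.\ 14]{minguzzi24c}, namely the condition $Ric^{g_n}(n,\cdot)\vert_{TH}=0$. By ($\star$), either directly or through Thm.\ \ref{cmqq}, we have $Ric_n\vert_{TH}=0$. Thm.\ \ref{vie} then gives $i_n\dd\omega=0$, and the Lorentzian version of the same identity, \cite[Lemma 3]{minguzzi21} (restated in the theorem just after Thm.\ \ref{nndr}), gives $Ric^{g_n}(n,\cdot)\vert_{TH}=\dd\omega(n,\cdot)=0$. This is precisely condition ($\star\star$) of \cite{minguzzi21}, so $H$ is a horizon satisfying the hypotheses of the Lorentzian structure theorem in $(U,g_n)$.

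Finally I would transfer the conclusion. The statements (i)--(iv) concern only the diffeomorphism type of $H$, the oriented foliation $\mathcal F$ by integral curves of $n$, the closures of its leaves, and whether the Riemannian flow is isometric. The foliation is the same in both settings, since the generators are the integral curves of the same field $n$. The transverse metric $h$ on $TH/n$ is also the same, since it is defined by $g_n$ in both cases. By Theorem \ref{nner}, all these notions pass through unchanged, so the dichotomy into cases and the final isometry claim hold verbatim in $(M,\mathscr L)$.

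The main obstacle is making sure that the isometric property in the last sentence refers to an invariant transverse metric determined by $(H,\mathcal F)$ and $h$, and not to some structure depending on the extension of $n$ to $U$. I would address this by invoking Prop.\ \ref{bort} together with characterization (iii) of Prop.\ \ref{cmop}, which shows that the Riemannian flow structure is intrinsic to $H$. The case analysis itself needs no new work, because it is purely topological and dynamical once the Riemannian flow is established.
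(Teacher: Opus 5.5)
Your proposal is correct and follows essentially the paper's route: the paper derives Theorem \ref{car} from \cite[Thm.\ 14]{minguzzi24c} by viewing $H$ as a horizon in $(U,g_n)$, with ($\star\star$) supplied by Thm.\ \ref{vie} and \cite[Lemma 3]{minguzzi21}, and by noting that every notion appearing in the statement is invariant under this translation. The paper also remarks that for this theorem the Riemannian flow structure coming from characterization (iii) of Prop.\ \ref{cmop} (Prop.\ \ref{bort}) is already enough on its own, and this is the same point you use to settle the isometry claim.
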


As a consequence  \cite[Cor.\ 15]{minguzzi24c} \cite{bustamante21}

\begin{corollary} \label{vmx}

Assume ($\star$) and $H$ is compact, non-degenerate and orientable, and that $M$ is $4$-dimensional. The Riemannian flow on $H$ has the  possible structures of Thm.\ \ref{car} save for case (iii-b) that does not apply, that is
\begin{itemize}
\item[i)]  if all the generators are closed  then $H$ is a Seifert manifold,
\item[ii)] if only two generators are closed and every other generator densely fills a two-torus then $H$ is a lens space,
\item[iii)] if every generator densely fills a two-torus, then $H$ is a $\mathbb{T}^2$-bundle over $S^1$ (and diffeomorphic to $\mathbb{T}^3$),
\item[iv)] If every generator densely fills the horizon, then $H$ is a three-torus $\mathbb{T}^3$.
\end{itemize}
\end{corollary}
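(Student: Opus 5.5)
The approach is to reduce Corollary \ref{vmx} entirely to Theorem \ref{car} combined with Theorem \ref{mqp}, via the translation principle of Theorem \ref{nndr}. Under ($\star$), with $H$ compact and orientable and $M$ four-dimensional, Theorem \ref{car} applies and shows that the Riemannian flow of the generators falls into exactly one of the cases (i)–(iv) (with their subcases). The content to be added is twofold: excluding case (iii-b), and extracting the stated topological descriptions from the remaining cases.

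The first step is to exclude (iii-b). Since $H$ is non-degenerate, Theorem \ref{mqp} shows that the Riemannian flow is isometric. The $n$ used there has constant $\kappa=\mp 1$ (Theorem \ref{der}), and Lemma \ref{loq} gives $L_n(g_n+\omega\otimes\omega)=0$ with a positive definite metric on $H$, so the generator flow is the flow of a Killing field of a Riemannian metric.

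In case (iii-b), by Theorem \ref{car}, $H\simeq T^3_A$ with $\operatorname{tr}A>2$, and the flow is conjugate to one of the flows of \cite[Example 2]{minguzzi24c}. Theorem \ref{car} itself records that this is the only case in which the flow need not be isometric, so what has to be shown is that the flows of \cite[Example 2]{minguzzi24c} are never isometric. This is the only delicate point, and I would import it from \cite[Cor.\ 15]{minguzzi24c} and \cite{bustamante21}. The underlying reason is this: if the flow were isometric, the closure of the flow in the isometry group of the compact manifold would be a torus acting on $H$. By the structure in (iii), that torus would contain a $T^2$ acting freely with the fibers as orbits. The hyperbolic monodromy $A$, however, cannot commute with the translations of the fiber torus, so $T^3_A$, being a solvmanifold, admits no such effective $T^2$-action. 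This rules out (iii-b).

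The remaining cases then read off directly. In (i), all generators are closed and $H$ is a Seifert fibration. In (ii), (ii-b) is folded into the lens-space statement via the footnote convention $L(0,1)=S^1\times S^2$, so $H$ is a lens space. In (iii-a), $H\simeq T^3$ and is trivially a $T^2$-bundle over $S^1$. In (iv), $H\simeq T^3$. The main obstacle is the non-isometry of the (iii-b) flows, which I rely on from the cited Lorentzian work; everything else is bookkeeping. That reliance is legitimate because the statement only involves topological properties of the flow and the non-degeneracy notion, both of which pass through by Theorem \ref{nner}.
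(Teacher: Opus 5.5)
Your proposal is correct and follows essentially the paper's route. The paper obtains the corollary by translating \cite[Cor.\ 15]{minguzzi24c} and \cite{bustamante21} through Theorem \ref{nndr}: non-degeneracy makes the flow isometric (Theorem \ref{mqp}), which excludes the non-isometric case (iii-b) of Theorem \ref{car}, and the remaining cases are read off as you do. Your sketch of why the (iii-b) flows cannot be isometric goes beyond the bare citation in the paper and is sound in outline: an isometric flow would give a free $T^2$-action whose orbits are the fibre tori, so $H$ would be a principal $T^2$-bundle over $S^1$, hence diffeomorphic to $T^3$ and not to $T^3_A$.
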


For completeness we mention the classification for spacetime dimension 2 and 3 which comes from \cite{minguzzi25d}.

\begin{theorem} \label{carf}
Assume ($\star$) and $H$ is compact and orientable. In spacetime dimension 2 the manifold $H$ is diffeomorphic to $S^1$ while in spacetime dimension 3 the manifold $H$ is diffeomorphic to a torus with the flow being conjugate to (a) a constant flow  along the fibers of the second projection  $S^1\times S^1\to S^1$, (b) a linear flow over $T^2$ with dense orbits.
In all cases the flow is isometric.
\end{theorem}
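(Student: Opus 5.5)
The plan is to use the reduction trick of Theorem~\ref{nndr} to transfer the Lorentzian classification of \cite{minguzzi25d} to the Finslerian setting. First I extend $n$ and a $g_n$-timelike field $T$ with $g_n(n,T)<0$ to a neighborhood $U$ of $H$, and pass to the Lorentzian spacetime $(U,g_n)$ time-oriented by $T$. By the theorem preceding Theorem~\ref{nner}, $H$ is again a smooth embedded connected totally geodesic null hypersurface there, that is, a horizon. It carries the same field $n$, the same 1-form $\omega$ (by Eq.~(\ref{lppo})), the same surface gravity and the same generators.

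Next I check that the Lorentzian hypothesis corresponding to ($\star$) holds. Under ($\star$) we have $Ric_n\vert_{TH}=0$, either directly or via Theorem~\ref{cmqq}. Theorem~\ref{vie} then gives $i_n\dd\omega=0$, and by the Lorentzian identity $Ric^{g_n}(Y,n)=\dd\omega(n,Y)$ (the theorem drawn from \cite[Lemma 3]{minguzzi21}) we get $Ric^{g_n}(n,\cdot)\vert_{TH}=0$. This is exactly condition ($\star\star$) of \cite{minguzzi21}, which is the hypothesis used in \cite{minguzzi25d}. Compactness and orientability of $H$ are properties of $H$ alone, and the spacetime dimension is unchanged.

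I then apply the Lorentzian classification of \cite{minguzzi25d} to $H\subset(U,g_n)$. In spacetime dimension 2, $H$ is a compact connected 1-manifold, hence $S^1$. In dimension 3, $H$ is a closed orientable surface carrying a nowhere-vanishing field $n$, so Poincar\'e--Hopf forces $H\cong T^2$. Moreover, the flow is Riemannian by Proposition~\ref{bort}; on a torus this rules out Reeb-type components, so the flow is conjugate either to a fibration by circles or to a linear irrational flow. Since the conclusion involves only the topology of $H$ and of its generator flow, and the isometric property of the flow refers only to the quotient metric $h$, all of which are among the quantities preserved in Theorem~\ref{nner}, it passes back to $(M,\mathscr{L})$.

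The step I expect to be most delicate is the isometric claim. It is not needed in dimension 2, where $V$ is trivial. In dimension 3 in the case of closed orbits, isometry means finding a vector field in the flow direction that preserves $h$ (equivalently, with $L_n h=0$ via Proposition~\ref{cmop}(iii)), and this should come either from the constant-surface-gravity normalization of Corollary~\ref{ocvt} or directly from \cite{minguzzi25d}. I would simply cite the Lorentzian statement, since every ingredient lies in the invariant list of Theorem~\ref{nner}.
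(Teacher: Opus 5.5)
Your proposal follows the paper's own route. The paper obtains Theorem~\ref{carf} by importing the Lorentzian classification of \cite{minguzzi25d} through the translation Theorem~\ref{nndr}: ($\star$) gives $Ric_n\vert_{TH}=0$, hence $i_n\dd\omega=0$ and condition ($\star\star$) for $H\subset(U,g_n)$, exactly as you verify.

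Two side remarks are inaccurate but not load-bearing, since you ultimately cite the Lorentzian statement:
\begin{itemize}
\item ``Isometric'' refers to a Riemannian metric on all of $H$ that is invariant under a field tangent to the generators. It cannot be about the quotient metric $h$, which is always invariant by Proposition~\ref{cmop}(iii).
\item The circle-fibration/irrational-flow dichotomy on $T^2$ comes from the transversally isometric (Riemannian) structure of the flow, not merely from the absence of Reeb components.
\end{itemize}
Both are properties of the generator foliation of $H$, which is unchanged in the translation, so your conclusion that they pass back to $(M,\mathscr{L})$ still stands.
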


The following result is interesting already in the Lorentzian case (as it is more general than \cite[Lemma 2.8]{petersen18} obtained for the Lorentzian non-degenerate case). Observe that, in view of Theorem \ref{car}, in 4 spacetime dimensions  only the special non-isometric (and hence degenerate) case (iii-b) is left out.
\begin{theorem} \label{cnpp}
Suppose that on the compact horizon $H$  there exist a future-directed lightlike tangent field $n$ and a Riemannian metric $h$ such that $L_n h=0$ (isometric flow). Then $n$ can be extended in a neighborhood of $H$ in such a way that $L_n g_n\vert_H=0$. In particular, assume  ($\star$)  and suppose $H$ is non-degenerate,  if $n$ is the constant surface gravity field then we can extend it so that  $L_n g_n\vert_H=0$.
\end{theorem}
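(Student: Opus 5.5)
The plan is to use the hypothesis $L_n h=0$ only to build a flow-invariant splitting of $TM\vert_H$. We then extend $n$ off $H$ by transporting it along a family of transverse curves. This makes the extension commute with the transverse coordinate vector field, so that $L_n g_n\vert_H$ reduces to derivatives along $n$ of functions defined on $H$ alone.

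\emph{Step 1 (invariant screen and transverse field).} Let $S\subset TH$ be the $h$-orthogonal complement of $n$ in $TH$. Since $L_n h=0$ and $L_n n=0$, the distribution $S$ is invariant under the flow of $n$ on $H$. Since $H$ is a horizon, Prop.\ \ref{cmop} (ii) gives $L_n\hat g=0$, where $\hat g=g_n\vert_{TH\times TH}$. Note that $g_n$ restricted to $TM\vert_H$ depends only on $n\vert_H$. Along $H$ we define $N\in TM\vert_H$ pointwise and algebraically by
\[
g_n(N,N)=0,\qquad g_n(N,n)=-1,\qquad g_n(N,S)=0 .
\]
This is well posed because $S$ is $g_n$-spacelike ($\hat g$ is positive definite on $TH/n\simeq S$) and $n$ is $g_n$-lightlike with $TH=\ker g_n(n,\cdot)$. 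So $N$ is the unique lightlike vector in the $g_n$-orthogonal complement of $S$ normalized against $n$. It is smooth and transverse to $H$.

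\emph{Step 2 (the extension).} Fix any auxiliary smooth extension $n_0$ of $n$ and the Lorentzian metric $g_{n_0}$ on a neighborhood; it coincides with $g_n$ on $TM\vert_H$. Define $\Phi(s,p)=\exp^{g_{n_0}}_p(sN_p)$ for $(s,p)\in(-\epsilon,\epsilon)\times H$. By compactness of $H$ this is a diffeomorphism onto a neighborhood $U$ for small $\epsilon$. Set $n:=\Phi_*(0,n)$, so that $[n,\p_s]=0$ on $U$ and the extension agrees with $n$ on $H$. Local coordinates $x^i$ on $H$, transported by $\Phi$, together with $s$ give coordinates in which $n$ has components independent of $s$.

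\emph{Step 3 (verification).} At points of $H$, evaluate $L_n g_n(A,B)=n\big(g_n(A,B)\big)-g_n([n,A],B)-g_n(A,[n,B])$ on a frame $\{\p_s\}\cup\{Z\}$, where the $Z$ are tangent to $H$ with $L_nZ=0$. All quantities involved are values of $g_n$ on $TM\vert_H$ differentiated in the direction $n$, which is tangent to $H$. Hence the non-causality of the extension off $H$ is irrelevant. There are three cases.
\begin{itemize}
\item[(a)] For $A,B=Z,Z'$, the expression vanishes because $L_n\hat g=0$.
\item[(b)] For $A=\p_s$, $B=Z$: write $Z=a\,n+Z_S$ with $Z_S\in S$. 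Invariance of $S$ and $L_nZ=0$ give $n(a)=0$. Then $g_n(\p_s,Z)=g_n(N,Z)=-a$ is constant along $n$, and $[n,\p_s]=[n,Z]=0$, so this component vanishes.
\item[(c)] For $A=B=\p_s$, the expression vanishes because $g_n(N,N)\equiv0$.
\end{itemize}
Thus $L_n g_n\vert_H=0$.

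For the last claim, assume ($\star$) and non-degeneracy. The constant surface gravity field of Cor.\ \ref{ocvt} satisfies $L_n(g_n+\omega\otimes\omega)=0$ by Thm.\ \ref{mqp}, where $g_n+\omega\otimes\omega$ is Riemannian on $H$. Hence the first part applies with $h=\hat g+\omega\otimes\omega$.

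The main obstacle I expect is Step 3. One must check carefully that $L_n g_n$ at points of $H$ involves only tangential derivatives of $g_n\vert_H$ once $[n,\p_s]=0$. Along $H$, the field $g_n$ depends on the extension only through $n\vert_H$, since the Finsler metric is evaluated pointwise at $n$. I would verify this in the coordinates $(s,x^i)$ before relying on it.
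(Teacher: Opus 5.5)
Your proof is correct and is essentially the paper's argument. Your $N$ is lightlike, $g_n$-orthogonal to the screen $S=\ker h(n,\cdot)$ and satisfies $g_n(N,n)=-1$, so it is exactly the paper's $N$, defined by $g_n(N,\cdot)\vert_{TH}=-\sigma$ with $\sigma=h(n,\cdot)$ after normalizing $h(n,n)=1$; your flow-invariance of $S$ plays the role of the paper's $L_n\sigma=0$. The extension with $[n,\partial_s]=0$ (the paper's $L_Nn=0$), the three-case check, and the choice $h=\hat g+\omega\otimes\omega$ for the last claim (the paper uses the factor $1/\kappa^2$, which is immaterial) all coincide with the paper's.
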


The condition  $L_n g_n=0$ is equivalent to $n^*(L_{n^c} g)=0$ where $n^c$ is the {\em complete lift}. For   Knebelman's definition of Lie derivative and  Killing vector field for pseudo-Finsler spaces see \cite[Eq.\ (10.8)]{knebelman29} \cite{rund59,caponio18}. As a consequence, there seems to be little hope of proving that $n$ is Killing in a neighborhood of $H$, as would posit a Finslerian version of the Isenberg-Moncrief conjecture \cite{moncrief83}, unless we are in the Lorentz rather than Lorentz-Finsler case. The most natural expectation is that there must be a vector field $n$ in a neighborhood of $H$ such that $L_ng_n=n^*(L_{n^c}g)=0$.

In the isometric degenerate case we are not claiming that $n$ such that $L_n g_n\vert_H=0$ coincides with the geodesic lightlike tangent field (which exists by Cor.\ \ref{ocvt}).
\begin{proof}
For the first statement, without los of generality we can assume $h(n,n)=1$, otherwise replace $h$ with $h'=h/h(n,n)$ which is still such that $L_n h'=0$. Observe that the 1-form $\sigma:=h(n,\cdot)$ is such that $\sigma(n)=1$ and $L_n \sigma=0$.
 Let $N$ be the future-directed  $g_n$-lightlike field on $H$ defined by $g_n(N,\cdot)\vert_{TH}=-\sigma$, so that $g_n(N,n)=-1$. Then extend $N$ in any way, for instance geodesically, and extend $n$ so that $L_Nn=0$.

 We already know that for every $X,Y\in TH$, $L_n g_n(X,Y)=0$, see Prop.\ \ref{cmop}. It remains to prove $L_ng_n(N,X)=0$ for $X\in TH$ and $L_ng_n(N,N)=0$. For the former equation, we have on $H$
\begin{align*}
L_ng_n(N,X)&=n(g_n(N,X)-g_n(L_nN,X)-g_n(N,L_nX)\\
&=n(g_n(N,X))-g_n(N,L_nX)\\
&=-[n(\sigma(X))-\sigma(L_nX)]=-(L_n\sigma)(X)=0.
\end{align*}
For the second equation, we have on $H$
\[
L_ng_n(N,N)=n(g_n(N,N))-2g_n(L_nN,N)=0.
\]
The last statement follows taking  $c:=\kappa=\omega(n)$ and defining $h=g\vert_{TH\times TH}+\frac{1}{c^2} \omega \otimes \omega$, see also Theorem \ref{mqp}.
\end{proof}

\begin{remark}
Results  on the existence of Killing vectors for the quotient metric $h$ (or the induced metric $g_n\vert_{TH\times TH}$) obtained in \cite[Sec.\ 2.4]{minguzzi24c} can also be generalized to the Finslerian case but will not be recalled here.
\end{remark}

\section{Physical considerations on the dominant energy condition} \label{kxrt}

In the previous sections, we have shown that the equation \(\chi_\alpha = 0\) naturally leads, via the null convergence condition, to the constancy of surface gravity for compact horizons, as well as to several other topological results concerning the structure of these hypersurfaces.

The situation closely resembles that of Lorentzian geometry. However, it remains possible that the condition \(\chi_\alpha = 0\) may not even be required. Recall that the sufficiency of the null convergence condition for establishing the constancy of surface gravity was only recently realized \cite{minguzzi25d}. Prior to this, it was widely believed that even in the case of Killing horizons, this result should follow from the stronger dominant energy condition.

In this section, we aim to explore the following question: what can be inferred about the gravitational field equations if the constancy of surface gravity follows from the dominant energy condition instead? The fact that, in Lorentzian geometry, this result follows merely from the null convergence condition could be viewed as an algebraic accident. By imposing a stronger energy condition, it may become unnecessary to require \(\chi_\alpha = 0\) in the presence of matter.

As it turns out, this idea proves fruitful and leads to a constrained family of field equations (Thm.\ \ref{pcfq}). To begin with, let us introduce a velocity-dependent energy-momentum \(1\)-form \((T_v)_\alpha\), which in general relativity would correspond to \(T_{\alpha \beta}(x) v^\beta\).
%
%
%
%
%
%
%
%

The 1-form $T$ is a source term that we should connect with geometric quantities. We define
\begin{definition}
The {\em dominant energy condition} is satisfied if $-T_v^\sharp$ is a future-directed causal vector or the zero vector for every future-directed causal vector $v$. The {\em weak energy condition} is satisfied if $T_v(v)=g_v(v, T_v^\sharp)\ge 0$ for every  future-directed causal vector $v$.  The {\em null energy condition} is satisfied if $T_v(v)=g_v(v, T_v^\sharp)\ge 0$ for every  future-directed lightlike vector $v$.
\end{definition}
Of course, with $-T_v^\sharp$ we mean the vector of components $- T_v^\alpha$ where the index is raised with the inverse of $g_v$.
Note that we do not ask $T^\alpha$ to be a past-directed causal vector since the Finsler Lagrangian is not assumed to be reversible. As we want all the physics to come from the geometry of the future causal cone we are lead to  the above definition.

Observe that vacuum is defined by $T_v=0$ and in this case all previous energy conditions are satisfied.

More generally, by the  reverse Cauchy-Schwarz inequality \cite{minguzzi13c} the dominant energy condition implies the weak energy condition.  Clearly, the weak energy condition implies the null energy condition.


In general relativity, for a null vector $v$ we have the identity $8\pi (T_v)_\alpha v^\alpha=Ric(v)$ which seems reasonable to expect also in a Finslerian framework.

\begin{proposition} \label{pvge}
Assume the dominant energy condition holds and that $8\pi(T_v)_\alpha v^\alpha=Ric(v)$ for every future-directed lightlike vector $v$.
On a totally geodesic hypersurface $H$ with future-directed lightlike tangent field $n$, $(T_n)_\alpha \propto n_\alpha$.
\end{proposition}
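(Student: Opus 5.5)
The plan is to combine the vanishing of the Ricci scalar on totally geodesic hypersurfaces with a pointwise Lorentzian reverse Cauchy–Schwarz argument in $(T_pM, g_n)$.

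First, Prop.\ \ref{cmop} gives $\bar R=0$ on $H$, and in particular $Ric(n)=0$. The hypothesis $8\pi(T_v)_\alpha v^\alpha=Ric(v)$, applied at $v=n$, then yields
\[
(T_n)_\alpha n^\alpha=g_n(n,T_n^\sharp)=0 \quad \text{on } H .
\]

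Next, fix $p\in H$ and work in the Lorentzian vector space $(T_pM,g_n)$. Note that $T_n^\sharp$ is raised with $g_n^{-1}$, so the causal character of $-T_n^\sharp$ is measured by $g_n$. Since $n$ is future-directed lightlike, the dominant energy condition says that $W:=-T_n^\sharp$ is either zero or future-directed causal. The one step needing care is that "future-directed causal" refers to the Finslerian cone at $p$, whereas we want to use $g_n$. The point is that the Finsler future causal cone at $p$ lies inside the closed future $g_n$-cone, and is tangent to it along the ray of $n$. I would justify this via convexity of the future Finsler cone together with the identity $\mathscr{L}(w)\ge$ the tangent-plane bound. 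Equivalently, for any Finsler future causal $w$ we have $g_n(n,w)\le 0$, by the reverse Cauchy–Schwarz/Legendre map property \cite{minguzzi13c}, with equality only when $w\propto n$ if $w$ is lightlike and never if $w$ is timelike.

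With this in hand the conclusion is elementary. Suppose $W\neq0$. Then $g_n(n,W)=-g_n(n,T_n^\sharp)=0$. By the property just stated, $W$ cannot be Finsler timelike, so $W$ is Finsler lightlike. A future-directed lightlike $W$ with $g_n(n,W)=0$ must be proportional to $n$: this is the equality case, since the tangent hyperplane $\ker g_n(n,\cdot)$ meets the strictly convex future light cone only along the ray of $n$ (regularity of the light cone). Hence $T_n^\sharp=-\lambda n$ with $\lambda\ge 0$, and lowering the index with $g_n$ gives
\[
(T_n)_\alpha=-\lambda\, n_\alpha ,
\]
where $n_\alpha=(g_n)_{\alpha\beta}n^\beta$. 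If instead $W=0$, then $(T_n)_\alpha=0$, which is the case $\lambda=0$. Either way $(T_n)_\alpha\propto n_\alpha$.

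The main obstacle is the equality-case statement in the previous paragraph: for $w$ Finsler future causal, $g_n(n,w)\le0$, with equality iff $w\propto n$. I would prove it by writing $g_n(n,w)=\frac12\,\dd\mathscr{L}_p|_n(w)$, so that $\ker g_n(n,\cdot)$ is exactly the tangent hyperplane to the light cone at $n$, and then invoking the strict convexity of the future causal cone, which follows from the Lorentzian signature of $g$ as in \cite{minguzzi13c}.
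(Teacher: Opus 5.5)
Your argument is correct and is essentially the paper's: $Ric(n)=0$ from Prop.~\ref{cmop} together with the hypothesis gives $g_n(n,T_n^\sharp)=0$, and the equality case of the Finslerian reverse Cauchy--Schwarz inequality \cite[Thm.\ 3]{minguzzi13c}, applied to $n$ and $-T_n^\sharp$ (when the latter is nonzero), forces $-T_n^\sharp\propto n$. One caveat: your aside that the Finsler future causal cone lies inside the $g_n$-cone is false in general, since the two cones only agree to second order along $n$; it is also not equivalent to the supporting-hyperplane statement $g_n(n,w)\le 0$ (with equality iff $w\propto n$), but only the latter is used, so the proof stands.
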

As it is customary, we have denoted $v_\alpha:=(g_v)_{\alpha \beta} v^\beta$.

\begin{proof}
Indeed, on $H$ we have $Ric(n)=0$ by Prop.\ \ref{cmop}, hence $(g_n)_{\alpha \beta} (-(T_n)^\alpha) n^\beta=0$. Observe that both $-(T_n)^\alpha$ and $n^\alpha$ are future-directed causal. However, by the equality case of the reverse Cauchy-Schwarz inequality \cite[Thm.\ 3]{minguzzi13c}  $-(T_n)^\alpha\propto n^\alpha\ne 0$.
\end{proof}

In order to accomplish the constancy of surface gravity all is needed is the equality $(Ric_n)_\alpha \propto n_\alpha$ at $H$, which is naturally accomplished if $Ric_n$ is proportional to a linear combination of $T_n$ and $g_n(n, \cdot)$. Thanks to Prop.\ \ref{cnph} a more general result holds

\begin{theorem} \label{pcfq}
Assume the dominant energy condition.
If the equation\footnote{We could add a further term $\mathscr{L} B_v$ on the left-hand side as it  vanishes over lightlike vectors, but it does not reduce to existing terms in general relativity, unless $B_v \propto g_v (v,\cdot)$ in which case it can be absorbed in that already present. }
\begin{equation} \label{npuy}
R^\mu_{\mu \alpha}- a(v) v_\alpha + b(v) R^\mu{}_\nu C^\nu_{\mu \alpha}+c(v) R^\mu{}_\nu L^\nu_{\mu \alpha}+d(v) R^\mu{}_\alpha I_\mu+e(v) R^\mu{}_\alpha J_\mu= 8\pi T_\alpha
\end{equation}
 holds for suitable functions $a,b,c,d,e$, and for every  future-directed lightlike vector $v$, then the previous results of this work, including the constancy of surface gravity for compact totally geodesic hypersurfaces, hold.
\end{theorem}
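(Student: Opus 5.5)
The plan is to reduce everything to the key identity $Ric_n\vert_{TH}=0$ on a totally geodesic null hypersurface $H$. Once this holds, property ($\star$) is satisfied in its weaker form, and Theorem \ref{nndr} together with all the corollaries of Section \ref{trasl} (in particular Cor.\ \ref{ocvt} on constant surface gravity) apply verbatim. So it suffices to show that, at support vector $n$ along $H$, the 1-form $(Ric_n)_\alpha = R^\mu{}_{\mu\alpha}$ is proportional to $n_\alpha = (g_n)_{\alpha\beta}n^\beta$. The kernel of $n_\alpha$ is $TH$, so this proportionality is exactly the vanishing of the restriction.

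The first step is to evaluate Eq.\ (\ref{npuy}) at $v=n$, which is future-directed lightlike, and to handle the source term. Contracting (\ref{npuy}) with $n^\alpha$ and using the stated identities should give the relation $8\pi T_n(n)=Ric(n)$. The term $R^\mu{}_{\mu\alpha}n^\alpha$ is $Ric(n)$. The term $a\, n_\alpha n^\alpha$ vanishes. The terms $C^\nu{}_{\mu\alpha}n^\alpha$ and $L^\nu{}_{\mu\alpha}n^\alpha$ vanish by the symmetry of these tensors and their annihilation by the support vector. Finally $R^\mu{}_\alpha n^\alpha=0$ by (\ref{mki}). Then Prop.\ \ref{cmop} gives $Ric(n)=0$ on $H$, and the argument of Prop.\ \ref{pvge} applies: $-T_n^\sharp$ and $n$ are both future-directed causal (or $T_n=0$) with vanishing $g_n$-product. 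The equality case of the reverse Cauchy--Schwarz inequality then yields $(T_n)_\alpha\propto n_\alpha$.

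The second step is to show that every curvature correction term on the left-hand side is proportional to $n_\alpha$ on $H$. This is Prop.\ \ref{cnph}. The terms $R^\mu{}_\nu C^\nu{}_{\mu\alpha}$ and $R^\mu{}_\nu L^\nu{}_{\mu\alpha}$ vanish identically there, because $C$ and $L$ are totally symmetric and annihilated by the Liouville vector in the relevant indices. The terms $R^\mu{}_\alpha I_\mu$ and $R^\mu{}_\alpha J_\mu$ are proportional to $n_\alpha$, because $I$ and $J$ are annihilated by the support vector. The term $a(n)n_\alpha$ is trivially of this form. Solving (\ref{npuy}) for $R^\mu{}_{\mu\alpha}$ then gives $(Ric_n)_\alpha\propto n_\alpha$. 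This is condition (ii) of the proposition characterizing $Ric_n\vert_{TH}=0$.

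The main obstacle is the index bookkeeping in the second step. One must check that the index positions in (\ref{npuy}) match the hypotheses of Prop.\ \ref{cnph}: the contraction must be over indices of $C$ and $L$ annihilated by $v$, with the free index $\alpha$ left on them. This works because of total symmetry. For the $I,J$ terms, the contraction is $R^\mu{}_\alpha B_\mu$, which is the form covered by the second statement of Prop.\ \ref{cnph}. I expect no further difficulty, since the coefficient functions $a,\dots,e$ are merely evaluated at $n$ and play no role in the proportionality.
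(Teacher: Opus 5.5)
Your proposal is correct and follows the paper's proof step for step. Contracting (\ref{npuy}) with $n$ gives $Ric(n)=8\pi T_n(n)$; Prop.~\ref{pvge}, with $Ric(n)=0$ from the totally geodesic property, yields $(T_n)_\alpha\propto n_\alpha$; and Prop.~\ref{cnph} then gives $(Ric_n)_\alpha\propto n_\alpha$, i.e.\ $Ric_n\vert_{TH}=0$, so ($\star$) holds. Your term-by-term check of the contraction and of the index matching with Prop.~\ref{cnph} fills in details that the paper leaves implicit.
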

Note that we are not requiring $\chi_\alpha=0$ and, as a consequence, we are placing much more severe constraints on the gravitational equations, as we assume that, as in general relativity, the recovery of constancy of surface gravity should not depend on a vacuum assumption.

\begin{proof}
Contracting the equation with a lightlike vector $v^\alpha$ gives at support vector $v$, $Ric(v)=8\pi(T_v)_\alpha v^\alpha$. By Prop.\ \ref{pvge}, on $H$ choosing $v=n$,  $(T_n)_\alpha \propto n_\alpha$. Using Prop.\ \ref{cnph}  we get $(Ric_n)_\alpha \propto n_\alpha$.
\end{proof}

\begin{proposition}
Assume Eq.\ (\ref{npuy}). The null energy condition and the null convergence condition are equivalent. 
\end{proposition}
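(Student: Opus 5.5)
Contract Eq. (\ref{npuy}) with the future-directed lightlike support vector $v^\alpha$ itself and check that every term except the first vanishes or reduces to $Ric(v)$. This yields the identity $Ric(v)=8\pi T_v(v)$ for every future-directed lightlike $v$, and the equivalence of the two conditions follows at once. Indeed, the null energy condition is $T_v(v)\ge 0$ and the null convergence condition is $Ric(v)\ge 0$, both on the same set of vectors.

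The term-by-term check goes as follows, at support vector $v$.
\begin{itemize}
\item The first term gives $R^\mu{}_{\mu\alpha}v^\alpha=(Ric_v)(v)=Ric(v)$ by the definition of the Ricci 1-form and the Ricci scalar.
\item The second term gives $a(v)\,v_\alpha v^\alpha=2a(v)\mathscr{L}(v)=0$, since $v$ is lightlike.
\item The Cartan and Landsberg terms vanish because $C^\nu{}_{\mu\alpha}v^\alpha=0$ and $L^\nu{}_{\mu\alpha}v^\alpha=0$. Both tensors are totally symmetric and annihilated by the support vector; for $L$ this was recalled after Eq. (\ref{lan}), and for $C$ it is the standard consequence of homogeneity.
\item The last two terms give $R^\mu{}_\alpha v^\alpha=R_v(v)^\mu=0$ by Eq. (\ref{mki}) of Prop. \ref{xll}, so the $I_\mu$ and $J_\mu$ contributions vanish regardless of those one-forms.
\end{itemize}
Hence the left-hand side contracted with $v$ equals $Ric(v)$, and the right-hand side equals $8\pi T_v(v)$.

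There is no real obstacle. The only point needing care is the index placement of $C$ and $L$: we need that their last lower index is the one contracted with $v$. By total symmetry of the lowered tensors $C_{\nu\mu\alpha}$ and $L_{\nu\mu\alpha}$, any lower slot contracted with $v$ gives zero, and raising the first index with $g_v$ preserves this.
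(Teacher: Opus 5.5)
Your proposal is correct and follows the paper's proof: contracting Eq.~(\ref{npuy}) with the lightlike support vector $v$ gives $Ric(v)=8\pi T_v(v)$, and the equivalence of the two conditions is then immediate. Your term-by-term check spells out what the paper leaves implicit, namely $g_v(v,v)=2\mathscr{L}(v)=0$, $C^\nu_{\mu\alpha}v^\alpha=L^\nu_{\mu\alpha}v^\alpha=0$, and $R_v(v)=0$.
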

In vacuum ($T=0$) Eq. (\ref{npuy}) implies $Ric(v)=2 a(v) \mathscr{L}$. For some choice of coefficients it can also imply $Ric(v)=0$, see Example \ref{cbtr}.
\begin{proof}
Contract (\ref{npuy}) with $v$ lightlike, then  $Ric(v)=8\pi T_v(v)$.
We conclude that $Ric(v) \ge 0$ iff $T_v(v)\ge 0$, where the inequality on the left expresses the null convergence condition while that on the right expresses the null energy condition.
\end{proof}

We wish to stress the subtle difference between the null convergence condition, which is a condition on the geometry, and the null energy condition, which is a condition on the nature of the source. In general relativity,  the two are known to be equivalent and so these terms are often used interchangeably.  However, it is really dependent on the field equations whether they are equivalent or not. Fortunately, in a gravitational theory based on (\ref{npuy}) they are.

The found family is significant as it contains the equation we proposed in \cite[Sec.\ 5.4.2]{minguzzi14c}
provided $L=0$. Let us express this with a theorem
\begin{theorem} \label{vxe}
On a Landsberg pseudo-Finsler space the tensor
\begin{equation} \label{npx}
E^\beta{}_\alpha:=\frac{1}{2} (R^{HH} {}^{\mu \beta}{}_{\mu \alpha} +R^{HH} {}^{\beta \mu}{}_{\alpha \mu}-\delta^\beta_\alpha R^{HH}  {}^{\mu \nu}{}_{\mu \nu})
\end{equation}
satisfies $\nabla^{H}_\beta E^\beta{}_\alpha=0$, so the current $8\pi T^\beta:= E^\beta{}_\alpha v^\alpha$ satisfies  $\nabla^{H}_\alpha T^\alpha=0$.
Additionally,
\begin{equation} \label{cmxx}
8\pi T_\beta=R^\mu{}_{\mu \beta} +R_{\mu  \nu} C^{\mu\nu}_\beta-R^\mu_\beta I_\mu - \frac{1}{2}v_\beta R^{HH}  {}^{\mu \nu}{}_{\mu \nu} ,
\end{equation}
which has the  form (\ref{npuy}) (for $a=\frac{1}{2} R^{HH}  {}^{\mu \nu}{}_{\mu \nu}$, $b=-d=1$, $c=e=0$).
\end{theorem}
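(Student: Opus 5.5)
The plan is to treat the two claims of Theorem \ref{vxe} separately. For the conservation law we use the contracted second Bianchi identity for the horizontal-horizontal curvature of the chosen Finsler connection (Chern-Rund, with $\nabla^H g=0$). For the identity (\ref{cmxx}) we contract $E^\beta{}_\alpha$ with $v^\alpha$ and rewrite each curvature term through the non-linear curvature, using (\ref{cjmc}).

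\emph{Conservation.} In a general Finsler space the second Bianchi identity for $R^{HH}$ carries extra terms involving $R^N$ paired with the mixed ($HV$) curvature $P$. For Chern-Rund the latter is built from the Landsberg tensor and its vertical derivatives, and it vanishes when $L=0$. On a Landsberg space we therefore have the clean identity $\nabla^H_{[\lambda}R^{HH}{}^\alpha{}_{|\beta|\gamma\delta]}=0$. We also need pair symmetry $R_{\alpha\beta\gamma\delta}=R_{\gamma\delta\alpha\beta}$ up to Landsberg-type terms; I expect to use \cite[Sec.\ 5.4]{minguzzi14c}, where the first Bianchi identity holds modulo $L$-terms. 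I will contract the second Bianchi identity twice, once on $(\alpha,\gamma)$ and once on $(\beta,\delta)$, with indices raised by $g$, which is allowed because $\nabla^H g=0$. This gives $\nabla^H_\beta(\mathrm{Ric}^\beta{}_\alpha)+\nabla^H_\beta(\mathrm{Ric}'{}_\alpha{}^\beta)-\partial_\alpha \mathcal R=0$ for the two Ricci contractions appearing in (\ref{npx}), which is $\nabla^H_\beta E^\beta{}_\alpha=0$. The contraction with $v^\alpha$ then uses $\nabla^H v=0$, since the Liouville field is horizontally parallel; hence $\nabla^H_\beta(E^\beta{}_\alpha v^\alpha)=0$.

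\emph{Formula (\ref{cmxx}).} Lower $\beta$ and contract with $v^\alpha$. The first term becomes $R^{HH}{}^\mu{}_{\beta\mu\alpha}v^\alpha$. Since the Chern-Rund curvature with its first two indices lowered is antisymmetric up to a Cartan-torsion term, namely $R_{\alpha\beta\gamma\delta}+R_{\beta\alpha\gamma\delta}=-2C_{\alpha\beta\mu}R^\mu{}_{\gamma\delta}$, we can move $v$ into the first slot and use (\ref{cjmc}). This yields $R^\mu{}_{\mu\beta}$ together with a term $C$--$R$. The second term $R^{HH}{}_\beta{}^\mu{}_{\alpha\mu}v^\alpha$ is handled by the same antisymmetry and (\ref{cjmc}); after contracting with $g^{\mu\nu}$ and using $C_{\mu\nu\beta}g^{\mu\nu}=I_\beta$, it produces the pieces $R_{\mu\nu}C^{\mu\nu}{}_\beta$ and $-R^\mu{}_\beta I_\mu$, where $R^\mu{}_\nu$ is the curvature endomorphism (\ref{hhn}). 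The last term gives $-\tfrac12 v_\beta R^{HH}{}^{\mu\nu}{}_{\mu\nu}$ immediately. Collecting the factors of $\tfrac12$ with the two contributions to $R^\mu{}_{\mu\beta}$ gives (\ref{cmxx}). It then matches (\ref{npuy}) with $a=\tfrac12 R^{HH}{}^{\mu\nu}{}_{\mu\nu}$, $b=1$, $d=-1$ and $c=e=0$, using $R_{\mu\nu}C^{\mu\nu}{}_\beta=R^\mu{}_\nu C^\nu{}_{\mu\beta}$, which follows from Proposition \ref{xll}.

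The main obstacle is the index bookkeeping in the second step: tracking the Cartan-torsion correction terms produced by the failure of antisymmetry, and making sure that the second-Bianchi $P$-terms really do vanish under $L=0$. For both points I would rely on the explicit identities collected in \cite{minguzzi14c}.
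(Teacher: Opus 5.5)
Your overall plan (twice-contracted second Bianchi identity for the conservation law; contraction with $v$, symmetry identities and (\ref{cjmc}) for (\ref{cmxx})) has the right structure. However, the conservation step rests on a false claim. When $L=0$ the Chern--Rund and Berwald horizontal coefficients coincide, $\Gamma^\alpha_{\beta\gamma}=G^\alpha_{\beta\gamma}$. So the Chern--Rund hv-curvature is, up to sign, $\partial\Gamma^\alpha_{\beta\gamma}/\partial v^\delta=G^\alpha{}_{\beta\gamma\delta}$, which is the Berwald curvature. It vanishes only on Berwald spaces; if it vanished whenever $L=0$, every Landsberg space would be Berwald. The Chern--Rund $P$ is governed by the full horizontal derivatives of the Cartan tensor, of which $L$ is only the contraction with $v$. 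Hence the clean identity $\nabla^H_{[\lambda}R^{HH}{}^\alpha{}_{|\beta|\gamma\delta]}=0$ is not available, and as written your argument covers only Berwald spaces.

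The missing idea is that the $G\cdot R^N$ correction terms survive in the Bianchi identity and cancel only after your double contraction, which yields
\[
2\nabla^H_\beta E^\beta{}_\alpha=G^{\mu\nu}{}_{\delta\alpha}R^\delta{}_{\mu\nu}+(G^{\delta\nu}{}_{\sigma\delta}-G^{\nu\delta}{}_{\sigma\delta})R^\sigma{}_{\nu\alpha}.
\]
Both terms vanish because the lowered Berwald curvature $G_{\alpha\beta\gamma\delta}=g_{\alpha\mu}G^\mu{}_{\beta\gamma\delta}$ is totally symmetric, while $R^\delta{}_{\mu\nu}$ is antisymmetric in $\mu\nu$. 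Total symmetry of $G_{\alpha\beta\gamma\delta}$ is equivalent to the Landsberg condition, so this is exactly where $L=0$ enters. Your double contraction itself needs no pair symmetry. Note also that the failure of pair symmetry is of Cartan type ($C\cdot R^N$), not Landsberg type, so it would not disappear under $L=0$ anyway.

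For (\ref{cmxx}), the identity $R_{\alpha\beta\gamma\delta}+R_{\beta\alpha\gamma\delta}=-2C_{\alpha\beta\mu}R^\mu{}_{\gamma\delta}$ cannot ``move $v$'' as you claim. In both $R^{HH}{}^\mu{}_{\beta\mu\alpha}v^\alpha$ and $R^{HH}{}_\beta{}^\mu{}_{\alpha\mu}v^\alpha$ the vector $v$ sits in a two-form slot. Swapping the first two indices never brings it into the second slot, which is where (\ref{cjmc}) applies. Applied to the second term, your identity only reduces it to the first one, $R_\beta{}^\mu{}_{\alpha\mu}v^\alpha=R^\mu{}_{\beta\mu\alpha}v^\alpha+2R^\mu{}_\nu C^\nu{}_{\beta\mu}$; this is where the $C$-term comes from.

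To evaluate $R^\mu{}_{\beta\mu\alpha}v^\alpha$ you need the first Bianchi identity in its traced form, $R^{HH}{}^\mu{}_{\beta\mu\alpha}-R^{HH}{}^\mu{}_{\alpha\mu\beta}=R^\mu{}_{\alpha\beta}I_\mu$. This follows from Bianchi I together with the trace $R^{HH}{}^\mu{}_{\mu\gamma\delta}=-I_\sigma R^\sigma{}_{\gamma\delta}$ of your antisymmetry identity. It gives $R^\mu{}_{\beta\mu\alpha}v^\alpha=R^\mu{}_{\mu\beta}-R^\mu{}_\beta I_\mu$, and this, not a trace taken inside the second term, is the source of the $I$-term. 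With these two identities the bookkeeping closes. Take half the sum of $R^\mu{}_{\mu\beta}-R^\mu{}_\beta I_\mu$ and $R^\mu{}_{\mu\beta}-R^\mu{}_\beta I_\mu+2R^\mu{}_\nu C^\nu{}_{\beta\mu}$, and subtract $\frac12 v_\beta R^{HH}{}^{\mu\nu}{}_{\mu\nu}$; the result is (\ref{cmxx}).
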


Note that in the theorem we do not distinguish between horizontal Berwald or Chern-Rund covariant derivatives as $L=0$. Note also that  $R^{HH}  {}^{\mu \nu}{}_{\mu \nu}=g^{\mu \nu}\frac{\p}{\p v^\mu} R^\sigma{}_{\sigma \nu}$ and in the Lorentzian case the current reduces itself to that of General Relativity.


\begin{proof}
By the contracted HHH-second Bianchi identity \cite[Eq.\ (112)]{minguzzi14c}
\begin{equation} \label{kkg}
2\nabla^H_\beta E^\beta{}_\alpha={G}^{\mu \nu}{}_{ \delta \alpha} R^\delta_{\mu
\nu}+({G}^{\delta \nu}{}_{ \sigma \delta}
-{G}^{ \nu \delta}{}_{ \sigma \delta} )R^\sigma_{ \nu \alpha}=0.
\end{equation}
The last equality uses only the total symmetry of the Berwald curvature ${G}_{\alpha \beta \gamma \delta}$, a property which is equivalent to the Landsberg condition \cite[Prop.\ 12.5]{vattamany04}\cite[Prop.\ 2.4]{crampin11} \cite[Eq.\ (58)]{minguzzi14c}.
 Using the HH-symmetry  \cite[Eq.\ (87)]{minguzzi14c},
\[
R^{HH}{}_{\beta \mu \alpha \nu}-R^{HH}{}_{\mu \beta  \nu \alpha}=-2R^
\sigma_{\alpha \nu} C_{\beta \mu \sigma},
\]
contracting $\mu$ with $\nu$, we get  $R^{HH}{}_\beta{}^{\mu}{}_{\alpha
\mu}-R^{HH}{}^\mu{}_{\beta  \mu \alpha}=-2R^
\mu_{\alpha \nu} C^\nu_{\beta  \mu}$.  Moreover \cite[Eq.\ (91)]{minguzzi14c}, $R^{HH}{}^\mu{}_{\beta \mu \alpha}-R^{HH}{}^\mu{}_{\alpha \mu \beta}=R^\mu_{\alpha \beta} I_\mu$, so using it
\[
R^{HH}{}^\mu{}_{\alpha
\mu \beta}-R^{HH}{}_\beta{}^{\mu}{}_{\alpha \mu}=2R^
\mu_{\alpha \nu} C^\nu_{\beta \mu}+R^\mu_{\beta\alpha}I_\mu.
\]
Using it to replace the  second term of (\ref{npx}) we get Eq.\ (\ref{cmxx}) (note that, again by \cite[Eq.\ (91)]{minguzzi14c}, $R^{HH} {}^{\mu}{}_{\beta\mu \alpha} v^\alpha=R^{HH} {}^{\mu}{}_{\alpha \mu \beta} v^\alpha -R^\mu_\beta I_\mu =R^\mu_{\mu \beta}-R^\mu_\beta I_\mu $).
\end{proof}

\begin{remark}
An analogous study of the HHH-Bianchi identity for the Cartan connection without imposing the Landsberg condition leads to
\begin{equation}
8\pi \nabla^{HC}_\mu T^\mu= R^{VH}_{\textrm{Car}}{}^{\nu}{}_{\alpha \beta\nu} R^{\alpha \beta}.
\end{equation}
with the more general vector field
\begin{align}
8\pi T_\mu&=\frac{1}{2} \big(R^{HH}_{\textrm{Car}}{}^{\sigma}{}_{\mu\sigma\alpha}+R_{\textrm{Car}}^{HH}{}_\mu{}^{\sigma}{}_{\alpha\sigma}- g_{\mu\alpha} R_{\textrm{Car}}^{HH}{}^{\gamma\delta}{}_{\gamma\delta}\big) v^\alpha\\
&= \big(R^{HH}_{\textrm{Car}}{}^{\sigma}{}_{\mu\sigma\alpha}-\frac{1}{2}  g_{\mu\alpha} R^{HH}_{\textrm{Car}}{}^{\gamma\delta}{}_{\gamma\delta}\big) v^\alpha \label{ckt1}\\
&=R^\sigma{}_{\sigma\mu}+R^\delta_\sigma C^\sigma_{\delta\mu}- R^\sigma_\mu I_\sigma -\frac{1}{2}  v_\mu R^{HH}_{\textrm{Car}}{}^{\gamma\delta}{}_{\gamma\delta}, \label{ckt2}
\end{align}
which is again of the form (\ref{npuy}). However, it  has no vanishing Cartan horizontal divergence, in general. Since $R^{HH}_{\textrm{Ber}}{}^{\alpha\beta}{}_{\alpha \beta}=R^{HH}_{\textrm{Car}}{}^{\alpha\beta}{}_{\alpha\beta}+ J_\alpha J^\alpha-L_{\alpha\beta\gamma} L^{\alpha\beta\gamma}$, for $L=0$ (which is equivalent to $R^{VH}_{\textrm{Car}}=0$ \cite[Thm.\ 25.3]{matsumoto86}) it is horizontally conserved and  becomes coincident with (\ref{cmxx}). We note that using $R^\sigma{}_{\sigma\beta}=-2 \chi_\beta-\frac{\p}{\p v^\sigma}R^\sigma_\beta$ it can be rewritten in the more elegant form
\begin{equation}
8\pi T_\beta=-2\chi_\beta-\nabla^{VC}_\sigma R^\sigma_\beta-\frac{1}{2}  v_\beta R^{HH}_{\textrm{Car}}{}^{\gamma\delta}{}_{\gamma\delta},
\end{equation}
where $\nabla^{VC}$ is the vertical Cartan covariant derivative.
\end{remark}

\begin{example}
A result by Rund \cite{rund62} (see also \cite{matsumoto86,takano88}), later slightly generalized by Shibata and other  authors  \cite{shibata78,ishikawa80b,ikeda81}, states that for constant curvature spaces \cite[Sec.\ 26]{matsumoto86}, i.e.\ having curvature endomorphism of the form $R_{\alpha\beta}=2K\mathscr{L} h_{\alpha\beta}$, where $K$ is a constant, a certain Einstein's type tensor is Cartan horizontally conserved. Its expression leads to the  Cartan horizontally  conserved current
\begin{equation}
8\pi T'{}^\sigma=R^{HH}_{\textrm{Car}}{}^{\mu\sigma}{}_{\mu\alpha} v^\alpha -\frac{1}{2} v^\sigma  \big( R^{HH}_{\textrm{Car}}{}^{\mu \nu}{}_{\mu\nu}+   K 2 \mathscr{L}   R^{VV}_{\textrm{Car}}{}^{\mu\nu}{}_{\mu\nu}\big),
\end{equation}
which, due to the equality between (\ref{ckt1}) and (\ref{ckt2}),  is also of the form (\ref{npuy}). The imposed condition is particularly strong but still the result might be of interest in some applications.
\end{example}

\begin{example} \label{cbtr}
Let us consider a simplifying case. Let us determine $a(v)$ in vacuum under the assumption $b=c=d=e=0$ and that Eq.\ (\ref{npuy}) holds for every causal $v$. Function $a(v)$ can only depend on the  geometry as there is no  source content. The  vacuum equation is
\[
R^\mu{}_{\mu \alpha}=a(v) v_\alpha .
\]
Observe that $a(v)$ is positive homogeneous of  degree zero. Differentiating
\[
\frac{\p}{\p v^\beta} R^\mu{}_{\mu \alpha}= v_\alpha \frac{\p a }{\p v^\beta} + a g_{\alpha \beta},
\]
thus contracting with $g^{\alpha \beta}$ (the spacetime dimension is $n+1$)
\[
g^{\alpha \beta}\frac{\p}{\p v^\beta} R^\mu{}_{\mu \alpha}=v^\beta\frac{\p a }{\p v^\beta} + a (n+1)=a(n+1)\quad \Rightarrow \quad a(v)=\frac{1}{n+1} g^{\alpha \beta}\frac{\p}{\p v^\beta} R^\mu{}_{\mu \alpha}
\]
If in the original vacuum equation $a(v)$ had precisely this expression, this equation is adding nothing. It it particularly convenient to explore the case in which $a(v)=k g^{\alpha \beta}\frac{\p}{\p v^\beta} R^\mu{}_{\mu \alpha}$ with $k\ne \frac{1}{n+1}$ for  in this case the previous equation establishes that $g^{\alpha \beta}\frac{\p}{\p v^\beta} R^\mu{}_{\mu \alpha}=0$ and so the vacuum equation leads to $R^\mu{}_{\mu \alpha}=0$ and so implies also $Ric(v)=0$, which is very desirable. In general relativity $a(v)$ would be half the scalar curvature, thus we are led to $k=1/2$ (in spacetime dimension different from 2), and to the Finslerian gravitational equation
\begin{equation} \label{cmbt}
R^\mu{}_{\mu \alpha}-\frac{1}{2}  [g^{\mu \nu}\frac{\p}{\p v^\mu} R^\sigma{}_{\sigma \nu}] v_\alpha = 8\pi T_\alpha.
\end{equation}
\end{example}

It is important that our arguments based on energy conditions brought us to vectorial equations. Some vectorial or tensor equation are equivalent to scalar equations (we shall see an example below), but in general in Finsler gravity it was not even clear if the gravitational equations had to be tensorial (most proposals particularly in the seventies and eighties), vectorial (we suggested to use a velocity-dependent energy-momentum in \cite{minguzzi14c,minguzzi15d}) or scalar (notably \cite{rutz93,pfeifer12}).

It is appealing that the arguments leading to the above equations only involve the non-linear curvature as this sidesteps all indeterminacies that would come from algebraic analogies, as there is no need to choose a linear Finsler connection  to express the equations. We recall that via algebraic analogies basically all options have been explored, for instance Berwald \cite{li10b}, Chern-Rund \cite{horvath50}, and Cartan \cite{takano74,ikeda79,ishikawa80,ishikawa81,takano88}.

Those  using just the non-linear curvature  as this work, can be classified in all categories because all notable linear Finsler connections induce the same non-linear connection.
Of this type are  the equations one obtains from the action integral over the indicatrix $\int_I Ric(v)  \dd \nu$ where $\nu$ is a suitable  measure \cite{chen08b,pfeifer12,hohmann19,javaloyes21}.
Of this type are also those approaches that assume a Landsberg condition \cite{minguzzi14c} (making the Chern-Rund and Berwald options equivalent) or those that build the Einstein tensor via an algebraic analogy by making  use of the Akbar-Zadeh Ricci tensor \cite{li14,manjunatha23} $R^{AZ}_{\alpha \beta}=\frac{1}{2}\p^2  Ric(v)/\p v^\alpha \p v^\beta$ and the associated scalar curvature. The latter approach has been pursed in vacuum where the tensorial analogy is straightforward (it is a type of Einstein condition  considered by Akbar-Zadeh in positive signature \cite{akbarzadeh95,bao95})
\begin{equation} \label{cpkt}
\frac{\p^2  Ric(v)}{\p v^\alpha \p v^\beta}-\frac{1}{2} [g^{\mu \nu} \frac{\p^2  Ric(v)}{\p v^\mu \p v^\nu}] g_{\alpha \beta}=0 .
\end{equation}
However, by contracting with $g^{\alpha \beta}$ and using positive homogeneity one immediately finds that (in spacetime dimension different from 2) this equation is really equivalent to the simpler Ricci flat condition
\[
Ric(v)=0.
\]


Although the equations (\ref{cmxx}) or  (\ref{cmbt})  do not imply $\chi=0$, in vacuum for (\ref{cmbt}) this happens, as we have a result which clarifies the connection with $Ric(v)=0$.

\begin{theorem}
Suppose $n\ne 1$.
In vacuum ($T=0$) equation (\ref{cmbt}) is equivalent to $Ric_v=0$ which is equivalent to `$Ric(v)=0$ and $\chi=0$'.
\end{theorem}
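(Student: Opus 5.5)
The plan is to prove a cycle of implications using Example \ref{cbtr} and the identity (\ref{cnng}) from the proof of Prop.\ \ref{kkj}. Throughout, $R^\mu{}_{\mu\alpha}=(Ric_v)_\alpha$, the equation is assumed for every $v$ (at least on an open cone, so that $v$-derivatives make sense), and the spacetime dimension is $n+1\neq 2$.

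\emph{(\ref{cmbt}) with $T=0$ implies $Ric_v=0$.} Write $S(v):=g^{\mu\nu}\frac{\p}{\p v^\mu}R^\sigma{}_{\sigma\nu}$. This is positively homogeneous of degree zero, since $R^\sigma{}_{\sigma\nu}$ has degree one. The vacuum equation reads $(Ric_v)_\alpha=\frac12 S(v)v_\alpha$. Differentiate in $v^\beta$ and use $\p v_\alpha/\p v^\beta=g_{\alpha\beta}$, which holds because $v_\alpha=\p\mathscr{L}/\p v^\alpha$. Then contract with $g^{\alpha\beta}$; Euler's theorem kills $v^\beta\p_\beta S$. This gives $S=\frac{n+1}{2}S$, hence $S=0$ when $n\ne1$, and so $Ric_v=0$. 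This is exactly the computation of Example \ref{cbtr} with $k=1/2\neq 1/(n+1)$.

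\emph{$Ric_v=0$ implies (\ref{cmbt}) with $T=0$.} Both terms on the left of (\ref{cmbt}) vanish, the second because $S$ is a $v$-derivative of $Ric_v$, which is identically zero.

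\emph{$Ric_v=0$ if and only if $Ric(v)=0$ and $\chi=0$.} Suppose $Ric_v=0$. Then $Ric(v)=Ric_v(v)=0$ identically, so $\p Ric/\p v^\alpha=0$, and (\ref{cnng}) gives $\chi_\alpha=0$. Conversely, if $Ric(v)\equiv0$ and $\chi=0$, then (\ref{cnng}), or equivalently Eq.\ (\ref{two}), gives $2(Ric_v)_\alpha=\p_\alpha Ric(v)=0$.

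The only delicate point is the domain. Equation (\ref{cmbt}) must hold on an open set of $v$'s, for instance all of $TM\backslash 0$ or the timelike cone with the lightlike cone obtained by continuity, so that differentiating in $v$ is legitimate. The same openness is needed for $Ric(v)=0$, so that its $v$-gradient vanishes. The hypothesis $n\neq 1$ is used only in the cancellation $S=\frac{n+1}{2}S$.
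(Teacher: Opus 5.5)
Your proof is correct and follows the paper's argument. The first equivalence comes from the trace computation of Example~\ref{cbtr}, giving $(n-1)S=0$, and the forward half of the second equivalence comes from (\ref{cnng}); the one difference is the final implication, where the paper detours through (\ref{cpkt}), substituting $\p Ric/\p v^\alpha=2R^\gamma{}_{\gamma\alpha}$ and contracting with $v^\beta$ to recover (\ref{cmbt}), whereas you note directly that $Ric\equiv 0$ has vanishing $v$-gradient, so (\ref{two}) gives $Ric_v=0$ at once, which is simpler.
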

Thus in the theory expressed by  Eq.\ (\ref{cmbt}) vacuum is expressed by the vanishing of the Ricci 1-form rather than the Ricci scalar.
\begin{proof}
Assume (\ref{cmbt}) in the vacuum case. By the same argument given above when discussing the form of $a(v)$ we have  $g^{\mu \nu}\frac{\p}{\p v^\mu} R^\sigma{}_{\sigma \nu}=0$ and so $R^\mu{}_{\mu \alpha}=0$. The converse is clear so this proves the first equivalence. Now, if $R^\mu{}_{\mu \alpha}=0$ as we have $Ric(v)=R^\mu{}_{\mu \alpha} v^\alpha$ and  Eq.\ (\ref{cnng}) $\chi_\alpha=\frac{\p}{\p v^{\alpha}} Ric(v) -2R^\gamma{}_{\gamma \alpha}$ the direction to the right of the second equivalence follows. For the converse, if $\chi_\alpha=0$, we have again by Eq.\ (\ref{cnng})    $\frac{\p}{\p v^{\alpha}} Ric(v)=2R^\gamma{}_{\gamma \alpha}$. As noted above  $Ric(v)=0$ is equivalent to  (\ref{cpkt}), where  substituting the just found expression for $\frac{\p}{\p v^{\alpha}} Ric(v)$  and multiplying by $v^\beta$ we get (\ref{cmbt}).
\end{proof}


Summarizing, there are two conclusions that can be drawn depending on the true physical origin of constant surface gravity in compact horizons
\begin{itemize}
\item[(i)]  If the constancy of surface gravity  follows from the {\bf null convergence condition}, then, we argued, the equation $\chi=0$ must hold, even not in vacuum, and should hence be imposed along with $Ric(v)=0$ in vacuum. Whether the null convergence condition is equivalent to the null energy condition depends on the other gravitational field equations.
\item[(ii)]  If the constancy of surface gravity  follows from the {\bf dominant energy condition}, then, we argued, there is no more reason to expect the equation $\chi=0$ in the non-vacuum case. However, the Finslerian gravitational equations are severely constrained as they need to compensate for the missing condition $\chi=0$ in presence of matter. They  include a family of possibilities (Eq.\ (\ref{npuy})) where $T^\sharp_v$ is the  energy-momentum. Its most notable cases are Eq.\ (\ref{cmxx}) and Eq.\ (\ref{cmbt}). Of course, if any of them  is  physically correct then it is expected to hold also when the dominant energy condition is violated. In vacuum  the latter equation is equivalent to the vanishing of the Ricci 1-form or to the pair given by  `$Ric(v)=0$ and $\chi=0$'.
\end{itemize}

\section{Conclusions}

In this work, we introduced the notions of  Ricci 1-form and totally geodesic null hypersurface for Finsler spacetimes. We proved that the equation
\[
Ric_n\vert_{TH}=0,
\]
 on a totally geodesic hypersurface is equivalent to $i_n\dd \omega=0$. Consequently, several techniques developed in Lorentzian geometry, including the ribbon argument, are expected to be translatable into a Finslerian framework. Rather than reworking all proofs, some of which would require rather non-trivial modifications, we devised a simple trick that allowed us to embed the hypersurface into a Lorentzian manifold, making the translation immediate. We presented several results of this kind, with a special focus on those pertaining to the problem of establishing the constancy of surface gravity in compact totally geodesic hypersurfaces. These results are of importance due to the interpretation of surface gravity as temperature, its constancy expressing the zeroth law of thermodynamics.

Having established this generalization, we then regarded the problem of deriving the equation $Ric_n\vert_{TH}=0$ as a novel physical constraint on potential Finslerian gravitational theories. We showed that this equation can be derived in two distinct ways by assuming that it should hold, as in general relativity, even in the presence of matter.

In the first case, we derived it from the null energy condition, while in the second, we derived it from the dominant energy condition. The former approach suggests that the equation $\chi=0$ must hold, even in the presence of matter. The latter approach suggests some more constrained possibilities expressed by Eq.\ (\ref{npuy}). The novel Eq.\ (\ref{cmbt}) is interesting, as it reduces to $\chi=0$ and $Ric=0$ in vacuum.  The option (\ref{cmxx}) is better behaved with regard to conservation laws (under $L=0$),  and still implies the constancy of  surface gravity.

In conclusion, this work has two facets. It is a mathematical work deriving many new detailed results on totally geodesic null hypersurfaces in Finsler spacetimes, and it is also a physical work supporting certain gravitational field equations in novel ways. It should be noted that physical support for gravitational equations can only come in the following ways: (a) finding special solutions and comparing them with experiment, and (b) establishing connections with very desirable physical conclusions.

Strategy (a) has been followed in several works, but so far it has not been able to direct attention to a specific set of equations. Strategy (b) has been followed mostly with regard to the criteria of (i) the validity of conservation laws deduced from the equations, and (ii) the variational derivation of the equations. Passing criterion (i) has proved particularly difficult, as no clear-cut conservation law has been obtained. As for (ii), variants have been obtained that do not reduce straightforwardly to general relativity and have yet to find experimental confirmation. Here, we have introduced and followed a novel direction (iii): establishing the constancy of surface gravity. In it we found  considerable constraints and support for specific equations which so far were not physically motivated. There is hope that further investigation will help shed light on the conservation laws induced by the proposed equations.

\section*{\normalsize Data availability statement}
No new data were created or analysed in this study.

\section*{\normalsize Conflict of Interest statement}
The author of this publication declares no conflict of interest.


\begin{thebibliography}{10}

\bibitem{aazami14}
A.~B. Aazami and M.~A. Javaloyes.
\newblock {P}enrose's singularity theorem in a {F}insler spacetime.
\newblock {\em Class. Quantum Grav.}, 33:025003, 2016.

\bibitem{akbarzadeh95}
H.~Akbar-Zadeh.
\newblock Generalized {E}instein manifolds.
\newblock {\em J. Geom. Phys.}, 17:342--380, 1995.

\bibitem{antonelli93}
P.~L. Antonelli, R.~S. Ingarden, and M.~Matsumoto.
\newblock {\em The Theory of Sprays and {F}insler Spaces with Applications in
  Physics and Biology}.
\newblock Springer Science+Business Media, Dordrecht, 1993.

\bibitem{bao95}
D.~Bao.
\newblock Review of {A}kbar-{Z}adeh's {Generalized Einstein manifolds}.
\newblock MR1365208, 1995.

\bibitem{bao00}
D.~Bao, S.-S. Chern, and Z.~Shen.
\newblock {\em An Introduction to {R}iemann-{F}insler Geometry}.
\newblock {Springer-Verlag}, New York, 2000.

\bibitem{beem70}
J.~K. Beem.
\newblock Indefinite {F}insler spaces and timelike spaces.
\newblock {\em Can. J. Math.}, 22:1035--1039, 1970.

\bibitem{bejancu14}
A.~Bejancu and H.~R. Farran.
\newblock Theory of {F}insler submanifolds via {B}erwald connection.
\newblock {\em International Electronic Journal of Geometry}, 7:108--125, 2014.

\bibitem{berk09}
G.~Berk.
\newblock Minimality of totally geodesic submanifolds in {F}insler geometry.
\newblock {\em Math. Ann.}, 343:955--973, 2009.

\bibitem{bustamante21}
I.~Bustamante and M.~Reiris.
\newblock A classification theorem for compact {C}auchy horizons in vacuum
  spacetimes.
\newblock {\em Gen. Relativ. Gravit.}, 53:36, 2021.

\bibitem{caponio20}
E.~Caponio and A.~Masiello.
\newblock On the analyticity of static solutions of a field equation in
  {F}insler gravity.
\newblock {\em Universe 2020, 6, 59}, 6:59, 2020.

\bibitem{caponio18}
E.~Caponio and G.~Stancarone.
\newblock On {F}insler spacetimes with a timelike {K}illing vector field.
\newblock {\em Class. Quantum Grav.}, 35:085007, 2018.

\bibitem{chen08b}
B.~Chen and Y.B. Shen.
\newblock On a class of critical {R}iemann-{F}insler metrics.
\newblock {\em Publ. Math. Debrecen}, 72:451--468, 2008.

\bibitem{chrusciel20}
P.~T. Chru{\'s}ciel.
\newblock {\em Geometry of black holes}.
\newblock Oxford {U}niversity {P}ress, Oxford, 2020.

\bibitem{crampin11}
M.~Crampin.
\newblock On {L}andsberg spaces and the {L}andsberg-{B}erwald problem.
\newblock {\em Houston J. Math.}, 37:1103--1124, 2011.

\bibitem{dahl06}
M.~Dahl.
\newblock A brief introduction to {F}insler geometry.
\newblock Based on licentiate thesis, `Propagation of {G}aussian beams using
  {R}iemann-{F}insler geometry', Helsinki University of technology, 2006.

\bibitem{friedrich99}
H.~Friedrich, I.~R{\'a}cz, and R.~M. Wald.
\newblock On the rigidity theorem for spacetimes with a stationary event
  horizon or a compact {C}auchy horizon.
\newblock {\em Comm. Math. Phys.}, 204(3):691--707, 1999.

\bibitem{fuster16}
A.~Fuster and C.~Pabst.
\newblock Finsler pp-waves.
\newblock {\em Phys. Rev. D}, 94:104072, 2016.

\bibitem{fuster18}
A.~Fuster, C.~Pabst, and C.~Pfeifer.
\newblock Berwald spacetimes and very special relativity.
\newblock {\em Phys. Rev. D}, 98:084062, 2018.

\bibitem{galloway00}
G.~J. Galloway.
\newblock Maximum principles for null hypersurfaces and null splitting
  theorems.
\newblock {\em Ann. {H}enri {P}oincar\'e}, 1:543--567, 2000.

\bibitem{minguzzi21}
S.~Gurriaran and E.~Minguzzi.
\newblock Surface gravity of compact non-degenerate horizons under the dominant
  energy condition.
\newblock {\em Commun. Math. Phys.}, 395:679--713, 2022.
\newblock {arXiv:}2108.04056.

\bibitem{hohmann19}
M.~Hohmann, C.~Pfeifer, and N.~Voicu.
\newblock Finsler gravity action from variational completion.
\newblock {\em Phys. Rev. D}, 100:064035, 2019.

\bibitem{horvath50}
J.~I. Horv\'ath.
\newblock A geometrical model for the unified theory of physical fields.
\newblock {\em Phys. Rev.}, 80:901, 1950.

\bibitem{minguzzi24c}
R.~A. Hounnonkpe and E.~Minguzzi.
\newblock Horizon classification via {R}iemannian flows.
\newblock {\em Ann. Henri Poincar\'e}, 2024.
\newblock {arXiv:}2410.07231, DOI:10.1007/s00023-025-01570-2.

\bibitem{minguzzi24}
R.~A. Hounnonkpe and E.~Minguzzi.
\newblock Regularity and temperature of stationary black hole event horizons.
\newblock {\em Commun. Math. Phys.}, 406:228, 2025.
\newblock {arXiv:}2408.01252.

\bibitem{minguzzi25d}
R.~A. Hounnonkpe and E.~Minguzzi.
\newblock Compact {C}auchy horizons admit constant surface gravity.
\newblock {\em Commun. Math. Phys.}, 2026.
\newblock {arXiv:}2506.20004.

\bibitem{ikeda79}
S.~Ikeda.
\newblock On the theory of gravitational field in {F}insler spaces.
\newblock {\em Lett. Nuovo Cimento}, 26(9):277--281, 1979.

\bibitem{ikeda81}
S.~Ikeda.
\newblock On the conservation laws in the theory of fields in {F}insler spaces.
\newblock {\em J. Math. Phys.}, 22(6):1211--1214, 1981.

\bibitem{ingarden93}
R.~S. Ingarden and M.~Matsumoto.
\newblock On the 1953 {B}arthel connection of a {F}insler space and its
  mathematical and physical interpretation.
\newblock {\em Rep. Math. Phys.}, 32:35--48, 1993.

\bibitem{ishikawa80}
H.~Ishikawa.
\newblock Einstein equation in lifted {F}insler spaces.
\newblock {\em Il Nuovo Cimento}, 56:252--262, 1980.

\bibitem{ishikawa80b}
H.~Ishikawa.
\newblock Einstein tensor in scalar curvature {F}insler spaces.
\newblock {\em Ann. Physik (7)}, 37(2):151--154, 1980.

\bibitem{ishikawa81}
H.~Ishikawa.
\newblock Note on {F}inslerian relativity.
\newblock {\em J. Math. Phys.}, 22:995--1004, 1981.

\bibitem{javaloyes22b}
M.~{\'A}. Javaloyes and E.~Pend{\'a}s-Recondo.
\newblock Lightlike hypersurfaces and time-minimizing geodesics in cone
  structures.
\newblock In A.~L. et~al. Albujer, editor, {\em Developments in Lorentzian
  Geometry}, pages 159--173, Cham, 2022. {Springer-Verlag}.

\bibitem{javaloyes13}
M.~A. Javaloyes and M.~S\'anchez.
\newblock Finsler metrics and relativistic spacetimes.
\newblock {\em Int. J. Geom. Meth. Mod. Phys.}, 11:1460032, 2014.
\newblock Special Issue for the XXII IFWGP Evora 2-5 Sept.\ 2013.
  {arXiv}:1311.4770.

\bibitem{javaloyes21}
M.~A. Javaloyes, M.~S\'anchez, and F.F. Villase{\~{n}}or.
\newblock The {E}instein-{H}ilbert-{P}alatini formalism in pseudo-{F}insler
  geometry.
\newblock {\em Ad. Theor. Math. Phys.}, 26:3563--3631, 2022.
\newblock {arXiv:}2108.03197.

\bibitem{knebelman29}
M.~S. Knebelman.
\newblock Collineations and {M}otions in {G}eneralized {S}paces.
\newblock {\em Amer. J. Math.}, 51:527--564, 1929.

\bibitem{kupeli87}
D.~N. Kupeli.
\newblock On null submanifolds in spacetimes.
\newblock {\em Geom. Dedicata}, 23:33--51, 1987.

\bibitem{li15}
B.~Li and Z.~Shen.
\newblock Ricci curvature tensor and non-{R}iemannian quantities.
\newblock {\em Canad. Math. Bull.}, 58:530--537, 2015.

\bibitem{li10b}
X.~Li and Z.~Chang.
\newblock Towards a gravitation theory in {B}erwald-{F}insler space.
\newblock {\em Chinese Physics C}, 34:28--34, 2010.

\bibitem{li14}
X.~Li and Z.~Chang.
\newblock Exact solution of vacuum field equation in {F}insler spacetime.
\newblock {\em Phys. Rev. D}, 90:064049, 2014.
\newblock arXiv:1401.6363v1.

\bibitem{lu19}
Y.~Lu, E.~Minguzzi, and S.~Ohta.
\newblock Geometry of weighted {L}orentz-{F}insler manifolds {I}: {S}ingularity
  theorems.
\newblock {\em J. London Math. Soc.}, 104:362--393, 2021.
\newblock {arXiv:}1908.03832.

\bibitem{manjunatha23}
H.~M. Manjunatha, S.~K. Narasimhamurthy, and S.~K. Srivastava.
\newblock Finslerian analogue of the {S}chwarzschild-de {S}itter space-time.
\newblock {\em Pramana - J. Phys.}, 97:90, 2023.

\bibitem{marcal23}
P.\ Mar\c{c}al and Z.~Shen.
\newblock Ricci-flat {F}insler metrics by warped product.
\newblock {\em Proc. Am. Math. Soc.}, 151:2169--2183, 2023.

\bibitem{matsumoto85}
M.~Matsumoto.
\newblock The induced and intrinsic {F}insler connections of a hypersurface and
  {F}inslerien projective geometry.
\newblock {\em J. Math. Kyoto Univ. (JMKYAZ)}, 25:107--144, 1 1985.

\bibitem{matsumoto86}
M.~Matsumoto.
\newblock {\em Foundations of Finsler Geometry and special Finsler Spaces}.
\newblock Kaseisha Press, Tokio, 1986.

\bibitem{michor08}
P.~W. Michor.
\newblock {\em Topics in differential geometry}, volume~93 of {\em Graduate
  Studies in Mathematics}.
\newblock Am. Math. Soc., 2008.

\bibitem{minguzzi14c}
E.~Minguzzi.
\newblock The connections of pseudo-{F}insler spaces.
\newblock {\em Int. J. Geom. Meth. Mod. Phys.}, 11:1460025, 2014.
\newblock Erratum ibid 12 (2015) 1592001. {arXiv}:1405.0645.

\bibitem{minguzzi13d}
E.~Minguzzi.
\newblock Convex neighborhoods for {L}ipschitz connections and sprays.
\newblock {\em Monatsh. Math.}, 177:569--625, 2015.
\newblock {arXiv}:1308.6675.

\bibitem{minguzzi13c}
E.~Minguzzi.
\newblock Light cones in {F}insler spacetime.
\newblock {\em Commun. Math. Phys.}, 334:1529--1551, 2015.
\newblock {{arXiv}:}1403.7060.

\bibitem{minguzzi15}
E.~Minguzzi.
\newblock Raychaudhuri equation and singularity theorems in {F}insler
  spacetimes.
\newblock {\em Class. Quantum Grav.}, 32:185008, 2015.
\newblock {arXiv}:1502.02313.

\bibitem{minguzzi14h}
E.~Minguzzi.
\newblock An equivalence of {F}inslerian relativistic theories.
\newblock {\em Rep. Math. Phys.}, 77:45--55, 2016.
\newblock {arXiv}:1412.4228.

\bibitem{minguzzi15e}
E.~Minguzzi.
\newblock Affine sphere relativity.
\newblock {\em {C}ommun. {M}ath. {P}hys.}, 350:749--801, 2017.
\newblock {arXiv}:1702.06739.

\bibitem{minguzzi15d}
E.~Minguzzi.
\newblock A divergence theorem for pseudo-{F}insler spaces.
\newblock {\em Rep. Math. Phys.}, 80:307--315, 2017.
\newblock {arXiv}:1508.06053.

\bibitem{minguzzi17}
E.~Minguzzi.
\newblock Causality theory for closed cone structures with applications.
\newblock {\em Rev. Math. Phys.}, 31:1930001, 2019.
\newblock {arXiv}:1709.06494.

\bibitem{misner73}
C.~W. Misner, K.~S. Thorne, and J.~A. Wheeler.
\newblock {\em Gravitation}.
\newblock Freeman, San Francisco, 1973.

\bibitem{mo09}
X.~Mo.
\newblock On the non-{R}iemannian quantity {$H$} of a {F}insler metric.
\newblock {\em Differ. Geom. Appl.}, 27:7--14, 2009.

\bibitem{mo06}
Xiaohuan Mo.
\newblock {\em An introduction to Finsler geometry}.
\newblock Peking University, {S}eries in mathematics vol. 1. World
  {S}cientific, New Jersey, 2006.

\bibitem{modugno91}
M.~Modugno.
\newblock Torsion and {R}icci tensor for non-linear connections.
\newblock {\em Diff. Geom. Appl.}, 1:177--192, 1991.

\bibitem{moncrief83}
V.~Moncrief and J.~Isenberg.
\newblock Symmetries of cosmological {C}auchy horizons.
\newblock {\em Comm. Math. Phys.}, 89:387--413, 1983.

\bibitem{moncrief08}
V.~Moncrief and J.~Isenberg.
\newblock Symmetries of higher dimensional black holes.
\newblock {\em Class. Quantum Grav.}, 25:195015, 2008.

\bibitem{moncrief20}
V.~Moncrief and J.~Isenberg.
\newblock Symmetries of cosmological {C}auchy horizons with non-closed orbits.
\newblock {\em Comm. Math. Phys.}, 374:145--186, 2020.

\bibitem{ohta15}
S.~Ohta.
\newblock Splitting theorems for {F}insler manifolds of nonnegative {R}icci
  curvature.
\newblock {\em Reine Angew. Math.}, 700:155--174, 2015.

\bibitem{ohta21}
S.~Ohta.
\newblock {\em Comparison {F}insler geometry}.
\newblock Springer Nature, Cham, 2021.

\bibitem{petersen18b}
O.~L. Petersen.
\newblock Wave equations with initial data on compact {C}auchy horizons.
\newblock {\em Analysis \& PDE}, 14:2363--2408, 2021.
\newblock {arXiv:}1802.10057.

\bibitem{petersen18}
O.~L. Petersen and I.~R{\'a}cz.
\newblock Symmetries of vacuum spacetimes with a compact {C}auchy horizon of
  constant non-zero surface gravity.
\newblock {\em Ann. Henri Poincar\'e}, 24, 2023.
\newblock {arXiv:}1809.02580.

\bibitem{pfeifer12}
C.~Pfeifer and M.~N.~R. Wohlfarth.
\newblock Finsler geometric extension of {E}instein gravity.
\newblock {\em Phys. Rev. D}, 85:064009, 2012.

\bibitem{rademacher04}
H.~Rademacher.
\newblock {\em Nonreversible Finsler metrics of positive flag curvature},
  volume A sampler of Riemann-Finsler geometry, volume 50 of Math. Sci. Res.
  Inst. Publ., pages 261--302.
\newblock Cambridge Univ. Press, Cambridge, 2004.

\bibitem{reiris21}
M.~Reiris and I.~Bustamante.
\newblock On the existence of {K}illing fields in smooth spacetimes with a
  compact {C}auchy horizon.
\newblock {\em Class. Quantum Grav.}, 38:075010, 2021.

\bibitem{rund59}
H.~Rund.
\newblock {\em The differential geometry of {F}insler spaces}.
\newblock {Springer-Verlag}, Berlin, 1959.

\bibitem{rund62}
H.~Rund.
\newblock \"{U}ber {F}inslersche {R}aume mit speziellen
  {K}r\"ummungseigenschaften.
\newblock {\em Monatsh. Math.}, 66:241--251, 1962.

\bibitem{rutz93}
S.~F. Rutz.
\newblock A {F}insler generalisation of {E}instein's vacuum field equations.
\newblock {\em Gen. Relativ. Gravit.}, 25:1139--1158, 1993.

\bibitem{saveliev12}
N.~Saveliev.
\newblock {\em Lectures on the Topology of 3-Manifolds}.
\newblock De Gruyter, Berlin, 2012.

\bibitem{sevim23}
E.~S. Sevim, Z.~Shen, and S.~Ulgen.
\newblock On some {R}icci curvature tensors in {F}insler geometry.
\newblock {\em Mediterr. J. Math.}, 20:231, 2023.

\bibitem{shen01}
Z.~Shen.
\newblock {\em Lectures on {F}insler geometry}.
\newblock World {S}cientific, Singapore, 2001.

\bibitem{shibata78}
C.~Shibata.
\newblock On the curvature tensor {$R_{hijk}$} of {F}insler spaces of scalar
  curvature.
\newblock {\em Tensor N. S.}, 32:311--317, 1978.

\bibitem{szilasi14}
J.~Szilasi, R.~L. Lovas, and D.~Cs. Kertesz.
\newblock {\em Connections, sprays and Finsler structures}.
\newblock World {S}cientific, London, 2014.

\bibitem{takano74}
Y.~Takano.
\newblock Gravitational field in {F}insler spaces.
\newblock {\em Lettere al Nuovo Cimento}, 10:747--750, 1974.

\bibitem{takano88}
Y.~Takano.
\newblock {\em Introduction to the Theory of Fields in Finsler Spaces}, volume
  Gravitational Measurements, Fundamental Metrology and Constants, De Sabbata,
  V., Melnikov, V.N. (eds) of {\em NATO ASI Series, vol 230}, pages 459--466.
\newblock Springer, Dordrecht., 1988.

\bibitem{vattamany04}
Sz. Vattam{\'a}ny.
\newblock {\em On the projective geometry and metrizability of spray
  manifolds}.
\newblock PhD thesis, Debrecen University, 2004.

\end{thebibliography}

\end{document}